\DeclareRobustCommand{\bbinom}{\genfrac{[}{]}{0pt}{}}
\titleformat{\chapter}[display]
{\normalfont\huge\bfseries}{\chaptertitlename\ \thechapter}{20pt}{\Huge}
\titlespacing*{\chapter}{0pt}{-40pt}{30pt}
\theoremstyle{plain}
\newtheorem{thm}{Theorem}[section]
\newtheorem{lem}[thm]{Lemma}
\newtheorem{prop}[thm]{Proposition}
\newtheorem{cor}[thm]{Corollary}
\theoremstyle{definition}
\newtheorem{defn}[thm]{Definition}
\newtheorem{exmp}[thm]{Example}
\theoremstyle{remark}
\newtheorem*{note}{Note}
\definecolor{mycolor}{RGB}{104,36,109}
\definecolor{mycolor2}{RGB}{216, 172, 214}
\definecolor{mycolor3}{RGB}{230, 209, 229}
\definecolor{mycolor4}{RGB}{181, 233, 255}
\definecolor{headerSeahorse}{RGB}{183,184,226}
\definecolor{blockseahorse}{RGB}{214,214,238}
\definecolor{midseahorse}{RGB}{207,208,236}
\definecolor{mycolor5}{RGB}{254,232,255}
\definecolor{mycolor6}{RGB}{201,244,255}
\pgfplotsset{compat=1.15}
\begin{document}
\title{The MacWilliams Identity for Krawtchouk Association Schemes}
\author{Izzy Friedlander}
\date{\today}

\maketitle
\thispagestyle{plain}

\begin{abstract}

The MacWilliams Identity is a well established theorem relating the weight enumerator of a code to the weight enumerator of its dual. The ability to use a known weight enumerator to generate the weight enumerator of another through a simple transform proved highly effective and efficient. An equivalent relation was also developed by Delsarte \cite[Theorem 3]{DelsarteAlternating} which linked the eigenvalues of any association scheme to the eigenvalues of it's dual association scheme but this was less practical to use in reality. A functional transform was developed for some specific association schemes including those based on the rank metric, the skew rank metric and Hermitian matrices. In this paper those results are unified into a single consistent theory applied to these ``Krawtchouk association schemes" using a $b$-algebra. The derivatives formed using the $b$-algebra have also been applied to derive the moments of the weight distribution for any code within these association schemes.
\end{abstract}

\textbf{Keywords:} MacWilliams identity; weight distribution; association schemes; Krawtchouk polynomials

\textbf{MSC 2020 Classification:}  94B05, 15B33, 15B57
\section{Introduction}

The idea of association schemes began to be developed through the 1940's. Early applications of this theory were in statistics and combinatorics as well as pure algebra, so this offered a neat way of cross fertilisation of methods in different areas of research. One benefit of symmetric association schemes was the ability to draw a graph that represented the points (vertices) and relations (edges) visually to explore their properties. It turns out there is a one to one relation between these graphs known as Distance Regular Graphs \cite{BrouwerDRG} and symmetric association schemes. Delsarte identified the potential for association schemes to advance the theory of codes \cite{delsarte1973algebraic} which is our primary interest here. 

The most well known association scheme is the Hamming scheme, based on vectors of length $n$ with the Hamming metric. This is the association scheme for which the MacWilliams Identity was first developed by MacWilliams herself \cite{TheoryofError}. The MacWilliams Identity takes the weight enumerator of a code written as a homogeneous polynomial and transforms it into that of another code (the dual code) through simple polynomial algebra. Using this identity a useful metric of the statistics of the code was found using basic calculus, the binomial moments of the weight enumerator. 

As the theory of codes advanced, new metrics were explored and, in turn, new association schemes. Delsarte pursued this line of thinking and applied the theory of the MacWilliams Identity to association schemes through the structure of the adjacency matrices of the association schemes themselves and their eigenvalues. This left the door open to explore ways in which the MacWilliams Identity as a functional transform could be extended to other schemes. 

In 2008 Gadouleau and Yan \cite{gadouleau2008macwilliams} successfully achieved this and presented a MacWilliams Identity as a functional transform of two homogeneous polynomials for codes based on the rank metric. This was exactly the same structure as that for the Hamming MacWilliams Identity, but the underlying algebra is significantly different. This proof applied character theory and the Hadamard transform to features of the matrices themselves. Once they had this theorem, they also derived the moments of the weight distribution of a code using the calculus of their $q$-algebra as well as investigating properties of maximal rank distance codes, again looking tantalisingly similar to those for the Hamming association scheme. 

In 2023, Friedlander et al. \cite{friedlander2023macwilliams} produced another functional transform MacWilliams Identity but now applied to the skew rank association scheme. Once more this comprised of two homogeneous polynomials and an appropriate underlying skew-$q$-algebra. The proof of this MacWilliams Identity was, however, different to the one by Gadouleau and Yan in that here the homogeneous polynomials used in the MacWilliams Identity were shown to generate the eigenvalues of the association scheme, so Delsarte's theory could then be invoked. 

This method was also then extended to the association scheme based on Hermitian matrices presented in \cite[Chapter 4]{IzzyThesis}. The nature of this scheme depended on parameters outside the range of validity specified by Delsarte for much of the previously used theory. So instead the results developed by Schmidt \cite{KaiHermitian} were heavily used to produce an equivalent MacWilliams Identity and resulting moments. 

Having looked at all four settings, a clear pattern emerged, but no obvious way of generalising them. The similarities between these four sets of results inspired the search for a unified theory that could be applied to these association schemes in general. Common elements included a specific homogeneous polynomial, a corresponding algebra and a corresponding recurrence relation that could be used to identify the eigenvalues. This analysis has led to a consolidated theory presented in this paper. Specific parameters are identified that recreate the individual results in the four cases previously studied. 

The rest of this paper is structured as follows: Section \ref{section:associationschemes} introduces association schemes and their parallel distance regular graphs. It describes the Bose-Mesner algebra of adjacency matrices and their eigenvalues, along with necessary definitions and properties and some important identities. Section \ref{section:generalPreliminaries} defines the Krawtchouk association schemes and the $b$-Krawtchouk polynomials which are then proven to be the eigenvalues of these schemes. Also included are the definitions of some required general weight functions. In Section \ref{section:thebalgebra} two integral components of the theory needed to produce a MacWilliams Identity, the $b$-algebra and the fundamental polynomials which together are used to generate the eigenvalues already mentioned, are introduced. The new generalised MacWilliams Identity as a functional transform is then presented and proved in Section \ref{section:generalisedMacWilliams}. Section \ref{section:bderivatives} follows an analogous approach to \cite[Section 5]{friedlander2023macwilliams} and \cite[Section 3]{gadouleau2008macwilliams} for finding the $b$-derivatives of this new algebra. The $b$-derivatives are then used in Section \ref{section:bmoments} to identify moments of the weight distribution of codes based on Krawtchouk association schemes. Finally Section \ref{section:individual} briefly reviews the parameters and results for each of the four individual association schemes and includes a clarification of the recurrence relation used in the Hermitian case. 

Having been able to extend the theory originally developed for the skew rank association scheme to Krawtchouk association schemes in general, the question remains; is it possible to extend this theory further to broader range of association schemes? For example, could the Eberlein polynomials \cite[Section 5.2]{delsartereccurance} be used to bring the Johnson and Grassmann schemes under this umbrella? Or could the work by Schmidt \cite{SymmetricKaiforms} be the key to encompass the Quadratic and Symmetric association schemes? 
\section{Association Schemes and Distance Regular Graphs}\label{section:associationschemes}

Association schemes give a particular structure to a set and that structure has been found to be useful when investigating properties of linear codes. Here we introduce the basic properties of an association scheme, focusing in particular on metric association schemes. We also identify their relationship with distance regular graphs which offers further understanding of these abstract concepts by visualisation. 

\subsection{Preliminaries}\label{subsection:preliminaries}
\begin{defn}\label{defn:associationscheme}
    A \textbf{\textit{symmetric association scheme with $\boldsymbol{n}$ classes}}, $\left(\mathscr{X},R\right)$, is defined as a finite set $\mathscr{X}$ of $v$ points and $n+1$ relations $R=\{R_0,\ldots, R_n\}$, which satisfy the following conditions
    \begin{align}
        R_0 & 
            = \left\{(x,x)~\vert~x\in\mathscr{X}\right\}
            \label{equation:relationsidentity}
            \\ 
        (x,y)\in R_i & \implies (y,x)\in R_i
            \label{equation:relationssymmetric}
            \\ 
        \left\{ R_0, R_1, \ldots, R_n\right\} 
            & \text{ is a partition of } \mathscr{X}\times\mathscr{X} 
            \label{equation:relationspartition}
            \\
            (x,y)\in R_k & \implies \left\vert \left\{ z\in\mathscr{X}~\vert~(x,z)\in R_i, (y,z)\in R_j\right\}\right\vert = c_{ijk} \label{equation:relationsintersectionnumbers}
    \end{align}
    where $c_{ijk}$ is a constant,  independent of the choice of $x$ and $y$ and are called the \textit{\textbf{intersection numbers}}. In other words, the relations satisfy having an identity, are symmetric, form a partition and have intersection numbers.
\end{defn}
We note that many different texts use the notation $p_{i,j}^{(k)}$ instead of $c_{ijk}$, as in \cite[(2.1)]{delsarte1973algebraic}.
\begin{defn}
    If $(x,y)\in R_i$ we call $x$ and $y$ the  \textbf{\textit{$\boldsymbol{i^{th}}$ associates}}. 
    
    The \textbf{\textit{valency}} \cite[p43]{BrouwerDRG} of each relation is defined as $v_i=c_{ii0}$, which is the number of $z\in\mathscr{X}$ such that $(x,z)\in R_i$. It is immediately obvious that $\sum_{i} c_{ii0}=v$.
\end{defn}
We note that there are non-symmetric association schemes but we are only focusing on those which are ``symmetric" in this paper. 

There are some well known identities for the valencies and the intersection numbers that can be useful when using the theory of association schemes \cite[Lemma 2.1.1]{BrouwerDRG}. The most interesting identity to note is that 
\begin{equation}
    \sum_{j=0}^{n} c_{ijk} = v_i.
\end{equation}
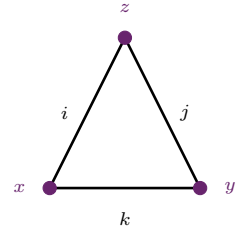
\begin{wrapfigure}{r}{0.4\textwidth}
    \centering
    \begin{tikzpicture}[line cap=round,line join=round,>=triangle 45,x=2cm,y=2cm]
\clip(0.2,0.2) rectangle (1.8,1.8);
\draw [line width=1pt] (0.5,0.5)-- (1,1.5);
\draw [line width=1pt] (1,1.5)-- (1.5,0.5);
\draw [line width=1pt] (1.5,0.5)-- (0.5,0.5);
\begin{scriptsize}
\draw [fill=mycolor,color=mycolor] (1,1.5) circle (2.5pt);
\draw[color=mycolor] (1,1.7) node {$z$};
\draw [fill=mycolor,color=mycolor] (0.5,0.5) circle (2.5pt);
\draw[color=mycolor] (0.3,0.5) node {$x$};
\draw [fill=mycolor,color=mycolor] (1.5,0.5) circle (2.5pt);
\draw[color=mycolor] (1.7,0.5) node {$y$};
\draw[color=black] (0.6,1) node {$i$};
\draw[color=black] (1.4,1) node {$j$};
\draw[color=black] (1,0.3) node {$k$};
\end{scriptsize}
\end{tikzpicture}
    \caption{Visualisation of points and relations in an association scheme.}
    \label{fig:triangle}    
\end{wrapfigure}
This identity can be explained in a bit more detail. Using Figure \ref{fig:triangle}, let $(x,y)\in R_k$. Then we see that
\begin{align}
    v_i & = \vert\{ z : (x,z)\in R_i\}|
    \\
        & = \sum_{j}\vert\{ z : (x,z)\in R_i, (z,y)\in R_j\}|
    \\
        & = \sum_{j} c_{ijk}. 
\end{align}

We go on to define the set of adjacency matrices that can be used to record and analyse the properties of the association scheme. 
\begin{defn}[{\cite[p613]{TheoryofError}}]
    The \textbf{\textit{adjacency matrix}}, ${D}_i$, of $R_i$ is defined to be a $v\times v$ matrix where each row and each column represents a point in $\mathscr{X}$ and where
    \begin{equation}\label{equation:adjacencymatrixcondition}
        \left( D_i \right)_{x,y} =
            \begin{cases}
                1 & \text{if } (x,y)\in R_i, \\
                0 & \text{otherwise.}
            \end{cases}
    \end{equation}
\end{defn}
\begin{lem}
Using the properties of an $(\mathscr{X},R)$ symmetric association scheme \cite[Lemma 2.1.1]{BrouwerDRG} we have that, 
\begin{align}
    D_0 & = I \label{equation:adjacencyidentity}
    \\
    D_i^{T} & = D_i
    \\
    \sum_{i=0}^{n}D_i & = 
    \begin{pmatrix}
        1 & \ldots & 1 \\
        \vdots & \ddots & \vdots \\
        1 & \ldots & 1
    \end{pmatrix} = J \label{equation:associtaionpartitions}
    \\
    D_iD_j & = \sum_{k=0}^n c_{ijk}D_k = D_jD_i, ~\text{for } i,j=0,\ldots,n. \label{equation:nastyadjacancyproperty}
    \\
    D_iJ & =JD_i = v_iJ \label{equation:assocationvi1s}
\end{align}
Conversely, any set of $\{0,1\}$ matrices, $D_0,\ldots,D_n$, that satisfies \eqref{equation:adjacencyidentity} - \eqref{equation:assocationvi1s} is the collection of adjacency matrices of an association scheme.
\end{lem}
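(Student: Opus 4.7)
The plan is to verify each of the five matrix identities directly from the definitions, translating the set-theoretic properties of the relations $R_i$ into entry-wise statements about the $D_i$, and then separately address the converse by running the same equivalences in reverse.

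For \eqref{equation:adjacencyidentity}, by \eqref{equation:relationsidentity} the pair $(x,y)$ lies in $R_0$ if and only if $x=y$, so $(D_0)_{x,y}=\delta_{x,y}$ and hence $D_0=I$. Symmetry $D_i^T=D_i$ is immediate from \eqref{equation:relationssymmetric} combined with $(D_i^T)_{x,y}=(D_i)_{y,x}$. The sum identity \eqref{equation:associtaionpartitions} follows from \eqref{equation:relationspartition}: for every $(x,y)\in\mathscr{X}\times\mathscr{X}$ there is exactly one $i$ with $(x,y)\in R_i$, so $\sum_i (D_i)_{x,y}=1=(J)_{x,y}$. For \eqref{equation:assocationvi1s}, note that by definition of the valency each row of $D_i$ contains exactly $v_i$ ones, so $(D_i J)_{x,y}=v_i$ for all $x,y$; the equality $JD_i=v_iJ$ then follows by transposing and applying $D_i^T=D_i$.

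The substantive step is \eqref{equation:nastyadjacancyproperty}. I would expand
\[
(D_i D_j)_{x,y}=\sum_{z\in\mathscr{X}}(D_i)_{x,z}(D_j)_{z,y},
\]
which counts exactly the number of $z$ such that $(x,z)\in R_i$ and $(z,y)\in R_j$; by \eqref{equation:relationssymmetric} the latter is $(y,z)\in R_j$. If $(x,y)\in R_k$, then \eqref{equation:relationsintersectionnumbers} identifies this count as $c_{ijk}$, independently of the choice of $x,y$. Since \eqref{equation:relationspartition} partitions $\mathscr{X}\times\mathscr{X}$ into the $R_k$, we can write $(D_i D_j)_{x,y}=\sum_k c_{ijk}(D_k)_{x,y}$, yielding the matrix identity. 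Commutativity $D_iD_j=D_jD_i$ then follows by swapping the roles of $i$ and $j$ in the counting argument, together with the small observation that $c_{ijk}=c_{jik}$ which is itself a consequence of the symmetry of the relations. I expect this step to be the main obstacle, since it is where the intersection numbers enter and where a careful interchange of summation order is required.

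For the converse, the strategy is to define $R_i:=\{(x,y)\in\mathscr{X}\times\mathscr{X}:(D_i)_{x,y}=1\}$ and verify the four defining conditions of Definition \ref{defn:associationscheme}. Condition \eqref{equation:relationsidentity} is read off from $D_0=I$, and \eqref{equation:relationssymmetric} from $D_i^T=D_i$. The partition property \eqref{equation:relationspartition} uses that the $D_i$ are $\{0,1\}$-valued and sum to $J$, so every pair lies in exactly one $R_i$. The delicate part is recovering \eqref{equation:relationsintersectionnumbers}: the same entry-wise computation as above shows that if $(x,y)\in R_k$, then $|\{z:(x,z)\in R_i,\ (z,y)\in R_j\}|=(D_iD_j)_{x,y}$, and the hypothesis $D_iD_j=\sum_k c_{ijk}D_k$ forces this count to equal $c_{ijk}$, which is independent of the particular $x,y$ chosen. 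This closes the equivalence.
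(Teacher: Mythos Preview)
Your proposal is correct and follows essentially the same route as the paper: each of the five identities is read off directly from the corresponding axiom of a symmetric association scheme, with the product identity handled by computing $(D_iD_j)_{x,y}$ as a count of intermediate points $z$ and invoking the intersection numbers. Your treatment is in fact slightly more thorough than the paper's, since you spell out the entry-wise expansion of $D_iD_j$, justify commutativity via $c_{ijk}=c_{jik}$, and actually sketch the converse direction, which the paper merely asserts without proof.
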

\begin{proof}
    \begin{enumerate}
        \item[(1)] $D_0=I$ holds immediately from \eqref{equation:relationsidentity}.
        \item[(2)] $D_i^T=D_i$ holds immediately from \eqref{equation:relationssymmetric}. 
        \item[(3)] $\displaystyle\sum_{i=0}^{n}D_i=J$ holds immediately from \eqref{equation:relationspartition}. 
        \item [(4)] Proving Equation \eqref{equation:nastyadjacancyproperty}, if $(x,y)\in R_k$, then the matrix $(D_iD_j)_{x,y}=c_{ijk}$ by \eqref{equation:relationsintersectionnumbers}, and by \eqref{equation:adjacencymatrixcondition}, $(D_k)_{x,y}=1$. So we have \begin{equation}
            (D_iD_j)_{x,y} = c_{ijk}(D_k)_{x,y}.
        \end{equation}
        Now since $(D_iD_j)_{x,y}$ takes the intersection number of $(x,y)\in R_k$, we need to include all values of $k$, therefore
        \begin{equation}
            D_iD_j = \sum_{k=0}^{n}c_{ijk}D_k = D_jD_i
        \end{equation}
        as required.
        \item[(5)] Now we prove Equation \eqref{equation:assocationvi1s}. By definition of $D_i$, every row and column contains $v_i$ $1$'s. Therefore when we multiply by $J$, we sum these $1$'s, which is $v_i$ in every case by \eqref{equation:relationsintersectionnumbers}.
    \end{enumerate}
\end{proof}

\subsection{The Bose-Mesner Algebra}\label{section:bosemesner}

The eigenvalues of these adjacency matrices play in an important part in looking for optimal codes. That is, those with maximal distance for a given size. The algebra of these matrices was first explored by Bose and Nair \cite{BoseNair} in 1939 and later developed by Bose and Mesner \cite{BoseMesner} after whom it was named. The important results are outlined here.

To work with these adjacency matrices we define a set which consists of all complex linear combinations of the adjacency matrices. That is,
    \begin{equation}
        \mathscr{B} = \left\{ B = \sum_{i=0}^n b_i D_i ~\vert~ b_i \in \mathbb{C}\right\}.
    \end{equation}
This set then forms a ring with the operations of matrix addition and matrix multiplication with the added property of the multiplication being commutative by \eqref{equation:nastyadjacancyproperty}. As this set is a ring and also a vector space, it forms an algebra and is called the \textbf{\textit{Bose-Mesner Algebra}} of the association scheme.

Since $\mathscr{B}$ is commutative the members of $\mathscr{B}$ can be simultaneously diagonalised \cite{gantmacher2005}, i.e. there exists a single invertible matrix $P\in\mathscr{B}$, such that $P^{-1}BP$ is a diagonal matrix for each $B\in\mathscr{B}$. As a consequence there exists a unique alternative basis consisting of primitive idempodent matrices $E_0,\ldots,E_n$, of size $v\times v$. A primitive idempotent, $E_i$, is an idempotent such that it cannot be written as a direct sum of two other non-zero idempotents. So these idempotent matrices in the alternative basis satisfy the following equations,
\begin{align}
    E_i^2 & =E_i~\forall~{i} \label{equation:associationidempotent}\\
    E_iE_j & = 0 ~ \text{if } i \neq j \label{equation:associationij=0}\\
    \sum_{i=0}^n E_i & = I. 
\end{align}
It is conventional to choose $E_0=\frac{1}{v}J$. 

Since $\{E_0,\ldots,E_n\}$ is a basis for $\mathscr{B}$ there exist uniquely defined complex numbers $p_k(i)$ such that
\begin{equation}
    D_k = \sum_{i=0}^n p_k(i)E_i,\quad k=0,\ldots,n. \label{equation:associationDKsum}
\end{equation}

We also have, 
\begin{align}
    D_kE_i 
        & = \sum_{j=0}^{n} p_k(j) E_jE_i 
        \\
        \overset{\eqref{equation:associationij=0}}&{=} p_k(i) E_iE_i 
        \\
        \overset{\eqref{equation:associationidempotent}}&{=} p_k(i) E_i.\label{equation:eigenvaluesk=0=1}
\end{align}

Thus the $p_k(i)$'s are the \textbf{\textit{eigenvalues}} of $D_k$ by definition. The rank of each matrix $E_i$, denoted $\psi_i$, is the multiplicity of each eigenvalue $p_k(i)$ \cite[p45]{BrouwerDRG}. 

Since the $D_i$'s also form a basis of $\mathscr{B}$ we can also express each $E_k$ as a linear combination of the $D_i$. We then define
\begin{equation}
    E_k = \frac{1}{v} \sum_{i=0}^{n} q_k(i) D_i, \quad k=0,\ldots n
\end{equation}
such that $q_k(i)$ represent the coefficients of the change of basis matrix from $D_i$'s to the $E_i$'s. 

We then call the $q_k(i)$'s the \textbf{\textit{dual eigenvalues}} \cite{KreinParameters} of the association scheme. We also define the \textbf{\textit{eigenmatrices}} of the association scheme, $P=(p_{ik})$ and $Q=(q_{ik})$, to be the $(n+1)\times(n+1)$ matrices consisting of the eigenvalues $p_k(i)$ and $q_k(i)$ respectively. The eigenmatrices $P$ and $Q$ have the following properties.
\begin{thm}
    For $P$ and $Q$ eigenmatrices of a symmetric association scheme, we have,
    \begin{align}
        p_0(i) 
            & = q_0(i) = 1,
        \\
        p_k(0) 
            &= v_k, \quad q_k(0)=\psi_k 
        \\
        \sum_{i=0}^{n} \psi_kp_k(i)p_\ell(i) 
        & = v v_k\delta_{k\ell}\label{equation:associationorthog1}
        \\
        \sum_{i=0}^nv_iq_k(i)q_{\ell}(i) 
            & = v\psi_k\delta_{k\ell}\label{equation:associationorthog2}
        \\
        \psi_j p_i(j) 
            & = v_iq_j(i), \quad i,j=0,\ldots,n 
        \\
        \vert p_k(i) \vert 
            & \leq v_k, \quad \vert q_k(i) \vert \leq \psi_k 
            \\
        \sum_{j=0}^{n} p_k(i) & = \sum_{i=0}^n c_{iki}. 
    \end{align}
\end{thm}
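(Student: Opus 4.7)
The plan is to work through the seven claims in order, relying on three core tools: direct coefficient comparison in the two mutually inverse change-of-basis formulas, a trace-pairing computation, and the Perron–Frobenius bound for non-negative matrices. The first two identities follow from uniqueness of coefficients. Since $D_0 = I = \sum_i E_i$, comparing with $D_0 = \sum_i p_0(i)E_i$ forces $p_0(i) = 1$ for every $i$; the convention $E_0 = \tfrac{1}{v}J = \tfrac{1}{v}\sum_i D_i$ similarly gives $q_0(i) = 1$. For $p_k(0) = v_k$, I would apply both sides of \eqref{equation:associationDKsum} to $E_0$: the left side becomes $\tfrac{1}{v}D_k J = v_k E_0$ by \eqref{equation:assocationvi1s}, while the right collapses to $p_k(0)E_0$ via \eqref{equation:associationidempotent} and \eqref{equation:associationij=0}. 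For $q_k(0) = \psi_k$, take the trace of $E_k = \tfrac{1}{v}\sum_i q_k(i)D_i$: the left equals $\psi_k$ (trace of a symmetric idempotent equals its rank), while on the right only $D_0 = I$ has non-zero trace, leaving $q_k(0)$.

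The central step is the $P$–$Q$ inversion. Substituting $E_j = \tfrac{1}{v}\sum_i q_j(i)D_i$ into $D_k = \sum_j p_k(j)E_j$ and equating coefficients in the basis $\{D_m\}$ gives $\sum_j p_k(j)q_j(m) = v\delta_{km}$, and the reverse substitution yields $\sum_j q_k(j)p_j(m) = v\delta_{km}$. To obtain the bridge identity $\psi_j p_i(j) = v_i q_j(i)$, I would compute $\operatorname{tr}(D_i E_j)$ in two ways: using \eqref{equation:eigenvaluesk=0=1} to write $D_i E_j = p_i(j)E_j$ gives $p_i(j)\psi_j$, while expanding $E_j$ and invoking $\operatorname{tr}(D_iD_k) = v v_i\delta_{ik}$ gives $v_i q_j(i)$. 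The trace formula here follows from $D_iD_k = \sum_\ell c_{ik\ell}D_\ell$, since only $D_0 = I$ has non-zero trace and $c_{ik0} = v_i\delta_{ik}$ (the only way to have $(x,z)\in R_i$ and $(y,z)\in R_k$ with $x=y$ is $i=k$, with $v_i$ choices of $z$).

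With these in hand the remaining identities follow algebraically. For the first orthogonality relation,
\begin{equation*}
    \sum_i \psi_i p_k(i)p_\ell(i) = \sum_i v_k q_i(k) p_\ell(i) = v v_k \delta_{k\ell},
\end{equation*}
where the first equality uses the bridge identity and the second uses the $P$–$Q$ inversion; the $Q$-orthogonality is symmetric. For the magnitude bounds, each $D_k$ is a non-negative matrix with constant row sums $v_k$, so the Perron–Frobenius theorem gives spectral radius exactly $v_k$ and hence $|p_k(i)| \leq v_k$; the bound $|q_k(i)| \leq \psi_k$ then drops out of the bridge identity via $|q_k(i)| = \psi_k |p_i(k)|/v_i \leq \psi_k$. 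The final summation identity I would approach by recognising $c_{iki}$ as the coefficient of $D_i$ in the expansion $D_iD_k = \sum_\ell c_{ik\ell}D_\ell$ and translating the resulting sum back through \eqref{equation:associationDKsum}.

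The main obstacle is setting up the trace-pairing cleanly and keeping the $P$ and $Q$ index conventions straight; once the inversion $\sum_j p_k(j)q_j(m) = v\delta_{km}$ and the bridge $\psi_j p_i(j) = v_i q_j(i)$ are established, every remaining claim reduces to routine substitution.
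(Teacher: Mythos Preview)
Your proposal is correct and follows the standard route (coefficient comparison in the two change-of-basis formulas, trace pairing for the bridge identity, Perron--Frobenius for the magnitude bounds). The paper, however, does not actually prove this theorem: it simply cites \cite[Lemma 2.2.1]{BrouwerDRG} and \cite[Theorem 3]{TheoryofError} for all seven claims, adding only the one-line remark that $p_0(i)=1$ follows from $D_0=I$ in \eqref{equation:eigenvaluesk=0=1}---exactly your first argument. Your sketch is therefore considerably more detailed than the paper's own treatment, and aligns with the cited sources.
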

    The proofs for these properties are well known and can be found at \cite[Lemma 2.2.1]{BrouwerDRG}. The equations \eqref{equation:associationorthog1} and \eqref{equation:associationorthog2} are called the \textbf{\textit{orthogonality relations}} \cite[Theorem 3]{TheoryofError}. For emphasis, $p_0(i)=1$ since by Equation \eqref{equation:eigenvaluesk=0=1} and we have $D_0=I$, immediately, $p_0(i)=1$.

We briefly introduce the idea of a \textit{\textbf{formal dual association scheme}}. Delsarte \cite[Section II C]{InfoTheoryDelsarte} proves that under some assumptions (explained below) on an $(\mathscr{X},R)$ $n$-class association scheme, you can find the dual association scheme which is an $(\mathscr{X}',R')$ $n$-class association scheme derived from the valencies, $v_i$, multiplicities, $\psi_i$ and eigenmatrices, $P$ and $Q$ of the original scheme by
\begin{equation}
    v'_i = \psi_i \quad \psi'_i = v_i \quad P' = Q \quad Q' = P.
\end{equation}
The ability to find a dual is, however, dependent on the scheme being ``regular'' as defined in \cite[Section 2.6.1]{delsarte1973algebraic}. We do not discuss regularity in more detail here as it is beyond the scope of this paper. In this paper Delsarte's condition of regularity is met and in fact, since $\mathscr{X}$ is always a vector space, the association schemes we consider here are all translation schemes. That is, if for all $z\in\mathscr{X}$, $i=0,\ldots,n$ we have $(x,y)\in R_i \implies (x+z,y+z)\in R_i$, then it is regular and we can find the scheme's dual.  Futhermore we only consider association schemes which are \textit{\textbf{formally self dual}}, i.e. when $P=Q$. In fact, the association schemes studied in this paper are all metric translation schemes. So when we discuss a code, $\mathscr{C}$ and its dual, $\mathscr{C}^\perp$ in an $(\mathscr{X},R)$ $n$-class association scheme, we have $\mathscr{C},\mathscr{C}^\perp\in\mathscr{X}$.

\subsection{Metric Association Schemes and Distance Regular Graphs}\label{subsection:metricassociationschemes}

In this paper we only consider association schemes that have a distance metric, which confers on the scheme an ordering of the relations. 

\begin{defn}\label{defn:metricscheme}
    For a $(\mathscr{X},R)$ symmetric $n$-class association scheme, we define a function $d:\mathscr{X}\times \mathscr{X}\rightarrow\mathbb{R}$ with $d(x,y)=i$ whenever $(x,y)\in R_i$ and the function satisfies the following conditions,
    \begin{align}
        d(x,y) & \geq 0
        \\
        d(x,y) & = 0  \iff x=y
        \\
        d(x,y) & = d(y,x).
    \end{align}
    We say the association scheme is \textbf{\textit{metric}} if $d$ is a metric on $\mathscr{X}$, i.e. for all $x,y,z\in \mathscr{X}$ we have
    \begin{equation}
        d(x,y)+d(y,z)\geq d(x,z).
    \end{equation}
    This is equivalent to for $(x,y)\in R_i$, $(y,z)\in R_j$, $(x,z)\in R_k$ we have that for $c_{ijk}\neq 0$, then $k\leq i+j$.
\end{defn}
If an association scheme is metric, then we can relate a graph $G$ with it by setting the edges to be $E=\{(x,y)\in R_1\}$ \cite[Chapter 1]{BrouwerDRG}. 


This graph is known as a distance regular graph and is defined below. For any $n$-class metric association scheme there is an associated distance regular graph, and vice versa, for every distance regular graph there is an associated $n$-class metric association scheme. 

\begin{defn}[{\cite[Chapter 1]{BrouwerDRG}}]
    A \textbf{\textit{distance regular graph}} is a graph $G=(V,E)$ in which for any two vertices $x,y\in V$, the number of vertices at distance $i$ from $x$ and $j$ from $y$ depend only on $i$ and $j$, and the distance between $x$ and $y$.
\end{defn}
Briefly, if we have a distance regular graph, then we define the points of the association scheme to be the vertices and $(x,y)\in R_1$ if there is an edge between $x$ and $y$. From that, we can say that $(x,z)\in R_i$ if the shortest path between $x$ and $z$ is length $i$.  

We now define a $P$-polynomial scheme and conclude that a metric association scheme is a $P$-polynomial scheme.
\begin{defn}[{\cite[p660]{TheoryofError}}]
    An association scheme is called a \textbf{\textit{P-polynomial}} scheme if there exists non-negative real numbers $z_0,\ldots,z_n$, with $z_0=0$ and real polynomials
    $\Phi_0(z),\ldots,\Phi_k(z)$, where  the degree of $\Phi_k(z)$, is $k$ such that
    \begin{equation}
        p_k(i)=\Phi_k(z_i), \quad i,k=0,\ldots,n.
    \end{equation}
\end{defn}
\begin{thm}[{\cite[Theorem 5.6, Theorem 5.16]{delsarte1973algebraic}}] An association scheme is metric if and only if it is a $P$-polynomial scheme, so the eigenvalues of the association scheme, $p_k(i)$, are indeed polynomials.
\end{thm}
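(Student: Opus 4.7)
The plan is to prove both directions of the equivalence, after which the conclusion about the eigenvalues being polynomials follows immediately from the definition of a $P$-polynomial scheme.

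For the forward direction, I would assume the scheme is metric and work through the Bose-Mesner algebra. The metric property in the form given in Definition \ref{defn:metricscheme} tells us that $c_{ijk}\neq 0$ implies $k\leq i+j$. Specialising to $i=1$ gives $c_{1jk}=0$ unless $k\leq j+1$, and by symmetry of the intersection numbers together with the same triangle bound applied to the reversed triple, one obtains $c_{1jk}=0$ unless $|k-j|\leq 1$. A short separate observation is needed that $c_{1,j,j+1}\neq 0$ for $j<n$, which follows from the fact that distances $0,1,\ldots,n$ are all realised. Then \eqref{equation:nastyadjacancyproperty} yields the three-term recurrence
\begin{equation}
D_1 D_j = c_{1,j,j-1}D_{j-1} + c_{1,j,j}D_j + c_{1,j,j+1}D_{j+1},
\end{equation}
which can be solved for $D_{j+1}$ to show by induction that $D_k$ is a polynomial of degree exactly $k$ in $D_1$. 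Applying \eqref{equation:associationDKsum} and the fact that evaluation on eigenspaces is a ring homomorphism, the same polynomial identity passes to the eigenvalues, so $p_k(i)$ is a polynomial of degree $k$ in $p_1(i)$. Setting $z_i:=v_1 - p_1(i)$ (so that $z_0 = v_1 - p_1(0)=0$) and $\Phi_k(z)$ the resulting polynomial in $z$ yields the $P$-polynomial structure with the correct normalisation $z_0=0$.

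For the reverse direction, assume $p_k(i)=\Phi_k(z_i)$ with $\deg\Phi_k=k$. Since $\{\Phi_0,\ldots,\Phi_n\}$ is then a basis for polynomials of degree at most $n$, each monomial $z^j$ is a linear combination of $\Phi_0,\ldots,\Phi_j$, and hence, translating through \eqref{equation:associationDKsum}, each $D_1^{\,j}$ is a linear combination of $D_0,D_1,\ldots,D_j$. Now if $(x,y)\in R_k$, the entry $(D_1^{\,k-1})_{x,y}$ vanishes because $k-1 < k$, so there is no walk of length less than $k$ from $x$ to $y$ in the graph with edge set $R_1$; conversely there is a walk of length $k$ by choosing any non-zero coefficient of $D_k$ in $D_1^{\,k}$. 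Thus the index $k$ coincides with the graph distance in $G$, and the triangle inequality for graph distance gives exactly the condition $c_{ijk}\neq 0\implies k\leq i+j$, so the scheme is metric.

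The step I expect to be the main obstacle is the forward direction's non-vanishing claim $c_{1,j,j+1}\neq 0$ for $j<n$, which is geometrically obvious (otherwise the graph would not reach distance $j+1$) but needs a clean algebraic justification using only the axioms \eqref{equation:relationsidentity}--\eqref{equation:relationsintersectionnumbers} together with the existence of relations up to index $n$; without it, the three-term recurrence would degenerate and the induction that $D_k$ has degree exactly $k$ in $D_1$ would fail. Everything else is bookkeeping inside the Bose-Mesner algebra, using the already-established identities \eqref{equation:nastyadjacancyproperty} and \eqref{equation:associationDKsum}.
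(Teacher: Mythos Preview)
The paper does not actually prove this theorem: the sentence immediately following it reads ``Although not proven here, there are multiple ways of proving this statement'' and defers to Delsarte, MacWilliams--Sloane, and Brouwer--Cohen--Neumaier. There is therefore no in-paper argument to compare against, and your outline is precisely the classical one found in those sources, driven by the three-term relation $D_1 D_j = c_{1,j,j-1}D_{j-1}+c_{1,j,j}D_j+c_{1,j,j+1}D_{j+1}$ obtained from \eqref{equation:nastyadjacancyproperty}.

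The obstacle you flag is real and cannot be removed from the hypotheses as the paper states them. Definition~\ref{defn:metricscheme} imposes only the triangle inequality $c_{ijk}\neq 0\Rightarrow k\le i+j$, and under that alone $c_{1,j,j+1}$ may well vanish. A concrete counterexample is the $2$-class scheme on the disjoint union of two copies of $K_m$ ($m\ge 2$) with $R_1=\{\text{same component, unequal}\}$ and $R_2=\{\text{different components}\}$: one checks directly that $d$ satisfies the triangle inequality, yet $c_{1,1,2}=0$, the adjacency matrix $D_1$ has only two distinct eigenvalues ($p_1(0)=p_1(1)=m-1$ while $p_2(0)\neq p_2(1)$), and no degree-one $\Phi_1$ can separate $z_0$ from $z_1$, so the scheme is \emph{not} $P$-polynomial in this ordering. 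Your ``geometric'' justification that otherwise the graph on $R_1$ would not reach distance $j{+}1$ is circular, since identifying the relation index with the graph distance is exactly what you are trying to establish. In the cited references the definition of a metric scheme already builds in the extra clause $c_{1,j,j+1}\neq 0$ for $0\le j<n$ (equivalently $c_{i,j,i+j}\neq 0$); with that hypothesis the leading coefficient in your recurrence is nonzero at every step, the induction giving $D_k$ as a polynomial of exact degree $k$ in $D_1$ succeeds, and both directions of your argument are correct. So what you identified is not a lemma awaiting proof but a hypothesis missing from the paper's Definition~\ref{defn:metricscheme}.
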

Although not proven here, there are multiple ways of proving this statement. One is from MacWilliams and Sloane \cite[Theorem 6, Chapter 21]{TheoryofError}, another is from Brouwer \cite[Proposition 2.7.1]{BrouwerDRG} and originally proved by Delsarte \cite[Theorem 5.6, Theorem 5.16]{delsarte1973algebraic}.

Considering metric association schemes where $\mathscr{X}$ is a finite dimensional vector space over a finite field  and therefore a finite abelian group, we can introduce the concept of an \textit{\textbf{inner product}}, $\langle~,~\rangle$. More details on how the inner product arises in this situation can be found in \cite[p72]{BrouwerDRG}. 

In this situation, given an inner product, and since we have a finite vector space, we can identify a \textit{\textbf{dual vector subspace}}, $\mathscr{C}^\perp$, for any subspace $\mathscr{C}\subseteq\mathscr{X}$, such that
\begin{align}
    \mathscr{C}^\perp = \Big\{ x\in\mathscr{X}~|~ \langle x,y\rangle=0~ \forall ~y\in\mathscr{C}\Big\}.
\end{align}

In this paper we only consider finite dimensional vector spaces over a finite field, $\mathscr{X}$, so to find the orthogonal points we only need to consider when the inner product is $0$ and not involve character theory. We can note that as in Delsarte \cite[Section 3]{DelsarteBlinear} this would be equivalent to the character of the inner product being $1$ if the two points are orthogonal.

\subsection{Generalised Krawtchouk Polynomials}
From the orthogonality relations \eqref{equation:associationorthog1}, \eqref{equation:associationorthog2} we can see the polynomials, $p_k(i)$, form a set of polynomials which take the same form but with a range of parameters which we call a family. For the association schemes studied in this paper, this family has been shown to be the generalised Krawtchouk Polynomials introduced by Delsarte \cite{delsartereccurance}\cite{StantonKrawtchoukPolys}.

First we present a notation used by Delsarte \cite[p21]{DelsarteAlternating} called the $b$-nary Gaussian coefficients. Also note, for ease,  we define $\sigma_i=\frac{i(i-1)}{2}$ for $i\geq0$.

\begin{defn}
    For $x,k\in\mathbb{Z}^{+}$, $b\in\mathbb{R}$, $b\neq 1$ the \textbf{\textit{$b$-nary Gaussian coefficients}} are defined as
    \begin{equation}
        \prescript{}{b}{\bbinom{x}{k}} = \prod_{i=0}^{k-1}\frac{b^{x}-b^{i}}{b^{k}-b^{i}}
    \end{equation}
    with 
    \begin{equation}
        \prescript{}{b}{\bbinom{x}{0}} =1.
    \end{equation}
\end{defn}
    It is useful to note that if we take the limit as $b$ tends to 1, we in fact obtain the usual binomial coefficients,
    \begin{align}
        \lim_{b\to 1} \prod_{i=0}^{k-1} \prescript{}{b}{\frac{b^{x}-b^{i}}{b^{k}-b^{i}}}
            & = \lim_{b\to 1} \prod_{i=0}^{k-1}\frac{(b-1)}{(b-1)}\frac{\left(b^{x-i-1}+b^{x-i-2}+ \ldots +1\right)}{\left(b^{k-i-1}+ \ldots +1\right)}
            \\
            & = \prod_{i=0}^{k-1} \frac{x-i}{k-i}
            \\
            & = \binom{x}{k}.
    \end{align}
    This relationship helps when comparing the similarities between the analysis for the Hamming, the rank, the skew rank and the Hermitian association schemes.

Below are some identities relating to the $b$-nary Gaussian coefficients which are useful in simplifying notation, and can be used for different values of $b$ from \cite{DelsarteAlternating}. For $b\in\mathbb{R}/\{1\}$, $x,i,j,k\in\mathbb{Z}^+$, $y\in\mathbb{R}$ we have
\begin{align}
    \prescript{}{b}{\bbinom{x}{k}} 
        & = \prescript{}{b}{\bbinom{x}{x-k}}\label{equation:gaussianxx-k}
        \\
    \prescript{}{b}{\bbinom{x}{i}}\prescript{}{b}{\bbinom{x-i}{k}} 
        & = \prescript{}{b}{\bbinom{x}{k}}\prescript{}{b}{\bbinom{x-k}{i}}\label{equation:gaussianswapplaces}
            \\
    \prod_{i=0}^{x-1}\left(y-b^{i}\right)
        & = \sum_{k=0}^x (-1)^{x-k}b^{\sigma_{x-k}}\prescript{}{b}{\bbinom{x}{k}}y^k \label{equation:sumtoprod}
            \\
    \sum_{k=0}^x\prescript{}{b}{\bbinom{x}{k}}\prod_{i=0}^{k-1}\left(y-b^{i}\right)  
        & = y^x\label{equation:producttosumgauss}
            \\
    \sum_{k=i}^j (-1)^{k-i}b^{\sigma_{k-i}}\prescript{}{b}{\bbinom{k}{i}}\prescript{}{b}{\bbinom{j}{k}} 
        & = \delta_{ij}.\label{equation:deltaijbs}
\end{align}
The following identities are each used in the rest of this paper but can be shown trivially to be equal.
\begin{align}
    \prescript{}{b}{\bbinom{x}{k}} 
        & = \prescript{}{b}{\bbinom{x-1}{k}} + b^{x-k}\prescript{}{b}{\bbinom{x-1}{k-1}} \label{equation:Stepdown1}
        \\[5pt]
        & = \prescript{}{b}{\bbinom{x-1}{k-1}} + b^{k}\prescript{}{b}{\bbinom{x-1}{k}} \label{equation:gaussiancoeffsx-1k-1}
        \\[5pt]
        & = \frac{b^{x-k+1}-1}{b^{k}-1}\prescript{}{b}{\bbinom{x}{k-1}}
        \label{equation:gaussianfracxk-1}
        \\[5pt]
        & = \frac{b^{x}-1}{b^{x-k}-1}\prescript{}{b}{\bbinom{x-1}{k}} \label{equation:gaussianfracx-1k}
        \\[5pt]
        & = \frac{b^{x}-1}{b^{k}-1} \prescript{}{b}{\bbinom{x-1}{k-1}}. \label{equation:beta1stepdown}
        \end{align}
        
We also define a $b$-nary beta function which is closely related to the $b$-nary Gaussian coefficients, and aid us in notation throughout this paper.

\begin{defn}\label{defn:betafunction}
We define a \textbf{\textit{$b$-nary beta function}} for $x\in\mathbb{R}$, $k\in\mathbb{Z}^+$ as 
\begin{equation}\label{equation:betafunction}
    \beta_b(x,k) =
             \displaystyle\prod_{i=0}^{k-1}\prescript{}{b}{\bbinom{x-i}{1}}.
\end{equation}
\end{defn}

\begin{lem}\label{lemma:betabmanipulation}
We have for all $x\in\mathbb{R}$, $k\in\mathbb{Z}^+$,
    \begin{equation}\label{equation:betabstartdifferent}
        \beta_b(x,k) = \prescript{}{b}{\bbinom{x}{k}}\beta_b(k,k)
    \end{equation}
    \begin{equation}\label{equation:betabstartsame}
        \beta_b(x,x) = \prescript{}{b}{\bbinom{x}{k}}\beta_b(k,k)\beta_b(x-k,x-k)
    \end{equation}
and 
    \begin{equation}\label{equation:betaproperties}
        \beta_b(x,k)\beta_b(x-k,1)=\beta_b(x,k+1).
    \end{equation}
\end{lem}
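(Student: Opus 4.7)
The plan is to prove the three identities in the natural order (iii), (i), (ii), each following from elementary manipulation of the defining product. Throughout I will use the observation that
\begin{equation}
    \prescript{}{b}{\bbinom{x-i}{1}} = \frac{b^{x-i}-1}{b-1},
\end{equation}
together with the factorisation $b^x-b^i = b^i(b^{x-i}-1)$, which gives the useful rewriting
\begin{equation}
    \prescript{}{b}{\bbinom{x}{k}} = \prod_{i=0}^{k-1}\frac{b^x-b^i}{b^k-b^i} = \prod_{i=0}^{k-1}\frac{b^{x-i}-1}{b^{k-i}-1}.
\end{equation}

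For \eqref{equation:betaproperties}, I would simply extend the product in Definition \ref{defn:betafunction} by one additional factor: the product defining $\beta_b(x,k+1)$ runs over $i=0,\ldots,k$, and pulling off the final $i=k$ term separates it as $\beta_b(x,k)\cdot \prescript{}{b}{\bbinom{x-k}{1}} = \beta_b(x,k)\beta_b(x-k,1)$. This is essentially just unpacking the definition.

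For \eqref{equation:betabstartdifferent}, I would use the rewriting above to get $\prescript{}{b}{\bbinom{x}{k}}\beta_b(k,k) = \prod_{i=0}^{k-1}\frac{b^{x-i}-1}{b^{k-i}-1}\cdot\prod_{i=0}^{k-1}\frac{b^{k-i}-1}{b-1}$, and observe that the denominators of the first product cancel the numerators of the second term-by-term, leaving $\prod_{i=0}^{k-1}\frac{b^{x-i}-1}{b-1} = \beta_b(x,k)$. This is the only place any real algebra happens, and it is just telescoping.

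For \eqref{equation:betabstartsame}, I would split the defining product for $\beta_b(x,x) = \prod_{i=0}^{x-1}\prescript{}{b}{\bbinom{x-i}{1}}$ at index $i=k$:
\begin{equation}
    \beta_b(x,x) = \left(\prod_{i=0}^{k-1}\prescript{}{b}{\bbinom{x-i}{1}}\right)\left(\prod_{i=k}^{x-1}\prescript{}{b}{\bbinom{x-i}{1}}\right).
\end{equation}
The first factor is $\beta_b(x,k)$ by definition; reindexing the second by $j=i-k$ turns it into $\prod_{j=0}^{x-k-1}\prescript{}{b}{\bbinom{(x-k)-j}{1}} = \beta_b(x-k,x-k)$. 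Applying \eqref{equation:betabstartdifferent} to the first factor then gives the claimed identity. The only mild obstacle is keeping the index bookkeeping straight in the splitting/reindexing step, but no new ideas are required beyond what is already used in the first two parts.
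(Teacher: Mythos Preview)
Your proposal is correct and follows essentially the same approach as the paper: both arguments reduce to expanding the defining product $\prod_{i=0}^{k-1}\frac{b^{x-i}-1}{b-1}$, telescoping against $\prod_{i=0}^{k-1}\frac{b^{k-i}-1}{b-1}$ for \eqref{equation:betabstartdifferent}, and splitting or recombining the product at index $k$ for \eqref{equation:betabstartsame}. The only cosmetic difference is direction---you work from left to right and split, whereas the paper works from right to left and combines---and your ordering (iii), (i), (ii) is perfectly fine since the parts are logically independent except for your use of \eqref{equation:betabstartdifferent} inside \eqref{equation:betabstartsame}.
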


\begin{proof}
We have
\begin{align}
    \beta_b(x,k) = \prod_{i=0}^{k-1}\prescript{}{b}{\bbinom{x-i}{1}} 
        & = \prod_{i=0}^{k-1}\frac{b^{x-i}-1}{b-1}
        \\
        & = \prod_{i=0}^{k-1}\frac{\left(b^{x-i}-1\right)\left(b^{k-i}-1\right)}{\left(b^{k-i}-1\right)(b-1)}
        \\
        & = \prod_{i=0}^{k-1}\frac{b^{x}-b^{i}}{b^{k}-b^{i}}\prod_{i=0}^{k-1}\frac{b^{k-i}-1}{b-1}
        \\
        & = {\prescript{}{b}{\bbinom{x}{k}}}\beta_b(k,k)
\end{align}
as required. Now we have
\begin{align}
    {\prescript{}{b}{\bbinom{x}{k}}}\beta_b(k,k)\beta_b(x-k,x-k) 
        & = \prod_{i=0}^{k-1}\frac{b^{x}-b^{i}}{b^{k}-b^{i}}\prod_{r=0}^{k-1}\frac{b^{k-r}-1}{b-1}\prod_{s=0}^{x-k-1}\frac{b^{x-k-s}-1}{b-1}
        \\
        & = \prod_{i=0}^{x-1}\frac{b^{x-i}-1}{b-1}
        \\
        & = \beta_b(x,x)
\end{align}
as required. And finally we have,
\begin{align}
        \beta_b(x,k)\beta_b(x-k,1) 
            & = \prescript{}{b}{\bbinom{x-k}{1}} \prod_{i=0}^{k-1}\prescript{}{b}{\bbinom{x-i}{1}}
            \\
            & = \beta_b(x,k+1).
    \end{align}
\end{proof}
To aid us in notation, we define a new $b$-nary gamma function, which is a component of the expression derived from setting $x=0$ in the generalised Krawtchouk polynomials \eqref{equation:delsarteKrawtchoukgenerlaised}.
\begin{defn}
    We define the \textit{\textbf{$b$-nary gamma function}} for $x,b,c\in\mathbb{R}$, $k\in\mathbb{Z}^+$, $cb>1$, for $c$ a constant, to be
    \begin{equation*}
    \gamma_{b,c}(x,k)=\displaystyle\prod_{i=0}^{k-1}\left(cb^x-b^{i}\right).
    \end{equation*}
\end{defn}

\begin{lem}\label{lemma:bGammaidentites}
We have the following identities for the $b$-nary Gamma function:
\begin{align}
    \gamma_{b,c}(x,k)
        & = b^{\sigma_k}\displaystyle\prod_{i=0}^{k-1}\left(cb^{x-i}-1\right)
    \\
    \gamma_{b,c}(x+1,k+1) 
        & =  \left(cb^{x+1}-1\right)b^{k}\gamma_{b,c}(x,k) \label{equation:gammastepdown}
    \\
    \gamma_{b,c}(x,k+1) 
        & = \left(cb^x-b^{k}\right)\gamma_{b,c}(x,k).\label{equation:gammastepdownsecond}
\end{align}
\end{lem}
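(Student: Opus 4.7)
The three identities all follow by direct manipulation of the defining product $\gamma_{b,c}(x,k)=\prod_{i=0}^{k-1}(cb^{x}-b^{i})$, so my plan is essentially a bookkeeping exercise rather than anything requiring a genuine idea; there is no serious obstacle here, just the need to keep the indices straight.

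For the first identity, I would factor a $b^{i}$ out of each factor in the product. Writing $cb^{x}-b^{i}=b^{i}(cb^{x-i}-1)$ and taking the product over $i=0,\ldots,k-1$ pulls out $b^{0+1+\cdots+(k-1)}=b^{\sigma_{k}}$ using the definition $\sigma_{k}=\tfrac{k(k-1)}{2}$ stated just before Definition \ref{defn:betafunction}, and leaves behind $\prod_{i=0}^{k-1}(cb^{x-i}-1)$, which is exactly the claimed form.

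For the second identity I would isolate the $i=0$ factor of $\gamma_{b,c}(x+1,k+1)$, giving $(cb^{x+1}-1)\prod_{i=1}^{k}(cb^{x+1}-b^{i})$, and then observe that for $i\geq 1$ we can write $cb^{x+1}-b^{i}=b(cb^{x}-b^{i-1})$. Taking the product and reindexing by $j=i-1$ produces the factor $b^{k}$ and the product $\prod_{j=0}^{k-1}(cb^{x}-b^{j})=\gamma_{b,c}(x,k)$, yielding the stated recurrence. The third identity is immediate: splitting off the $i=k$ factor in the product defining $\gamma_{b,c}(x,k+1)$ gives $(cb^{x}-b^{k})\gamma_{b,c}(x,k)$ at once.

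As remarked above, no step is genuinely hard; the only thing to watch is consistent reindexing in the second part, since mis-shifting $i\mapsto i-1$ would change the exponent on $b$ collected from the $k$ factors and destroy the claimed $b^{k}$ prefactor.
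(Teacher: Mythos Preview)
Your proposal is correct and follows essentially the same approach as the paper's own proof: factoring $b^{i}$ from each term for the first identity, isolating the $i=0$ factor and reindexing for the second, and splitting off the $i=k$ factor for the third. The arguments match line by line.
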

    \begin{proof}~\\
$(1)$
\begin{align*}
   \gamma_{b,c}(x,k) 
        & = \prod_{i=0}^{k-1}\left(cb^x - b^{i}\right)
        \\
        & = \left(\prod_{i=0}^{k-1}b^{i}\right)\prod_{i=0}^{k-1}\left(cb^{x-i}-1\right)
        \\
        & = b^{\sigma_k}\prod_{i=0}^{k-1}\left(cb^{x-i}-1\right).
\end{align*}

$(2)$
\begin{align*}
        \gamma_{b,c}(x+1,k+1) 
                & =  \prod_{i=0}^{k} \left(cb^{x+1} - b^{i}\right)
                \\
                & =  \left(cb^{x+1}-1\right) \prod_{i=1}^{k}\left(cb^{x+1}-b^{i}\right)
                \\
                & =  \left(cb^{x+1}-1\right) \prod_{i=1}^{k} b\left(cb^{x}-b^{i-1}\right)
                \\
                & =  \left(cb^{x+1}-1\right) b^{k} \prod_{i=0}^{k-1} \left(cb^{x}-b^{i}\right)
                \\
                & = \left(cb^{x+1}-1\right)b^{k}\gamma_{b,c}(x,k).
\end{align*}
$(3)$
\begin{align*}
    \gamma_{b,c}(x,k+1) 
        & = \prod_{i=0}^k\left(cb^x-b^{i}\right)
        \\
        & = \left(cb^x-b^{k}\right)\prod_{i=0}^{k-1}\left(cb^x-b^{i}\right)
        \\
        & = \left(cb^x-b^{k}\right)\gamma_{b,c}(x,k).
\end{align*}
\end{proof}
\begin{note}
    The $b$-nary beta and $b$-nary gamma functions are new expressions which have been developed to unify the following theories in Hamming, rank, skew rank and Hermitian association schemes. 
\end{note}

Now that we have some additional notation, we can write Delsarte's \cite[(15)]{delsartereccurance} generalised Krawtchouk polynomials neatly and we shall write $\prescript{}{b}{\bbinom{x}{k}} ={\bbinom{x}{k}}$ for simplicity throughout the rest of this paper. 
\begin{defn}
    For $b,c\in\mathbb{R}$, $b\geq 1$, $c>\frac{1}{b}$, $n\in\mathbb{Z}^{+}$ $x,k\in\{0,\ldots,n\}$, then Delsarte's \cite{delsartereccurance} \textbf{\textit{generalised Krawtchouk polynomials}} are defined as
    \begin{equation}\label{equation:delsarteKrawtchoukgenerlaised}
        P_k(x,n) = \sum_{j=0}^{k}(-1)^{k-j}\left(cb^n\right)^j b^{\binom{k-j}{2}} {\bbinom{n-j}{n-k}}{\bbinom{n-x}{j}}.
    \end{equation}
\end{defn}

Again, if we take $b\rightarrow 1$, then the generalised Krawtchouk polynomials become the Krawtchouk polynomials in the usual sense. That is, the Hamming Krawtchouk polynomials for $q\geq 2$ are,
    \begin{equation}\label{equation:hammingKrawtchoukgenerlaised}
        P_k(x,n) = \sum_{j=0}^{k}(-1)^{k-j}q^j{\binom{n-j}{n-k}}{\binom{n-x}{j}}.
    \end{equation}
    We note that equation \eqref{equation:hammingKrawtchoukgenerlaised} is equal to \cite[(53),(55),(56)]{TheoryofError}, specifically for obtaining \cite[(56)]{TheoryofError} we use the substitution $j=``k-j"$ and rearrange the sum.

Delsarte proved that the eigenvalues of an association scheme satisfy a recurrence relation with specific initial values, namely for $b\in\mathbb{R}^+$, $y\in\mathbb{Z}^+$ and $x,k\in\{0,1,\ldots,y\}$ is 
\begin{equation}\label{equation:recurrencerelation}
    P_{k+1}(x+1,y+1) = b^{k+1}P_{k+1}(x,y)-b^kP_{k}(x,y)
\end{equation}
with initial values,
\begin{align}
    P_k(0,y) & = {\bbinom{y}{k}} \prod_{i=0}^{k-1} \left(cb^{y}-b^i\right)
    \\
    P_{0}(x,y) & = 1,
\end{align}
with $c\in\mathbb{R}$, $c>\frac{1}{b}$.
He then concluded that the only solution to this recurrence relation with these specific initial values are the generalised Krawtchouk polynomials as defined in Definition \ref{equation:delsarteKrawtchoukgenerlaised}. 

Delsarte also considers any association scheme to find a relationship between the inner distribution of an association scheme and its dual \cite[(6.9)]{delsarte1973algebraic}. Before we do this, we need to introduce some notation. 
\begin{defn}
    Let $(\mathscr{X},R)$ be an $n$-class association scheme. The \textbf{\textit{inner distribution}} of a subgroup $X\subseteq\mathscr{X}$, is the $(n+1)$-tuple, $\boldsymbol{c}=(c_0,\ldots,c_n)$, where $c_i$ is the average number of points of $X$ being $i^{th}$ associates of a fixed point of $X$.
\end{defn}
We note that in this paper we only consider association schemes where the inner distribution becomes a weight distribution with an associatied metric.            

\begin{thm}[The MacWilliams Identity for Association Schemes]\label{MacWilliamsDelsarte}
Let $(\mathscr{X},R)$ be an $n$-class association scheme with dual $n$-class association scheme $(\mathscr{X},R')$. For a pair of dual subgroups $X,X'\subseteq \mathscr{X}$, let $\boldsymbol{c}=(c_0,\ldots,c_n)$ be the inner distribution of $X$ and $\boldsymbol{c'}=(c'_0,\ldots c'_n)$ be the inner distribution of $X'$. If $P$ and $Q$ are the eigenmatrices of $(\mathscr{X},R)$ then
\begin{align}
    |X|\boldsymbol{c'} & = \boldsymbol{c}Q \\
    |X'|\boldsymbol{c} & = \boldsymbol{c'}P.
\end{align}
\end{thm}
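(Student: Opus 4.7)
The plan is to prove $|X|\boldsymbol{c'}=\boldsymbol{c}Q$ by evaluating the quadratic form $\mathbf{1}_X^{T}E_k\mathbf{1}_X$ in two different ways --- algebraically via the Bose-Mesner spectral expansion of $E_k$, producing the $k$-th component of $\boldsymbol{c}Q$, and combinatorially via the character basis for the translation scheme, producing a multiple of $c'_k$ --- and then to derive the second identity by applying the first to the dual scheme $(\mathscr{X},R')$ with the roles of $X$ and $X'$ interchanged.

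For the algebraic side I would let $\mathbf{1}_X\in\mathbb{C}^{\mathscr{X}}$ be the indicator column of $X$. Because $X$ is a subgroup and $(\mathscr{X},R)$ is a translation scheme, the count of pairs $(x,y)\in X\times X$ with $(x,y)\in R_i$ is independent of the basepoint $x$, giving $\mathbf{1}_X^{T}D_i\mathbf{1}_X=|X|\,c_i$. Substituting $E_k=\tfrac{1}{v}\sum_i q_k(i)D_i$ from Section~\ref{section:bosemesner} then yields
\[
\mathbf{1}_X^{T}E_k\mathbf{1}_X \;=\; \frac{|X|}{v}\sum_{i=0}^{n}q_k(i)\,c_i \;=\; \frac{|X|}{v}(\boldsymbol{c}Q)_k.
\]
For the combinatorial side I would exploit that the $D_i$ are simultaneously diagonalised by the characters of $(\mathscr{X},+)$, which can be indexed as $\chi_y$ for $y\in\mathscr{X}$ via a nontrivial character of the inner-product pairing; the translation-scheme structure makes each primitive idempotent $E_k$ the orthogonal projection onto the span of those $\chi_y$ for which $(0,y)\in R'_k$. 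Expanding the quadratic form in this basis reduces it to a sum of $\bigl|\sum_{x\in X}\chi_y(x)\bigr|^{2}$ over such $y$, and standard character orthogonality on the subgroup $X$ makes each inner sum equal to $|X|$ exactly when $y\in X^{\perp}=X'$ and zero otherwise, collapsing the total to $\tfrac{|X|^{2}}{v}\,\bigl|\{y\in X':(0,y)\in R'_k\}\bigr|=\tfrac{|X|^{2}}{v}c'_k$.

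Equating the two expressions and cancelling $|X|/v$ produces $|X|c'_k=(\boldsymbol{c}Q)_k$ for every $k$, which is the first identity. The second identity $|X'|\boldsymbol{c}=\boldsymbol{c'}P$ is then the first identity applied to the dual scheme $(\mathscr{X},R')$, whose eigenmatrices are $P'=Q$ and $Q'=P$ and in which $X'$ plays the role of the original subgroup and $X=(X')^{\perp}$ that of its dual. The main obstacle I anticipate is the combinatorial evaluation, specifically the identification of $E_k$ with the projection onto the character subspace indexed by the dual class $R'_k$; this requires the Pontryagin-type correspondence between $\mathscr{X}$ and its character group to be class-preserving, which is exactly the content of $(\mathscr{X},R)$ being a regular translation scheme, and without this bridge the algebraic and combinatorial evaluations cannot be tied together.
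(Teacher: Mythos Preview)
The paper does not actually supply a proof of this theorem: it is stated at the end of Section~\ref{section:bosemesner} as a known result attributed to Delsarte (via the citation to \cite[(6.9)]{delsarte1973algebraic}) and is then invoked as a black box in the proof of Theorem~\ref{ASSOCIATIONMacWilliams} and in Section~\ref{section:bmoments}. So there is no in-paper argument to compare against.

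Judged on its own terms, your sketch is the standard Delsarte argument and is correct. The double evaluation of $\mathbf{1}_X^{T}E_k\mathbf{1}_X$ is exactly the right device: the algebraic computation via $E_k=\tfrac{1}{v}\sum_i q_k(i)D_i$ and $\mathbf{1}_X^{T}D_i\mathbf{1}_X=|X|c_i$ is immediate, and the character-theoretic computation relies precisely on the fact you flag at the end --- that for a translation scheme the primitive idempotent $E_k$ is the projection onto $\mathrm{span}\{\chi_y:(0,y)\in R'_k\}$. That identification is indeed the crux, and it is equivalent to the regularity hypothesis the paper alludes to when discussing formal duals. Once it is in place, the subgroup character sum $\sum_{x\in X}\chi_y(x)\in\{0,|X|\}$ collapses the expression to $\tfrac{|X|^{2}}{v}c'_k$ as you say. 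Your derivation of the second identity by passing to the dual scheme and swapping $(X,X',P,Q)\leftrightarrow(X',X,Q,P)$ is also the standard move and is sound; alternatively one can obtain it directly from the first via $PQ=vI$, but your route is equally valid. The only point worth tightening in a full write-up is the normalisation of the character basis so that the factors of $v$ come out correctly, but you have them right in the sketch.
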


\section{Krawtchouk Association Schemes}\label{section:generalPreliminaries}

Since we are considering the more general idea of metric schemes (and therefore $P$-polynomial schemes) we don't specify the details of any codes and spaces they are in, but rather consider parameters of their association schemes. In this paper we are only considering metric schemes which are self dual, i.e. when there is an ordering of the relations as in Definition \ref{defn:metricscheme} and the eigenmatrices $P$ and $Q$ coincide, and also only considering those with eigenvalues that satisfy Delsarte's recurrence relation \cite[(1)]{delsartereccurance} with specific initial values.

\subsection{Recurrence relation}

Below is the recurrence relation we will use to define our set of association schemes. The recurrence relation, for $b\in\mathbb{R}$, $b\neq0$ $n\in\mathbb{Z}^+$ and $x,k\in\{0,1,\ldots,n\}$ is 
\begin{equation}\label{equation:generalrecurrencerelation}
    F_{k+1}(x+1,n+1) = b^{k+1}F_{k+1}(x,n)-b^kF_{k}(x,n)
\end{equation}
for any function $F_k(x,n)$. It is noted that in using this recurrence we have slightly extended the ranges of the parameters of Delsarte's recurrence relation to be able to include the Hermitian association scheme (where $b=-q$). A proof that this recurrence relation is valid for this case is shown in Proposition \ref{prop:recurrencematch}. 

We can now define the set of the association schemes that we want to consider in this chapter.
\begin{defn}
    For an $(\mathscr{X},R)$ $n$-class formally self dual metric translation association scheme with defined parameters $b,c\in\mathbb{R}$ we say it is a \textbf{\textit{Krawtchouk association scheme}} if the eigenvalues, $P_k(x,n)$, for $x,k\in\{0,1,\ldots,n\}$ satisfy the recurrence relation \begin{equation}
    P_{k+1}(x+1,n+1) = b^{k+1}P_{k+1}(x,n)-b^kP_{k}(x,n)
    \end{equation}
    with specific initial values
    \begin{align}
    P_k(0,n) & = \bbinom{n}{k}\gamma_{b,c}(n,k)\label{equation:generalinitalconditions1}
    \\
    P_0(x,n) & = 1.\label{equation:generalinitalconditions2}
\end{align}
\end{defn}

In fact, we can find a new set of polynomials which satisfy the recurrence relation with these initial values and therefore are the eigenvalues of the Krawtchouk association schemes. 

\subsection{The \texorpdfstring{$b$}{b}-Krawtchouk Polynomials}

\begin{defn}
For an $n$-class Krawtchouk association scheme where $x,k\in\{0,1,\ldots,n\}$, $b\in\mathbb{R}$, $b\neq 0$, we define the \textbf{\textit{the $\boldsymbol{b}$-Krawtchouk Polynomial}} as
\begin{equation}
    C_k(x,n) = \sum_{j=0}^k (-1)^j b^{j(n-x)} b^{\sigma_j}\bbinom{x}{j}\bbinom{n-x}{k-j}\gamma_{b,c}(n-j,k-j).
\end{equation}
\end{defn}

The way these polynomials arise will be explained in Section \ref{section:thebalgebra}.
We first prove that the $C_k(x,n)$ satisfy the recurrence relation \eqref{equation:generalrecurrencerelation} and the initial values \eqref{equation:generalinitalconditions1}, \eqref{equation:generalinitalconditions2} and therefore are the eigenvalues of the Krawtchouk association schemes. 

\begin{prop}
For $b,c\in\mathbb{R}$, $b\neq 0$, $c>\frac{1}{b}$ and for all $x,k\in\{0,\ldots,n\}$ we have
    \begin{equation}
        C_{k+1}(x+1,n+1) = b^{k+1}C_{k+1}(x,n)-b^{k}C_k(x,n).\label{equation:generalrecurrenceCKI}
    \end{equation}
\end{prop}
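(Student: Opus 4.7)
The plan is a direct term-by-term expansion of both sides, using the $b$-Gaussian step-down for $\bbinom{x+1}{j}$ and the two $\gamma_{b,c}$ step-downs \eqref{equation:gammastepdown} and \eqref{equation:gammastepdownsecond} to bring every term to a common shape involving $\gamma_{b,c}(n-j,k-j)$, and then matching coefficients. Since $b^{j(n+1-(x+1))}=b^{j(n-x)}$ and $\bbinom{n+1-(x+1)}{k+1-j}=\bbinom{n-x}{k+1-j}$, the only factors in $C_{k+1}(x+1,n+1)$ whose parameters need reducing are $\bbinom{x+1}{j}$ and $\gamma_{b,c}(n+1-j,k+1-j)$. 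I would apply $\bbinom{x+1}{j}=\bbinom{x}{j-1}+b^{j}\bbinom{x}{j}$ from \eqref{equation:gaussiancoeffsx-1k-1} to split the LHS into two pieces (call them Sum 1 and Sum 2), and then use $\gamma_{b,c}(n+1-j,k+1-j)=(cb^{n+1-j}-1)b^{k-j}\gamma_{b,c}(n-j,k-j)$ from \eqref{equation:gammastepdown} to eliminate the shifted $\gamma$-factor. In Sum 1 I would also reindex $j\mapsto j+1$ (using $\sigma_{j+1}=\sigma_j+j$) so every sum is indexed over a common range.

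Next I would expand the RHS analogously. For $b^{k+1}C_{k+1}(x,n)$, I would use $\gamma_{b,c}(n-j,k+1-j)=(cb^{n-j}-b^{k-j})\gamma_{b,c}(n-j,k-j)$ from \eqref{equation:gammastepdownsecond}; the term $b^{k}C_{k}(x,n)$ is already in the right form. At this point both sides are linear combinations of terms of the form $(\text{power of }b)\bbinom{x}{j}\bbinom{n-x}{\bullet}\gamma_{b,c}(n-j,k-j)$, and each side naturally splits into a piece proportional to $c$ (coming from $cb^{n+1-j}$ on the left and $cb^{n-j}$ on the right) and a piece independent of $c$. A direct comparison of $b$-exponents shows the two $c$-pieces are termwise identical and therefore cancel.

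What remains is an identity between the $c$-independent remainders. The sums coming from the $\bbinom{x}{j}$-branch (carrying $\bbinom{n-x}{k+1-j}$) run over $j=0,\ldots,k+1$, while those coming from the $\bbinom{x}{j-1}$-branch of the LHS and from $b^{k}C_{k}(x,n)$ carry $\bbinom{n-x}{k-j}$ and run over $j=0,\ldots,k$. To reconcile them I would use the Gaussian identity \eqref{equation:gaussianfracxk-1} in the rewritten form
\begin{equation*}
(b^{k+1-j}-1)\bbinom{n-x}{k+1-j}=(b^{n-x-k+j}-1)\bbinom{n-x}{k-j},
\end{equation*}
which converts every $\bbinom{n-x}{k+1-j}$ term into a $\bbinom{n-x}{k-j}$ term at the cost of swapping $(b^{k+1-j}-1)$ for $(b^{n-x-k+j}-1)$. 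Because $\bbinom{n-x}{-1}=0$, this automatically kills the $j=k+1$ contribution, leaving a single sum over $j=0,\ldots,k$ on each side whose $b$-exponent and sign I expect to match immediately.

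The main obstacle is the bookkeeping rather than any conceptual subtlety: keeping track of accumulated exponents of $b$ under the various step-downs, correctly handling $\sigma_{j+1}=\sigma_j+j$ after the reindexing in Sum 1, and checking that every Gaussian coefficient with a negative lower index vanishes so ranges collapse cleanly. The conceptual heart of the argument is the pairing of the two $\gamma_{b,c}$ step-downs (one for each side) together with \eqref{equation:gaussianfracxk-1}, which is precisely the identity needed to convert the $(b^{k+1-j}-1)$ factor produced on the LHS into the $(b^{n-x-k+j}-1)$ factor produced on the RHS.
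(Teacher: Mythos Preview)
Your plan is exactly the paper's proof: split $\bbinom{x+1}{j}$ via \eqref{equation:gaussiancoeffsx-1k-1}, apply the two $\gamma_{b,c}$ step-downs \eqref{equation:gammastepdown} and \eqref{equation:gammastepdownsecond} on the respective sides (the paper's cancelling piece $\alpha_2$ is precisely your ``$c$-piece''), and then use \eqref{equation:gaussianfracxk-1} to convert the surviving $\bbinom{n-x}{k+1-j}$ factors into $\bbinom{n-x}{k-j}$. One small caution on execution: for Sum~1 (the $\bbinom{x}{j-1}$ branch) do \emph{not} also apply the $\gamma$ step-down---the reindexing $j\mapsto j+1$ alone already sends $\gamma_{b,c}(n+1-j,k+1-j)$ to $\gamma_{b,c}(n-j,k-j)$, and doing both would shift the $\gamma$-arguments once too far (the paper accordingly leaves that piece, its $\alpha_1$, untouched until the final reindexing).
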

\begin{proof}
We look at all three terms sequentially. First, noting that $\displaystyle\bbinom{x}{j-1}=0$ when $j=0$,
\begin{align}
    C_{k+1}&(x+1,n+1)  \\
        & = \sum_{j=0}^{k+1} (-1)^j b^{j(n-x)} b^{\sigma_{j}} \bbinom{x+1}{j}\bbinom{n-x}{k+1-j} \gamma\left(n+1-j, k+1-j\right)  \\
        & = \left. C_{k+1}(x+1,n+1)\right|_{j=k+1} 
         \\
        & \hspace{1cm} \overset{\eqref{equation:gaussiancoeffsx-1k-1}}{+} \sum_{j=0}^{k} (-1)^j b^{j(n-x)+\sigma_{j}} \left\{\bbinom{x}{j-1}+b^{j}\bbinom{x}{j}\right\}\bbinom{n-x}{k+1-j}\gamma\left(n+1-j, k+1-j\right) 
        \\
        & = \left. C_{k+1}(x+1,n+1)\right|_{j=k+1}
        \\
        & \hspace{1cm} + \sum_{j=1}^{k} (-1)^j b^{j(n-x)+\sigma_{j}} \bbinom{x}{j-1}\bbinom{n-x}{k+1-j}\gamma_{b,c}(n+1-j, k+1-j)\label{equation:balpha1}
        \\
        & \hspace{1cm}  \overset{\eqref{equation:gammastepdown}}{+} \sum_{j=0}^k (-1)^j cb^{j(n-x)+\sigma_{j}+n+1+k-j}\bbinom{x}{j}\bbinom{n-x}{k+1-j} \gamma\left(n-j,k-j\right)\label{equation:balpha2}
        \\
        & \hspace{1cm} - \sum_{j=0}^k (-1)^j b^{j(n-x)+\sigma_{j}+k}  \bbinom{x}{j}\bbinom{n-x}{k+1-j}\gamma_{b,c}(n-j,k-j)\label{equation:balpha3} 
        \\
        & = \left. C_{k+1}(x+1,n+1) \right|_{j=k+1} + \alpha_1 + \alpha_2 + \alpha_3 
\end{align}

where $\alpha_1$, $\alpha_2$, $\alpha_3$ represent summands \eqref{equation:balpha1}, \eqref{equation:balpha2}, \eqref{equation:balpha3} respectively and for notation, $|_{j=k+1}$ means ``the term when $j=k+1$''. 

Second,
\begin{align}
 b^{k+1}C_{k+1}(x,n) 
    & = \sum_{j=0}^{k+1}(-1)^j b^{k+1}b^{j(n-x)} b^{\sigma_{j}}\bbinom{x}{j}\bbinom{n-x}{k+1-j}\gamma_{b,c}(n-j,k+1-j) 
    \\
     & =  b^{k+1}\left. C_{k+1}(x,n)\right|_{j=k+1} 
     \\
     & \hspace{1cm} \overset{\eqref{equation:gammastepdownsecond}}{+} \sum_{j=0}^{k} (-1)^j cb^{j(n-x)+\sigma_{j}+n+1+k-j} \bbinom{x}{j}\bbinom{n-x}{k+1-j} \gamma_{b,c}(n-j,k-j) 
     \\
     & \hspace{1cm} - \sum_{j=0}^k (-1)^j b^{j(n-x)+\sigma_{j}+k+k-j+1} \bbinom{x}{j}\bbinom{n-x}{k+1-j}\gamma_{b,c}(n-j, k-j)\label{equation:bbeta1}
     \\
     & = b^{k+1}\left. C_{k+1}(x,n)\right|_{j=k+1} + \alpha_2 + \beta_1 . 
\end{align}
Where $\beta_1$ represents the summand \eqref{equation:bbeta1}. 
Third,
\begin{equation}
    \begin{split}
        b^{k}C_k(x,n) & =\sum_{j=0}^k (-1)^j b^{j(n-x)+\sigma_{j}+k} \bbinom{x}{j}\bbinom{n-x}{k-j} \gamma_{b,c}(n-j,k-j),\\
        & = \rho, ~\text{say}.
    \end{split}
\end{equation}
So let $C=C_{k+1}(x+1,n+1)-b^{k+1}C_{k+1}(x,n)+b^{k}C_{k}(x,n)$. We have, 
\begin{equation}
    C =  \alpha_1 + \alpha_3 - \beta_1 +\rho + \left. C_{k+1}(x+1,n+1)\right|_{j=k+1}-b^{k+1} \left. C_{k+1}(x,n)\right|_{j=k+1}.
\end{equation}
Consider $\alpha_3-\beta_1+\rho$. Then
\begin{align}
    \alpha_3-\beta_1 
        & = \sum_{j=0}^k(-1)^{j+1} b^{j(n-x)+\sigma_{j}+k} \bbinom{x}{j}\bbinom{n-x}{k+1-j}\gamma_{b,c}(n-j,k-j)\left( 1-b^{k-j+1}\right) 
        \\
        \overset{\eqref{equation:gaussianfracxk-1}}&{=} \sum_{j=0}^k (-1)^{j+1}b^{j(n-x)+\sigma_{j}+k}\left(1-b^{k-j+1}\right) \bbinom{x}{j} 
        \\
        & \hspace{1cm} \times \dfrac{b^{(n-x)-(k-j)}-1}{b^{k+1-j}-1}\bbinom{n-x}{k-j} \gamma_{b,c}(n-j,k-j)  \\
        & = \sum_{j=0}^k (-1)^j b^{(j+1)(n-x)+\sigma_{j+1}} \bbinom{x}{j}\bbinom{n-x}{k-j}\gamma_{b,c}(n-j, k-j)\label{equation:btau}\\
        & \hspace{1cm} - \sum_{j=0}^k (-1)^{j}b^{j(n-x)+\sigma_{j}+k} \bbinom{x}{j}\bbinom{n-x}{k-j}\gamma_{b,c}(n-j,k-j)  \\
        & = \tau-\rho,
\end{align}

where $\tau$ represents the summand in \eqref{equation:btau}. Thus,
\begin{equation}
    C  = \alpha_1 + \tau + \left. C_{k+1}(x+1,n+1)\right|_{j=k+1} - b^{k+1}\left. C_{k+1}(x,n)\right|_{j=k+1}.
\end{equation}

Now, 
\begin{align}
C_{k+1}&\left.(x+1,n+1)\right|_{j=k+1} 
- b^{k+1}  \left. C_{k+1}(x,n)\right|_{j=k+1}
    \\
    & = (-1)^{k+1}b^{(k+1)(n-x)}b^{\sigma_{k+1}}\left\{\bbinom{x+1}{k+1}-b^{k+1}\bbinom{x}{k+1}\right\}
    \\
    \overset{\eqref{equation:gaussiancoeffsx-1k-1}}&{=} (-1)^{k+1}b^{(k+1)(n-x)}b^{\sigma_{k+1}}\bbinom{x}{k}
    \\
    & = -\tau|_{j=k}
\end{align}

Now consider $\alpha_1.$
\begin{align}
    \alpha_1 & = \sum_{j=1}^k(-1)^{j} b^{j(n-x)+\sigma_{j}}\bbinom{x}{j-1}\bbinom{n-x}{k+1-j}\gamma_{b,c}(n+1-j, k+1-j)\\
    & = \sum_{j=0}^{k-1} (-1)^{j+1}b^{(j+1)(n-x)+\sigma_{j+1}}\bbinom{x}{j}\bbinom{n-x}{k-j}\gamma_{b,c}(n-j, k-j)\\
    & = -\tau+\tau|_{j=k}.
\end{align}
Thus $C=0$ and so the $C_k(x,n)$ satisfy the recurrence relation \eqref{equation:generalrecurrencerelation}.
\end{proof}

\begin{lem}\label{lemma:generalckiequalspki}
The $C_k(x,n)$ are the eigenvalues of the Krawtchouk association scheme. In other words,
\begin{equation}\label{equation:generalckiequalspki}
    C_k(x,n) = P_k(x,n).
\end{equation}
\end{lem}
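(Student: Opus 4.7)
The plan is to exploit Delsarte's uniqueness theorem for solutions of the recurrence \eqref{equation:generalrecurrencerelation} with the prescribed initial conditions \eqref{equation:generalinitalconditions1} and \eqref{equation:generalinitalconditions2}. The preceding proposition has already shown that $C_k(x,n)$ satisfies the recurrence \eqref{equation:generalrecurrenceCKI}, which is precisely the recurrence defining the eigenvalues of a Krawtchouk association scheme. Therefore the only remaining task is to verify that the polynomials $C_k(x,n)$ agree with the eigenvalues $P_k(x,n)$ on the two initial slices $k=0$ and $x=0$, and then appeal to uniqueness.

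First I would check that $C_0(x,n) = 1$. The defining sum contains only the single term at $j=0$, and every factor in that term reduces to $1$: the Gaussian coefficients $\bbinom{x}{0}$ and $\bbinom{n-x}{0}$ are each $1$ by convention, $b^{\sigma_0} = 1$, and $\gamma_{b,c}(n,0)$ is an empty product and hence also equals $1$. This matches \eqref{equation:generalinitalconditions2}.

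Next I would verify $C_k(0,n) = \bbinom{n}{k}\gamma_{b,c}(n,k)$. Substituting $x=0$ into the definition of $C_k(x,n)$, the factor $\bbinom{0}{j}$ vanishes for every $j \geq 1$ because the $i=0$ numerator $b^0 - b^0 = 0$ occurs in the defining product, whereas $\bbinom{0}{0} = 1$. Only the $j=0$ term therefore survives, collapsing the sum to $\bbinom{n}{k}\gamma_{b,c}(n,k)$, which is exactly \eqref{equation:generalinitalconditions1}.

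With both initial conditions verified and the recurrence already established in the previous proposition, Delsarte's uniqueness result (invoked in the paragraph following equation \eqref{equation:recurrencerelation}) forces $C_k(x,n) = P_k(x,n)$ for all admissible $x,k,n$. The main difficulty was completed in the previous proposition, whose somewhat intricate manipulation of $b$-nary Gaussian and gamma identities delivered the recurrence; the present lemma is the clean corollary one obtains once the two initial values are checked by inspection.
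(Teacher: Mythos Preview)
Your proposal is correct and mirrors the paper's own proof almost exactly: both verify the two initial conditions $C_0(x,n)=1$ and $C_k(0,n)=\bbinom{n}{k}\gamma_{b,c}(n,k)$ by direct inspection (noting $\bbinom{0}{j}=0$ for $j\geq 1$), rely on the preceding proposition for the recurrence, and then appeal to uniqueness. The only cosmetic difference is that you make the invocation of Delsarte's uniqueness theorem more explicit than the paper does.
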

\begin{proof}
The $C_k(x,n)$ satisfy the recurrence relation \eqref{equation:generalrecurrenceCKI} and the initial values of the $C_k(x,n)$ are 
\begin{align}
    C_k(0,n) & = \sum_{j=0}^k (-1)^j b^{jn}b^{\sigma_j}\bbinom{0}{j}\bbinom{n}{k-j}\gamma_{b,c}(n-j, k-j)\\
    & = \bbinom{n}{k}\gamma_{b,c}(n,k)
\end{align}
as $\displaystyle\bbinom{0}{j}=0$ unless $j=0$ and
\begin{align}
    C_0(x,n) 
        & = (-1)^0 b^{0} b^{0}\bbinom{x}{0}\bbinom{n-x}{0}\gamma_{b,c}(n,0)
    \\
        & = 1
\end{align}
as required.
\end{proof}

We note that this form for the eigenvalues is distinct from the three forms presented in \cite[Section 5.1]{delsartereccurance}. 
\begin{exmp}\label{example:differentforms}
    Consider the association scheme of skew-symmetric matrices ($b=q^2,$ $c=q^{-1}$) with $t=4$, then $n=2$ and $m=3$. We let the 3 forms presented in \cite[Section 5.1]{delsartereccurance}, starting with Equation $(15)$, be $P_k(x,n)$, $Q_k(x,n)$ and $R_k(x,n)$ in the order they appear in the paper. Then looking term by term we have the resulting Table \ref{tab:example1} for $k=1$ and $x=1$. 
    \begin{table}[H]
        \centering
        \renewcommand{\arraystretch}{1.5}
        $\begin{array}{c|c|c|c}
        \text{Eigenvalues} & j=0 & j=1 & \sum_{j=0}^1
        \\
        \hline
        C_1(1,2) & q^3-1 & -q^2 & q^3-q^2-1 \\
        P_1(1,2) & -q^2-1 & q^3 & q^3-q^2-1 \\
        Q_1(1,2) & q^3-q^2 & -1 & q^3-q^2-1 \\
        R_1(1,2) & \left(q^2+1\right)\left(q^3-1\right) & -q^5 & q^3-q^2-1 \\
    \end{array}$
        \caption{Components of the eigenvalues for $C_k(x,n)$ compared to others}
        \label{tab:example1}
    \end{table}
    We can clearly see in this example that the sum of the terms is the same, but the individual components cannot be equated on a term by term basis. 
\end{exmp}
    
\subsection{Weight Functions}

Given that we are only working with translation association schemes where the set of points $\mathscr{X}$ is a vector space, we can always attribute a weight function for that scheme since we will always have a distance between points and a $0$ element. Mathematically speaking, if we let  $(\mathscr{X},R)$ be an $n$-class translation scheme, we say that if $x,y$ are $n$ distance apart, then $(x,y)\in R_n$. Since $\mathscr{X}$ is a vector space, then $x-y,0\in\mathscr{X}$. Consequently, since $x,y$ are distance $n$ apart, then $(x-y,0)\in R_n$ also. 
\begin{defn}
    For an $(\mathscr{X},R)$ $n$-class Krawtchouk association scheme and $x\in\mathscr{X}$, we define the \textbf{\textit{scheme weight}} of $x$ to be $\omega$ if and only if $(x,0)\in R_\omega$.
\end{defn}

\begin{defn}
    For an $(\mathscr{X},R)$ $n$-class Krawtchouk association scheme, and for all $x\in\mathscr{X}$ of weight $\omega$, the \textbf{\textit{scheme weight function}} of $x$, denoted $f_S(x)$, is defined as the homogeneous polynomial 
    \begin{equation}
        f_S(x) = Y^{\omega}X^{n-\omega}.
    \end{equation}
    Now let $\mathscr{C}\subseteq\mathscr{X}$ be a code. Suppose there are $c_i$ codewords in $\mathscr{C}$ with weight $i$ for $0\leq i \leq n$. Then the \textbf{\textit{scheme weight enumerator}} of $\mathscr{C}$, denoted $W_{\mathscr{C}}^{S}(X,Y)$, is defined as,
    \begin{equation}
        W_{\mathscr{C}}^{S}(X,Y) = \sum_{\zeta\in\mathscr{C}}f_S(\zeta)=\sum_{i=0}^{n}c_iY^{i}X^{n-i}.
    \end{equation}
    The $(n+1)$-tuple, $\boldsymbol{c}=(c_0,\ldots,c_n)$ of coefficients of the weight enumerator is called the \textbf{\textit{scheme weight distribution}} of the code $\mathscr{C}$.
\end{defn}

We note that since we are only working with metric association schemes, we can always define the \textbf{\textit{minimum distance}} of a code $\mathscr{C}$. Denoted $d_S(\mathscr{C})$ or $d_S$, it is simply the minimum distance between all possible pairs of codewords in $\mathscr{C}$, dependent on the metric being used. 

If we were to look at each individual association scheme (such as the skew rank association scheme) we would be able to look at counting the number of elements of a particular weight in the overall space using a combinatorial approach. In contrast here we use the valencies of the association scheme to identify those values in general.
\begin{thm}
    For $b,c\in\mathbb{R}$, $b\in\mathbb{R}$, $b\neq1$, $cb>1$ the number elements, $x\in\mathscr{X}$ with weight $\omega$ in an $(\mathscr{X},R)$ $n$-class Krawtchouk association scheme is
    \begin{equation}\label{equation:generalnumberofelements}
        \xi_{n,\omega}=\bbinom{n}{\omega}\gamma_{b,c}(n,\omega).
    \end{equation}
\end{thm}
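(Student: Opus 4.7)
The plan is to identify $\xi_{n,\omega}$ with the valency $v_\omega$ of the relation $R_\omega$ and then read off its value from the already-known initial conditions for the eigenvalues.

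First I would observe that by definition of the scheme weight, an element $x\in\mathscr{X}$ has weight $\omega$ if and only if $(x,0)\in R_\omega$. Therefore
\[
\xi_{n,\omega}=\bigl|\{x\in\mathscr{X}:(x,0)\in R_\omega\}\bigr|
\]
which is precisely the valency $v_\omega=c_{\omega\omega 0}$ of the relation $R_\omega$ (applied at the fixed point $0\in\mathscr{X}$, which is legitimate since $\mathscr{X}$ is a translation scheme and the count is independent of the base point).

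Next I would invoke the eigenmatrix identity $p_k(0)=v_k$ from the theorem on the properties of $P$ and $Q$. Combining with Lemma \ref{lemma:generalckiequalspki}, which says $p_k(x)=P_k(x,n)=C_k(x,n)$, I obtain
\[
\xi_{n,\omega}=v_\omega=p_\omega(0)=P_\omega(0,n).
\]
Finally, the initial value \eqref{equation:generalinitalconditions1} in the definition of a Krawtchouk association scheme gives directly
\[
P_\omega(0,n)=\bbinom{n}{\omega}\gamma_{b,c}(n,\omega),
\]
which is exactly the claimed formula. There is no real obstacle here: the whole content of the statement has already been encoded in the initial conditions that define a Krawtchouk scheme, so the proof is essentially a two-line chain of identifications (weight count $=$ valency $=$ $p_\omega(0)$ $=$ initial value of the $b$-Krawtchouk polynomial). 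The only subtlety worth flagging is that the count is base-point independent, which follows from the translation-invariance of the scheme (noted earlier in the paper when translation schemes are introduced).
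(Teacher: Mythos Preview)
Your proposal is correct and follows essentially the same approach as the paper: identify $\xi_{n,\omega}$ with the valency $v_\omega$, then use $v_\omega=p_\omega(0)=P_\omega(0,n)$ together with the defining initial condition \eqref{equation:generalinitalconditions1}. The paper's proof is simply a two-sentence version of the same chain of identifications you spell out.
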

\begin{proof}
    The number of elements of weight $\omega$, is the $\omega^{th}$ valency of the Krawtchouk association scheme. Since the $\omega^{th}$ valency is the initial value $P_{\omega}(0,n)$, the statement is proved. 
\end{proof}
A direct consequence of this is the ability to find the scheme weight enumerator of $\mathscr{X}$, denoted $\Omega_{n}$, as
\begin{equation}\label{equation:generalomega}
    \Omega_n = \sum_{i=0}^{n} \xi_{n,i}Y^{i}X^{n-i}.
\end{equation}

\section{The \texorpdfstring{$b$}{b}-Algebra}\label{section:thebalgebra}

Having studied some individual association schemes, it was noticed that the underlying algebra was different in each case. Thus here we define a more general $b$-algebra that can be used for each corresponding Krawtchouk association scheme in a unified form. This helps us express the relations between the weight enumerator of a code and the weight enumerator of the code's dual. 

\subsection{The \texorpdfstring{$b$}{b}-Product, \texorpdfstring{$b$}{b}-Power and \texorpdfstring{$b$}{b}-Transform}

\begin{defn}\label{b-proddefn}
    Let
    \begin{align}
        a(X,Y;\lambda) 
            & = \sum_{i=0}^{r} a_i(\lambda) Y^{i} X^{r-i}
            \\
        b(X,Y;\lambda) 
            & = \sum_{i=0}^{s} b_i(\lambda) Y^{i} X^{s-i}
    \end{align}
    be two homogeneous polynomials in $X$ and $Y$ with coefficients $a_i(\lambda)$ and $b_i(\lambda)$ respectively, which are real functions of $\lambda$ and are $0$ unless otherwise specified. For example $b_i(\lambda)=0$ if $i\notin\left\{0,1,\ldots,s\right\}$. The \textbf{\textit{$b$-product}}, $\ast$, of $a(X,Y;\lambda)$ and $b(X,Y;\lambda)$ is defined as
    \begin{align}
        c(X,Y;\lambda) 
            & = a(X,Y;\lambda) \ast b(X,Y;\lambda)\label{equation:bproduct}
            \\
            & = \sum_{u=0}^{r+s} c_u(\lambda) Y^{u} X^{r+s-u} \label{equation:bproductexplicit}
    \end{align} 
    with
    \begin{equation}
        c_u(\lambda) = \sum_{i=0}^{u} b^{is} a_i(\lambda)b_{u-i}(\lambda -i).
    \end{equation}
\end{defn}
We note that as with the $q$-product in \cite[Lemma 1]{gadouleau2008macwilliams}, the $b$-product is not commutative or distributive in general. However, if $a(X,Y;\lambda)=a$ is a constant independent of $\lambda$, the following two property holds:
\begin{equation}
    a\ast b(X,Y;\lambda) = b(X,Y;\lambda) \ast a = ab(X,Y;\lambda).
\end{equation}
Separately if the degree of $a(X,Y;\lambda)$ and $c(X,Y;\lambda)$ are the same then,
\begin{align}
    \bigg(a(X,Y;\lambda) + c(X,Y;\lambda)\bigg)\ast b(X,Y;\lambda) & = a(X,Y;\lambda)\ast b(X,Y;\lambda) 
    \\
    & \hspace{1cm} + c(X,Y;\lambda)\ast b(X,Y;\lambda)
\end{align}
and 
\begin{align}
    b(X,Y;\lambda) \ast \bigg(a(X,Y;\lambda) + c(X,Y;\lambda)\bigg) 
        & = b(X,Y;\lambda) \ast a(X,Y;\lambda)\\
        & \hspace{1cm} + b(X,Y;\lambda) \ast c(X,Y;\lambda).
\end{align}
\begin{defn}
    For $a(X,Y;\lambda) = \displaystyle\sum_{i=0}^{r} a_{i}(\lambda) Y^{i} X^{r-i}$ the \textbf{\textit{$b$-power}} is defined by 
    \begin{align}
            a^{[0]}(X,Y;\lambda) & = 1 \\
            a^{[1]}(X,Y;\lambda) & = a(X,Y;\lambda) \\
            a^{[k]}(X,Y;\lambda) & = a(X,Y;\lambda) \ast a^{[k-1]}(X,Y;\lambda) \quad \text{for } k \geq 2.
    \end{align}
\end{defn}

\begin{defn}[{\cite[Definition 4]{gadouleau2008macwilliams}}]\label{defn:btransform}
    Let $a(X,Y;\lambda) = \displaystyle\sum_{u=0}^{r} a_{i}(\lambda) Y^{i} X^{r-i}$. We define the \textbf{\textit{$b$-transform}} to be the homogeneous polynomial
    \begin{equation}
        \overline{a}(X,Y;\lambda) = \sum_{i=0}^{r} a_{i}(\lambda) Y^{[i]} \ast X^{[r-i]}.
    \end{equation}
\end{defn}

\subsection{Fundamental Polynomials}

Again, having studied some of the individual association schemes, it was important to also generalise the homogeneous polynomials which are used in the statement and proof of the MacWilliams Identity as a functional transform. 
Let
\begin{equation}\label{equation:generalmu}
    \mu(X,Y;\lambda) = X + \left(cb^{\lambda}-1\right)Y.
\end{equation}
The $b$-powers of $\mu(X,Y;n)$ provide an explicit form for the weight enumerator of $(\mathscr{X},R)$, the Krawtchouk association scheme with $n$ classes.
\begin{thm}\label{theorem:bmu}
    If $\mu(X,Y;\lambda)$ is as defined above, then
    \begin{equation} \label{equation:bmuformula}
        \mu^{[k]}(X,Y;\lambda) = \sum_{u=0}^{k} \mu_{u}(\lambda,k) Y^{u} X^{k-u} \quad \text{for } k \geq 1,
    \end{equation}
    where
    \begin{equation}
        \mu_{u}(\lambda,k) =\bbinom{k}{u}\gamma_{b,c}(\lambda,u).
    \end{equation}
    Specifically, the weight enumerators for $(\mathscr{X},R)$, the $n$-class Krawtchouk association scheme, denoted by $\Omega_{n}$, is given by
    \begin{equation}
        \Omega_{n} = \mu^{[n]}(X,Y;n).
    \end{equation}
\end{thm}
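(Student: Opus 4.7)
The plan is to prove \eqref{equation:bmuformula} by induction on $k$ and then deduce the weight enumerator statement by matching coefficients with \eqref{equation:generalomega}. The base case $k=1$ is immediate: $\mu_0(\lambda,1)=\bbinom{1}{0}\gamma_{b,c}(\lambda,0)=1$ and $\mu_1(\lambda,1)=\bbinom{1}{1}\gamma_{b,c}(\lambda,1)=cb^{\lambda}-1$, which recovers $\mu(X,Y;\lambda)$.

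For the inductive step, I would assume the formula holds for some $k\geq 1$ and compute $\mu^{[k+1]}(X,Y;\lambda) = \mu(X,Y;\lambda)\ast \mu^{[k]}(X,Y;\lambda)$ using Definition \ref{b-proddefn}. Taking $a=\mu$ (so $r=1$, $a_0(\lambda)=1$, $a_1(\lambda)=cb^{\lambda}-1$) and $b=\mu^{[k]}$ (so $s=k$, $b_i(\lambda)=\mu_i(\lambda,k)$), the general coefficient collapses to only two terms because $a_i=0$ for $i\geq 2$:
\begin{equation*}
c_u(\lambda) = \mu_u(\lambda,k) + b^{k}\bigl(cb^{\lambda}-1\bigr)\mu_{u-1}(\lambda-1,k) = \bbinom{k}{u}\gamma_{b,c}(\lambda,u) + b^{k}\bigl(cb^{\lambda}-1\bigr)\bbinom{k}{u-1}\gamma_{b,c}(\lambda-1,u-1).
\end{equation*}

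The key manipulation is to factor out $\gamma_{b,c}(\lambda,u)$ from both terms. Applying identity \eqref{equation:gammastepdown} with the substitution $x=\lambda-1$, $k\mapsto u-1$ gives $\gamma_{b,c}(\lambda,u) = (cb^{\lambda}-1)b^{u-1}\gamma_{b,c}(\lambda-1,u-1)$, so the second summand becomes $b^{k-u+1}\bbinom{k}{u-1}\gamma_{b,c}(\lambda,u)$. The Gaussian-coefficient identity \eqref{equation:Stepdown1}, applied with $x=k+1$ in place of $x$, then gives $\bbinom{k}{u}+b^{k+1-u}\bbinom{k}{u-1}=\bbinom{k+1}{u}$, which collapses the bracket and yields $c_u(\lambda)=\bbinom{k+1}{u}\gamma_{b,c}(\lambda,u)=\mu_u(\lambda,k+1)$, completing the induction.

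For the final claim, setting $k=n$ and $\lambda=n$ gives $\mu^{[n]}(X,Y;n)=\sum_{u=0}^{n}\bbinom{n}{u}\gamma_{b,c}(n,u)Y^u X^{n-u}$. By Theorem in \eqref{equation:generalnumberofelements}, $\xi_{n,u}=\bbinom{n}{u}\gamma_{b,c}(n,u)$, so by \eqref{equation:generalomega} the coefficients agree term by term with $\Omega_n$. The main obstacle is bookkeeping: choosing the correct index shifts so that identities \eqref{equation:gammastepdown} and \eqref{equation:Stepdown1} line up exactly with the exponent $b^{k-u+1}$ arising from the product definition; other than that, the argument is a clean two-step simplification driven by the two recurrences for $\gamma_{b,c}$ and the Gaussian coefficients.
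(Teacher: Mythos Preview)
Your proof is correct and follows essentially the same inductive strategy as the paper: verify $k=1$, expand $\mu\ast\mu^{[k]}$ via the $b$-product, reduce the two surviving terms using \eqref{equation:gammastepdown}, and then collapse the Gaussian coefficients. The only cosmetic difference is that you invoke the additive recurrence \eqref{equation:Stepdown1} directly, whereas the paper rewrites both $\bbinom{k}{u}$ and $\bbinom{k}{u-1}$ via \eqref{equation:gaussianfracx-1k} and then simplifies the resulting fractions; your route is marginally cleaner but not substantively different.
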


\begin{proof}
    The proof follows the method of induction. Consider $k=1$, so 
    \begin{equation}
        \mu^{[1]}(X,Y;\lambda) = \mu(X,Y;\lambda) = X + \left(cb^{\lambda}-1\right)Y.
    \end{equation}
    Then
    \begin{align}
        \mu_{0}(\lambda,1) 
            & = 1 = \bbinom{1}{0}\gamma_{b,c}(\lambda,0) 
            \\
        \mu_{1}(\lambda,1) 
            & = \left(cb^{\lambda}-1\right) = \bbinom{1}{1}\gamma_{b,c}(\lambda,1).
    \end{align}
    So 
    \begin{equation}
        \mu_{u}(\lambda,1) = \bbinom{1}{u}\gamma_{b,c}(\lambda,u)
    \end{equation}
    and
    \begin{equation}
        \Omega_{1} = \sum_{i=0}^{1}\bbinom{1}{u} \gamma_{b,c}(\lambda,u)Y^{i}X^{1-i}=\mu^{[1]}(X,Y;1)
    \end{equation}
    as required for $k=1$. Now assume  the theorem is true for $k \geq 1$. Then 
    \begin{align}
        \mu^{[k+1]}(X,Y;\lambda) 
            & = \mu(X,Y;\lambda) \ast \mu^{[k]} (X,Y;\lambda)
            \\
            & = \left(X+\left(cb^\lambda -1\right)Y\right) \ast \left(\sum_{u=0}^{k} \bbinom{k}{u}\gamma_{b,c}(\lambda,u) Y^{u} X^{k-u} \right)
            \\
            & = \sum_{i=0}^{k+1} f_i(\lambda) Y^{i} X^{k+1-i}
    \end{align}
    where, 
    \begin{align}
        f_i(\lambda) & = \sum_{j=0}^{i} b^{jk} \mu_{j}(\lambda,1) \mu_{i-j}(\lambda-j,k)
            \\
            & = \mu_{0}(\lambda,1) \mu_{i}(\lambda,k) + b^{k} \mu_{1}(\lambda,1) \mu_{i-1}(\lambda-1,k)
            \\
            & = \bbinom{k}{i} \gamma_{b,c}(\lambda,i) + b^{k}\left( cb^{\lambda}-1\right) \bbinom{k}{i-1} \gamma_{b,c}(\lambda-1,i-1) 
            \\
            \overset{\eqref{equation:gammastepdown}\eqref{equation:gaussianfracx-1k}}&{=} \frac{b^{k-i+1}-1}{b^{k+1}-1} \bbinom{k+1}{i} \gamma_{b,c}(\lambda,i) \overset{\eqref{equation:gaussianfracx-1k}}{+} b^{k} \frac{b^{i}-1}{b^{k+1}-1} b^{1-i} \bbinom{k+1}{i} \gamma_{b,c}(\lambda,i) 
            \\
            & = \gamma_{b,c}(\lambda,i) \bbinom{k+1}{i}\left( \frac{b^{k-i+1}-1+b^{k-i+1}\left(b^{i}-1\right)}{b^{k+1}-1} \right)
            \\
            & = \gamma_{b,c}(\lambda,i) \bbinom{k+1}{i}
    \end{align}
    since for $i\geq1$ we only need to consider the first two coefficients as when $j\geq 2$ then $\mu_{j}(\lambda,1)=0$. So it is true for $k+1$. Therefore by induction the first part of the theorem is true. Now consider $\mu^{[n]}(X,Y;n)$, then clearly
    \begin{align}
        \mu^{[n]}(X,Y;n) & = \sum_{u=0}^{n}\bbinom{n}{u}\gamma_{b,c}(n,u)Y^{u}X^{n-u}
        \\
        \overset{\eqref{equation:generalnumberofelements}}&{=} \sum_{u=0}^{n}\xi_{n,u}Y^{u}X^{n-u} \overset{\eqref{equation:generalomega}}{=} \Omega_n
    \end{align}
    as required.
\end{proof}

Now for the other fundamental polynomial. Interestingly we let $\nu(X,Y;\lambda) = X - Y$, which is the exact same polynomial for all Krawtchouk association schemes previously studied. 

\begin{thm}\label{lemma:bnulemma}
For all $k\geq 1$,
    \begin{equation}\label{equation:bnuformula}
        \nu^{[k]}(X,Y;\lambda) = \sum_{u=0}^{k} \nu_u(\lambda,k) Y^{u} X^{k-u}
    \end{equation}
    where
    \begin{equation}
        \nu_u(\lambda,k) = (-1)^{u}b^{\sigma_{u}} \bbinom{k}{u}.
    \end{equation}
\end{thm}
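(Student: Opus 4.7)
The plan is to prove this by induction on $k$, following the same structure as the proof of Theorem \ref{theorem:bmu} for $\mu^{[k]}$, since $\nu^{[k]}$ is defined recursively as $\nu \ast \nu^{[k-1]}$ and the $b$-product acts nicely on homogeneous polynomials whose coefficients are independent of $\lambda$.

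For the base case $k=1$, I would simply check that $\nu^{[1]}(X,Y;\lambda) = X - Y$ matches the claimed formula: the coefficients $\nu_0(\lambda,1) = (-1)^0 b^{\sigma_0}\bbinom{1}{0} = 1$ and $\nu_1(\lambda,1) = (-1)^1 b^{\sigma_1}\bbinom{1}{1} = -1$ agree. For the inductive step, assuming the formula holds for $k$, I would compute
\begin{equation*}
\nu^{[k+1]}(X,Y;\lambda) = \nu(X,Y;\lambda) \ast \nu^{[k]}(X,Y;\lambda)
\end{equation*}
using Definition \ref{b-proddefn} with $r=1$, $s=k$. Since $\nu$ has only the two coefficients $a_0 = 1$ and $a_1 = -1$ (both independent of $\lambda$), and since the inductive hypothesis gives coefficients $\nu_u(\lambda,k)$ also independent of $\lambda$, the general $b$-product sum collapses to the two-term expression
\begin{equation*}
\nu_u(\lambda,k+1) = \nu_u(\lambda,k) - b^{k}\, \nu_{u-1}(\lambda-1,k) = (-1)^{u}b^{\sigma_{u}}\bbinom{k}{u} + (-1)^{u}b^{k+\sigma_{u-1}}\bbinom{k}{u-1}.
\end{equation*}

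The goal is then to show this equals $(-1)^{u} b^{\sigma_u} \bbinom{k+1}{u}$. After factoring out $(-1)^u b^{\sigma_{u-1}}$ and using $\sigma_u - \sigma_{u-1} = u-1$, what remains is the purely $b$-ary identity
\begin{equation*}
b^{u-1}\bbinom{k}{u} + b^{k}\bbinom{k}{u-1} = b^{u-1}\bbinom{k+1}{u},
\end{equation*}
which is exactly identity \eqref{equation:Stepdown1} applied with $x = k+1$ (giving $\bbinom{k+1}{u} = \bbinom{k}{u} + b^{k+1-u}\bbinom{k}{u-1}$) and then multiplied through by $b^{u-1}$.

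I expect no serious obstacle here: the proof is essentially bookkeeping once one lines up the exponents of $b$. The one place requiring care is the combinatorial identity manipulation, specifically keeping track of $\sigma_u - \sigma_{u-1} = u-1$ so that the factors of $b$ combine correctly to match the right-hand side of \eqref{equation:Stepdown1}. Unlike the $\mu^{[k]}$ proof, there is no $\gamma_{b,c}$ factor to propagate, which makes this computation strictly simpler and means only one of the step-down identities for Gaussian coefficients is required.
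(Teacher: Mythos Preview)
Your proposal is correct and follows essentially the same inductive argument as the paper's proof: check $k=1$, expand $\nu^{[k+1]}=\nu\ast\nu^{[k]}$ via the $b$-product to obtain the two-term recursion in the coefficients, and then collapse it with a Gaussian step-down identity. The only cosmetic difference is that you invoke \eqref{equation:Stepdown1} directly, whereas the paper routes the same simplification through \eqref{equation:gaussianfracx-1k} and \eqref{equation:beta1stepdown} before recombining; your path is slightly more economical but not substantively different.
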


\begin{proof}
    We perform induction on $k$. For $k=1$ we have
    \begin{equation}
        \nu^{[1]}(X,Y;\lambda) = \nu(X,Y;\lambda) = X - Y.
    \end{equation}
    Clearly we also have
    \begin{equation}
        (-1)^{0}b^{\sigma_0}\bbinom{1}{0}Y^{0}X^{1} + (-1)^{1}b^{\sigma_1}\bbinom{1}{1}Y^{1}X^{0} = X - Y
    \end{equation}
    as required.
    Now assume the theorem holds for $k\geq 1$.
    \begin{align}
        \nu^{[k+1]}(X,Y;\lambda) 
            & = \nu(X,Y;\lambda) \ast \nu^{[k]} (X,Y;\lambda)
            \\
            & = (X-Y) \ast \left(\sum_{u=0}^{k} (-1)^{u} b^{\sigma_u} \bbinom{k}{u} Y^{u} X^{k-u} \right)
            \\
            & = \sum_{i=0}^{k+1} g_i(\lambda) Y^{i} X^{k+1-i}
    \end{align}
    where
    \begin{align}
        g_i(\lambda) 
            & = \sum_{j=0}^{i} b^{jk} \nu_{j}(\lambda,1)\nu_{i-j}(\lambda-j,k) \\
            & = b^{0}(1)(-1)^{i}b^{\sigma_{i}}\bbinom{k}{i} + b^{k}(-1)(-1)^{i-1}b^{\sigma_{i-1}}\bbinom{k}{i-1}
            \\
        \overset{\eqref{equation:gaussianfracx-1k}}&{=} (-1)^{i}b^{\sigma_{i}}\frac{b^{k+1-i}-1}{b^{k+1}-1}\bbinom{k+1}{i} \overset{\eqref{equation:beta1stepdown}}{+} b^{k}(-1)^{i}b^{\sigma_{i-1}}\frac{b^{i}-1}{b^{k+1}-1}\bbinom{k+1}{i}
            \\
            & = (-1)^{i}b^{\sigma_i}\bbinom{k+1}{i}\left\{ \frac{b^{k+1-i}-1}{b^{k+1}-1} + b^{k}b^{1-i}\frac{b^{i}-1}{b^{k+1}-1}\right\} 
            \\
            & = (-1)^{i}\frac{b^{\sigma_{i}}}{b^{k+1}-1}\bbinom{k+1}{i}\left\{b^{k+1-i} - 1 + b^{k+1} - b^{k+1-i}\right\}
            \\
            & = (-1)^{i}b^{\sigma_{i}}\bbinom{k+1}{i}
    \end{align}
    since for $i\geq1$ we only need to consider the first two coefficients as when $j\geq 2$ then $\nu_{j}(\lambda,1)=0$, thus the statement holds.  
\end{proof}

\section{The Generalised MacWilliams Identity}\label{section:generalisedMacWilliams}
We can now begin to put the final pieces of the puzzle together to be able to state and prove the MacWilliams Identity for a $n$-class Krawtchouk association scheme. Since we have proven that the $C_k(x,n)$ are also eigenvalues of the Krawtchouk association scheme, we can then invoke Delsarte's MacWilliams Identity (Theorem \ref{MacWilliamsDelsarte}) in the proof of our functional transform version.

\subsection{Generalised MacWilliams Identity}

Now we can write a generalised MacWilliams Identity as a functional transform for an $(\mathscr{X},R)$ $n$-class Krawtchouk association scheme.
Let the weight enumerator of $\mathscr{C}\subseteq\mathscr{X}$ be,
\begin{equation}
    W_{\mathscr{C}}^{S}(X,Y)=\sum_{i=0}^n c_i Y^{i} X^{n-i}
\end{equation}
and of its dual, $\mathscr{C}^{\perp}\subseteq\mathscr{X}$ be
\begin{equation}
    W_{\mathscr{C}^\perp}^{S}(X,Y)=\sum_{i=0}^n c_i' Y^{i} X^{n-i}.
\end{equation}

\begin{thm}[The MacWilliams Identity for an $n$-class Krawtchouk Association Scheme]\label{ASSOCIATIONMacWilliams}
    Let $\mathscr{C}\subseteq  \mathscr{X}$ be an linear $[n,k,d_S]$-code, with weight distribution $\boldsymbol{c}=(c_0,\ldots,c_n)$ and $\mathscr{C}^\perp\subseteq  \mathscr{X}$ its dual code, with weight distribution $\boldsymbol{c'}=(c_0',\ldots,c_n')$.
    Then
    \begin{align}
        W_{\mathscr{C}^\perp}^{S}(X,Y)
            & =\frac{1}{\left| \mathscr{C}\right|}\overline{W}_{\mathscr{C}}^{S}\left( X +(cb^n-1)Y, X-Y\right)\label{equation:generalMacWilliamsFunctional}
            \\
            & = \frac{1}{|\mathscr{C}|}\sum_{i=0}^n c_i(X-Y)^{[i]}\ast\left(X+\left(cb^n-1\right)Y\right)^{[n-i]}.\label{equation:generalrewriteMacWills}
    \end{align}
\end{thm}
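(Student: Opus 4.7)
The plan is to derive the functional transform directly from Delsarte's MacWilliams Identity (Theorem \ref{MacWilliamsDelsarte}) by expanding the right-hand side of \eqref{equation:generalrewriteMacWills} as a polynomial in $X$ and $Y$ and identifying its coefficients with those of $W^S_{\mathscr{C}^\perp}(X,Y)$. Since the Krawtchouk association scheme is formally self-dual we have $Q = P$, so Delsarte's identity combined with Lemma \ref{lemma:generalckiequalspki} gives
\begin{equation*}
c'_k = \frac{1}{|\mathscr{C}|}\sum_{i=0}^n c_i\, p_k(i) = \frac{1}{|\mathscr{C}|}\sum_{i=0}^n c_i\, C_k(i,n).
\end{equation*}
It therefore suffices to prove that for every $i, k \in \{0,\ldots,n\}$, the coefficient of $Y^k X^{n-k}$ in $(X-Y)^{[i]}\ast(X+(cb^n-1)Y)^{[n-i]}$ equals $C_k(i,n)$; summing over $i$ with weights $c_i/|\mathscr{C}|$ then reproduces $W^S_{\mathscr{C}^\perp}(X,Y)$, and \eqref{equation:generalMacWilliamsFunctional} follows from the definition of the $b$-transform (Definition \ref{defn:btransform}).

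To carry out this coefficient extraction, I would identify the two factors with the fundamental polynomials: $(X-Y)^{[i]} = \nu^{[i]}(X,Y;n)$ and $(X+(cb^n-1)Y)^{[n-i]} = \mu^{[n-i]}(X,Y;n)$, and apply the explicit expansions from Theorems \ref{lemma:bnulemma} and \ref{theorem:bmu}, whose coefficients are
\begin{equation*}
\nu_j(n,i) = (-1)^j b^{\sigma_j}\bbinom{i}{j}, \qquad \mu_{k-j}(n-j,n-i) = \bbinom{n-i}{k-j}\gamma_{b,c}(n-j,k-j).
\end{equation*}
Invoking the $b$-product formula of Definition \ref{b-proddefn} with $s = n-i$, and respecting the $\lambda \mapsto \lambda-j$ shift on the second factor before specialising to $\lambda = n$, the coefficient of $Y^k X^{n-k}$ is
\begin{equation*}
\sum_{j=0}^k b^{j(n-i)}\nu_j(n,i)\mu_{k-j}(n-j,n-i) = \sum_{j=0}^k (-1)^j b^{j(n-i)+\sigma_j}\bbinom{i}{j}\bbinom{n-i}{k-j}\gamma_{b,c}(n-j,k-j),
\end{equation*}
which is precisely the definition of $C_k(i,n)$. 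The result then follows by linearity and the Delsarte expression for $c'_k$ above.

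The main delicate point is the bookkeeping of the $\lambda$-shift built into the $b$-product. The factor $b^{j(n-i)}$ arises from the exponent $is = j(n-i)$ in Definition \ref{b-proddefn}, while the shift $\lambda \mapsto \lambda-j$ is exactly what converts $\gamma_{b,c}(\lambda, k-j)$ into $\gamma_{b,c}(n-j, k-j)$ at $\lambda = n$; without this shift one would not recover the $C_k(i,n)$ formula on the nose. Once this alignment is recognised, the remainder of the argument is routine summation manipulation together with the formal self-duality hypothesis and a direct appeal to Theorem \ref{MacWilliamsDelsarte}.
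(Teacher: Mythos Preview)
Your proposal is correct and follows essentially the same route as the paper: identify the two factors with $\nu^{[i]}$ and $\mu^{[n-i]}$, use Theorems \ref{theorem:bmu} and \ref{lemma:bnulemma} together with the $b$-product formula to extract the coefficient $C_k(i,n)$ of $Y^kX^{n-k}$, and then invoke Delsarte's identity (Theorem \ref{MacWilliamsDelsarte}) with formal self-duality. Your explicit remark about the $\lambda\mapsto\lambda-j$ shift (computing the $b$-product as a function of $\lambda$ and only then setting $\lambda=n$) is exactly the bookkeeping the paper performs implicitly, and is the one genuinely delicate step.
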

\begin{proof}
For $0\leq i\leq n$ we have
\begin{align}
    \left(X-Y\right)^{[i]} 
        & \ast\left(X+\left(cb^n-1\right)Y\right)^{[n-i]}
        \\
        & = \left(\nu^{[i]}(X,Y;n)\right) \ast \left(\mu^{[n-i]}(X,Y;n)\right)
        \\
        \overset{\eqref{equation:bmuformula}\eqref{equation:bnuformula}}&{=}\left(\sum_{u=0}^i (-1)^u b^{\sigma_{u}}\bbinom{i}{u}Y^{u}X^{i-u}\right)\ast\left(\sum_{j=0}^{n-i}\bbinom{n-i}{j}\gamma_{b,c}(n,j)Y^{j}X^{n-i-j}\right)
        \\
        \overset{\eqref{equation:bproduct}}&{=}
        \sum_{k=0}^t\left(\sum_{\ell=0}^{k} (-1)^\ell b^{\ell(n-x)} b^{\sigma_\ell}\bbinom{x}{\ell}\bbinom{n-x}{k-\ell}\gamma_{b,c}(n-\ell,k-\ell)\right)Y^{k}X^{n-k}
        \\
        & = \sum_{k=0}^n C_k(i,n)Y^{k}X^{n-k}.
\end{align}
So then we have
\begin{align}
    \dfrac{1}{\left\vert\mathscr{C} 
    \right\vert} \overline{W}^{S}_\mathscr{C}\left(X+\left(cb^n-1\right)Y, X-Y\right)
        & = \dfrac{1}{\left\vert\mathscr{C}\right\vert}\sum_{i=0}^n c_i\left(X-Y\right)^{[i]}\ast\left(X+\left(cb^n-1\right)Y\right)^{[n-i]}
        \\
        & = \dfrac{1}{\left\vert\mathscr{C}\right\vert}\sum_{i=0}^n c_i \sum_{k=0}^n C_k(i,n)Y^kX^{n-k}
        \\
        & =\sum_{k=0}^n\left(\dfrac{1}{\left\vert\mathscr{C}\right\vert}\sum_{i=0}^n c_i C_k(i,n)\right)Y^{k}X^{n-k}
        \\
        \overset{\eqref{MacWilliamsDelsarte}}&{=} \sum_{k=0}^n c_k' Y^{k}X^{n-k}
        \\
        & = W_{\mathscr{C}^\perp}^{S}(X,Y).
\end{align}
\end{proof}
\section{The \texorpdfstring{$b$}{b}-Derivatives}\label{section:bderivatives}

In this section we develop a derivative. It should be clear that this section essentially reworks \cite[Section 5]{friedlander2023macwilliams} with a more general $b$-algebra. 

\subsection{The \texorpdfstring{$b$}{b}-Derivative}
To begin with, we consider the derivative with respect to $X$.  
\begin{defn}
For $b\neq 1$, the \textbf{\textit{$b$-derivative}} at $X\neq 0$ for a real-valued function $f(X)$ is defined as
\begin{equation}
    f^{(1)}\left(X\right)=\dfrac{f\left(bX\right)-f\left(X\right)}{(b-1)X}. 
\end{equation}
For $\varphi\geq0$ we denote the $\varphi^{th}$ $b$-derivative (with respect to $X$) of $f(X,Y;\lambda)$ as $f^{(\varphi)}(X,Y;\lambda)$. The $0^{th}$ $b$-derivative of $f(X,Y;\lambda)$ is $f(X,Y;\lambda)$. For any $a\in\mathbb{R},~X\neq0,$ and real-valued function $g(X)$,
\begin{equation}
    \Big[ f(X)+ag(X)\Big]^{(1)} = f^{(1)}(X)+ag^{(1)}(X).
\end{equation}
\end{defn}
Once again for the Hamming metric we take the formal definition of a derivative and take the limit of the function as $b\to 1$. That is, let $b=1+h$, $h\in\mathbb{R}$, then the derivative becomes,
\begin{equation}
    f^{(1)}(X) = \lim_{h\to 0}\frac{f((1+h)X)-f(X)}{hX}
\end{equation}
and so converts into the derivative in the usual sense of polynomials \cite[Problems (5), p98]{TheoryofError}. 

Now we have the definition of a derivative we can demonstrate some important results for homogeneous polynomials in general and the fundamental polynomials in particular. 


\begin{lem}\label{lemma:bderiv}
~\\
\begin{enumerate}
    \item For $0\leq \varphi \leq \ell,$ $\varphi\in\mathbb{Z}^{+}$ and $\ell \geq 0$,
        \begin{equation}
        \left(X^\ell\right)^{(\varphi)} = \beta_{b}(\ell,\varphi)X^{\ell-\varphi}.
        \end{equation}
    \item The $\varphi^{th}$ $b$-derivative of     $f(X,Y;\lambda)=\displaystyle\sum_{i=0}^r f_i(\lambda) Y^{i}X^{r-i}$ is given by
        \begin{equation}\label{equation:vthgeneralbderivative}
            f^{(\varphi)}\left(X,Y;\lambda\right)= \displaystyle\sum_{i=0}^{r-\varphi}f_i(\lambda) \beta_{b}(r-i,\varphi)Y^{i}X^{r-i-\varphi}.
        \end{equation}
    \item Also,
        \begin{align}
        \mu^{[k](\varphi)}(X,Y;\lambda) & = \beta_{b}(k,\varphi)\mu^{[k-\varphi]}(X,Y;\lambda)\label{equation:bmuderiv}
        \\
        \nu^{[k](\varphi)}(X,Y;\lambda) & = \beta_{b}(k,\varphi)\nu^{[k-\varphi]}(X,Y;\lambda).\label{equation:bnuderiv}
\end{align}
\end{enumerate}
\end{lem}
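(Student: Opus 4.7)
The plan is to prove the three parts in order, since each builds on the previous. The only real work is in part (1); parts (2) and (3) are then essentially unpacking definitions and applying an identity on Gaussian coefficients.

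For part (1), I would proceed by induction on $\varphi$. The base case $\varphi=0$ is immediate, since $\beta_b(\ell,0)$ is an empty product and equals $1$. For $\varphi=1$, the definition of the $b$-derivative gives
\begin{equation*}
    \bigl(X^\ell\bigr)^{(1)} = \frac{(bX)^\ell - X^\ell}{(b-1)X} = \frac{b^\ell-1}{b-1}\,X^{\ell-1} = \bbinom{\ell}{1}X^{\ell-1} = \beta_b(\ell,1)X^{\ell-1}.
\end{equation*}
For the inductive step, I would assume $(X^\ell)^{(\varphi)} = \beta_b(\ell,\varphi)X^{\ell-\varphi}$, apply the $b$-derivative once more using the base case, and recognise $\beta_b(\ell,\varphi)\,\bbinom{\ell-\varphi}{1} = \beta_b(\ell,\varphi+1)$ via identity \eqref{equation:betaproperties}.

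Part (2) follows immediately from part (1) by invoking the additive linearity of the $b$-derivative (stated in its definition) and the fact that $Y$ is treated as a constant with respect to the derivative in $X$. Applying $(X^{r-i})^{(\varphi)} = \beta_b(r-i,\varphi)X^{r-i-\varphi}$ termwise gives the required formula, with the upper summation limit adjusted to $r-\varphi$ because $\beta_b(r-i,\varphi)=0$ whenever $r-i<\varphi$ (one of the factors $\bbinom{r-i-j}{1}$ vanishes).

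For part (3), the strategy is to apply part (2) to the explicit expansions of $\mu^{[k]}$ and $\nu^{[k]}$ obtained in Theorem \ref{theorem:bmu} and Theorem \ref{lemma:bnulemma}. In both cases, what is needed to match the right-hand side is the identity
\begin{equation*}
    \bbinom{k}{u}\beta_b(k-u,\varphi) = \beta_b(k,\varphi)\bbinom{k-\varphi}{u}.
\end{equation*}
I would derive this by rewriting both sides via \eqref{equation:betabstartdifferent} as $\beta_b(\varphi,\varphi)$ times a product of Gaussian coefficients, which reduces the claim to $\bbinom{k}{u}\bbinom{k-u}{\varphi}=\bbinom{k}{\varphi}\bbinom{k-\varphi}{u}$; this is precisely the swapping identity \eqref{equation:gaussianswapplaces} with $x=k$, $i=u$, $k\mapsto\varphi$. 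Substituting this back into the expression from part (2) collapses the derivative into the claimed multiple of $\mu^{[k-\varphi]}$ (respectively $\nu^{[k-\varphi]}$). The only mildly delicate point is keeping the sign factor $(-1)^u b^{\sigma_u}$ and the $\gamma_{b,c}(\lambda,u)$ untouched while all the reindexing happens on the Gaussian-coefficient and $\beta_b$ factors, but since neither depends on $k$ or $\varphi$, this is routine. I do not expect any serious obstacle beyond spotting that the Gaussian coefficient swap identity is exactly what makes the $\beta_b$ factor pull out.
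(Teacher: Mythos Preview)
Your proposal is correct. Parts (1) and (2) match the paper's proof essentially verbatim: the paper also does the $\varphi=1$ case of $(X^\ell)^{(\varphi)}$ explicitly and appeals to induction, then obtains (2) by termwise linearity.

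For part (3) you take a genuinely different route from the paper. The paper computes only the $\varphi=1$ case directly---using \eqref{equation:gaussianfracx-1k} to rewrite $\frac{b^{k-u}-1}{b-1}\bbinom{k}{u}$ as $\frac{b^k-1}{b-1}\bbinom{k-1}{u}$ and thereby factor out $\beta_b(k,1)$---and then inducts on $\varphi$. You instead apply part (2) in one stroke for general $\varphi$ and reduce everything to the single identity $\bbinom{k}{u}\beta_b(k-u,\varphi)=\beta_b(k,\varphi)\bbinom{k-\varphi}{u}$, which you correctly derive from \eqref{equation:betabstartdifferent} and the swap identity \eqref{equation:gaussianswapplaces}. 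Your argument is slightly more conceptual and avoids a second induction; the paper's version is more hands-on but requires less bookkeeping with the $\beta_b$ factors. Both are sound, and your observation that $\gamma_{b,c}(\lambda,u)$ and $(-1)^u b^{\sigma_u}$ pass through untouched is exactly the point that makes the direct approach work.
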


\begin{proof}~\\
\begin{enumerate}
    \item[(1)] For $\varphi=1$ we have
        \begin{equation}
            \left(X^{\ell}\right)^{(1)} = \dfrac{\left(bX^{}\right)^\ell-X^{\ell}}{(b-1)X} = \dfrac{b^{\ell}-1}{b-1}X^{\ell-1} = \bbinom{\ell}{1}X^{\ell-1} = \beta_{b}(\ell,\varphi)X^{\ell-1}.
        \end{equation}
The rest of the proof follows by induction on $\varphi$ and is omitted. 
    \item[(2)] Now consider $f(X,Y;\lambda)=\displaystyle\sum_{i=0}^r f_i (\lambda)Y^{i}X^{r-i}$. We have,
        \begin{align}
            f^{(1)}\left(X,Y;\lambda\right) 
            & = \left(\displaystyle\sum_{i=0}^r f_i (\lambda) Y^{i}X^{r-i}\right)^{(1)}
            \\
            & =\displaystyle\sum_{i=0}^r f_i(\lambda) Y^{i}\left(X^{r-i}\right)^{(1)}
            \\
            & =\displaystyle\sum_{i=0}^{r-1}f_i(\lambda) \beta_{b}(r-i,\varphi)Y^{i}X^{r-i-1}.
\end{align}

So the initial case holds. The rest of the proof follows by induction on $\varphi$ and is omitted. 
    \item[(3)] Now consider $\mu^{[k]}(X,Y;\lambda)=\displaystyle\sum_{u=0}^k \mu_u(\lambda,k)Y^{u}X^{k-u}$ where $\mu_u(\lambda,k) = \displaystyle\bbinom{k}{u}\gamma_{b,c}(\lambda,u)$ as in Theorem \ref{theorem:bmu}. Then we have
    \begin{align}
        \mu^{[k](1)}(X,Y;\lambda)
            & = \left(\sum_{u=0}^k \mu_u(\lambda,k)Y^{u}X^{k-u}\right)^{(1)}
            \\
            & = \sum_{u=0}^k \mu_u(\lambda,k)Y^{u}\left(\frac{\left(bX\right)^{k-u}-X^{k-u}}{(b-1)X}\right)
            \\
            & = \sum_{u=0}^{k-1}\frac{b^{k-u}-1}{b-1}\bbinom{k}{u}\gamma_{b,c}(\lambda,u)Y^{u}X^{k-u-1}
            \\
            \overset{\eqref{equation:gaussianfracx-1k}}&{=}  \sum_{u=0}^{k-1}\frac{(b^{k}-1)\left(b^{k-u}-1\right)}{(b^{k-u}-1)(b-1)}\bbinom{k-1}{u}\gamma_{b,c}(\lambda,u)Y^{u}X^{k-u-1}
            \\
            & = \left(\frac{b^{k}-1}{b-1}\right)\mu^{[k-1]}(X,Y;\lambda)
            \\
            \overset{\eqref{equation:betafunction}}&{=}\beta_{b}(k,1)\mu^{[k-1]}(X,Y;\lambda)
    \end{align}

So $\mu^{[k](\varphi)}(X,Y;\lambda)=\beta_{b}(k,\varphi)\mu^{[k-\varphi]}(X,Y;\lambda)$ follows by induction on $\varphi$ and is omitted.

Now consider $\nu^{[k]}(X,Y;\lambda)=\displaystyle\sum_{u=0}^k (-1)^u b^{\sigma_{u}}\displaystyle\bbinom{k}{u}Y^{u}X^{k-u}$ as in Theorem \ref{lemma:bnulemma}. Then we have
    \begin{align}
        \nu^{[k](1)}(X,Y;\lambda) 
            & = \sum_{u=0}^k (-1)^u b^{\sigma_{u}}\frac{b^{k-u}-1}{b-1}\bbinom{k}{u} Y^{u}X^{k-u-1}
            \\
            \overset{\eqref{equation:gaussianfracx-1k}}&{=} \sum_{u=0}^{k-1}(-1)^u b^{\sigma_{u}}\frac{\left(b^{k}-1\right)\left(b^{k-u}-1\right)}{\left(b^{k-u}-1\right)(b-1)}\bbinom{k-1}{u}Y^{u}X^{k-1-u}
            \\
            & = \frac{b^{k}-1}{b-1}\nu^{[k-1]}(X,Y;\lambda)
            \\
            \overset{\eqref{equation:betafunction}}&{=} \beta_{b}(k,1)\nu^{[k-1]}(X,Y;\lambda).
    \end{align}
So the initial case holds. Thus $\nu^{[k](\varphi)}(X,Y;\lambda)=\beta_{b}(k,\varphi)\nu^{[k-\varphi]}(X,Y;\lambda)$ follows by induction also and is omitted. 
\end{enumerate}
\end{proof}

We now need a few smaller lemmas in order to prove the Leibniz rule for the $b$-derivative.

\begin{lem}\label{lemma:blittleuandv}
Let
    \begin{align}
        u\left(X,Y;\lambda\right) 
            & = \sum_{i=0}^r u_i(\lambda) Y^{i}X^{r-i}
            \\
        v\left(X,Y;\lambda\right) 
            & = \sum_{i=0}^s v_i(\lambda) Y^{i}X^{s-i}.
    \end{align}
\begin{enumerate}
    \item If $u_r(\lambda)=0$ then
        \begin{equation}\label{equation:burequals0}
            \frac{1}{X}\Big[u\left(X,Y;\lambda\right)\ast v\left(X,Y;\lambda\right)\Big] = \frac{u\left(X,Y;\lambda\right)}{X}\ast v\left(X,Y;\lambda\right).
        \end{equation}
    \item If $v_s(\lambda)=0$ then
        \begin{equation}\label{equation:bvsequals0}
            \frac{1}{X}\Big[u\left(X,Y;\lambda\right)\ast v\left(X,Y;\lambda\right)\Big] = u\left(X, bY;\lambda\right)\ast \frac{v\left(X,Y;\lambda\right)}{X}.
        \end{equation}
\end{enumerate}
\end{lem}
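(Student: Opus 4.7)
The plan is to proceed by direct computation from the definition of the $b$-product, checking in each case that the coefficients of $Y^w X^{r+s-1-w}$ on both sides agree. The key preliminary observation is that under either hypothesis the coefficient of $Y^{r+s}$ in $u(X,Y;\lambda) \ast v(X,Y;\lambda)$ vanishes, so $\frac{1}{X}[u \ast v]$ is genuinely a homogeneous polynomial of degree $r+s-1$ and not merely a rational expression. Indeed, by Definition \ref{b-proddefn} the top coefficient equals
$$
c_{r+s}(\lambda) = \sum_{i=0}^{r+s} b^{is} u_i(\lambda) v_{r+s-i}(\lambda-i) = b^{rs} u_r(\lambda) v_s(\lambda-r),
$$
since $u_i=0$ for $i>r$ and $v_j=0$ for $j>s$. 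This vanishes whenever $u_r(\lambda)=0$ or $v_s(\lambda)=0$.

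For part (1), I would first note that, under $u_r(\lambda)=0$, the quotient $u(X,Y;\lambda)/X$ is the honest polynomial $\sum_{i=0}^{r-1} u_i(\lambda) Y^i X^{r-1-i}$ of degree $r-1$. Forming its $b$-product with $v$ yields a homogeneous polynomial of degree $r+s-1$ whose $Y^w X^{r+s-1-w}$ coefficient is
$$
\sum_{i=0}^{w} b^{is} u_i(\lambda) v_{w-i}(\lambda-i).
$$
This is exactly $c_w(\lambda)$ from $u \ast v$, because the hypothesis $u_r(\lambda)=0$ removes the only potentially missing term ($i=r$). Comparing coefficients term by term and invoking the vanishing of $c_{r+s}$ then gives the identity \eqref{equation:burequals0}.

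For part (2), the substitution $Y \mapsto bY$ in $u$ multiplies the coefficient of $Y^i X^{r-i}$ by $b^i$, while dividing $v$ by $X$ replaces the polynomial of degree $s$ by one of degree $s-1$. In the $b$-product formula the ambient second-factor degree appears in the exponent $b^{is}$, so for $u(X,bY;\lambda) \ast (v/X)$ the coefficient of $Y^w X^{r+s-1-w}$ is
$$
\sum_{i=0}^{w} b^{i(s-1)} \bigl(b^i u_i(\lambda)\bigr) v_{w-i}(\lambda-i) = \sum_{i=0}^{w} b^{is} u_i(\lambda) v_{w-i}(\lambda-i) = c_w(\lambda).
$$
The hypothesis $v_s(\lambda)=0$ again ensures that $c_{r+s}(\lambda)=0$, so $\frac{1}{X}[u \ast v]$ is a polynomial of degree $r+s-1$ whose coefficients agree with those computed on the right-hand side, establishing \eqref{equation:bvsequals0}.

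There is no real mathematical obstacle here — the proof is purely bookkeeping. The only place that requires a moment's thought is part (2), where one must reconcile two sources of $b$-powers: the $b^i$ introduced by the substitution $Y \mapsto bY$ and the drop from $b^{is}$ to $b^{i(s-1)}$ in the $b$-product caused by the degree of $v/X$ being one less than that of $v$. The neat cancellation $b^{i(s-1)} \cdot b^i = b^{is}$ is exactly what makes the identity work, and it is the whole reason the substitution in part (2) is $Y \mapsto bY$ rather than $Y \mapsto Y$.
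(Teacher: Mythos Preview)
Your proposal is correct and follows essentially the same approach as the paper's proof: both arguments expand the $b$-product on each side, match the $Y^w$-coefficients directly, and use the vanishing hypothesis to reconcile the top term (and, implicitly, the boundary terms in the summation range). Your preliminary observation that $c_{r+s}(\lambda)=b^{rs}u_r(\lambda)v_s(\lambda-r)$ vanishes is exactly what the paper verifies case by case, and your remark on the cancellation $b^{i(s-1)}\cdot b^i=b^{is}$ is the same bookkeeping the paper carries out explicitly.
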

\begin{proof}
    \begin{enumerate}
        \item[(1)] If $u_r(\lambda)=0$,
            \begin{equation}
                \frac{u\left(X,Y;\lambda\right)}{X} = \sum_{i=0}^{r-1}u_i(\lambda) Y^{i}X^{r-i-1}.
            \end{equation}
            Hence
            \begin{align}
                \frac{u\left(X,Y;\lambda\right)}{X}\ast v\left(X,Y;\lambda\right)
                    & = \sum_{k=0}^{r+s-1}\left(\sum_{\ell=0}^k b^{\ell s} u_\ell(\lambda) v_{k-\ell}(\lambda-\ell)\right) Y^{k}X^{r+s-1-k}
                    \\
                    & = \frac{1}{X}\sum_{k=0}^{r+s-1}\left(\sum_{\ell=0}^k b^{\ell s} u_\ell(\lambda) v_{k-\ell}(\lambda-\ell)\right) Y^{k}X^{r+s-k}
                    \\
                    & \hspace{1cm} + \frac{1}{X}\sum_{\ell=0}^{r+s} b^{\ell s} u_\ell(\lambda) v_{r+s-\ell}(\lambda-\ell) Y^{r+s}X^{0}
                    \\
                    & = \frac{1}{X}\left(u\left(X,Y;\lambda\right)\ast v\left(X,Y;\lambda\right)\right)
            \end{align}
        since $v_{r+s-\ell}(\lambda-\ell)=0$ for $0\leq \ell \leq r-1$ and $u_{\ell}(\lambda)=0$ for $r\leq \ell \leq r+s$. So 
        \begin{equation}
            \frac{1}{X}\sum_{\ell=0}^{r+s} b^{\ell s} u_\ell(\lambda) v_{r+s-\ell}(\lambda-\ell) Y^{r+s}X^{0} = 0.
        \end{equation}
    \item[(2)] Now if $v_s(\lambda)=0$,
        \begin{equation}
            \frac{v\left(X,Y;\lambda\right)}{X} = \sum_{i=0}^{s-1} v_i(\lambda) Y^{i}X^{s-1-i}.
        \end{equation}
        Then
        \begin{align}
            u\left(X,bY;\lambda\right) \ast \frac{v\left(X,Y;\lambda\right)}{X} 
                & = \sum_{k=0}^{r+s-1}\left(\sum_{\ell=0}^k b^{\ell(s-1)} b^{\ell}u_\ell(\lambda) v_{k-\ell}(\lambda-\ell)\right)Y^{k}X^{r+s-1-k}
                \\
                & = \sum_{k=0}^{r+s-1}\left(\sum_{\ell=0}^k b^{\ell(s-1)} b^{\ell}u_\ell(\lambda) v_{k-\ell}(\lambda-\ell)\right)Y^{k}X^{r+s-1-k}
                \\
                & \hspace{1cm} + \frac{1}{X}\sum_{\ell=0}^{r+s} b^{\ell s} u_\ell(\lambda) v_{r+s-\ell}(\lambda-\ell) Y^{r+s}X^{0}
                \\
                & = \frac{1}{X}\left[u(X,Y;\lambda)\ast v(X,Y;\lambda)\right]
        \end{align}
        \end{enumerate}
        \hspace{1cm} since $v_{r+s-\ell}(\lambda-\ell)=0$ for $0\leq \ell \leq r$ and $u_{\ell}=0$ for $r+1\leq \ell \leq r+s$.
\end{proof}

\begin{thm}[Leibniz rule for the $b$-derivative] \label{theorem:bLeibniz}
For two homogeneous polynomials in $X$ and $Y$, $f(X,Y;\lambda)$ and $g(X,Y;\lambda)$ with degrees $r$ and $s$ respectively, the $\varphi^{th}$ (for $\varphi\geq0$) $b$-derivative of their $b$-product is given by
    \begin{equation}\label{equation:bleibniz}
        \Big[ f\left(X,Y;\lambda\right)\ast g\left(X,Y;\lambda\right)\Big]^{(\varphi)} = \sum_{\ell=0}^\varphi\bbinom{\varphi}{\ell} b^{(\varphi-\ell)(r-\ell)}f^{(\ell)}\left(X,Y;\lambda\right)\ast g^{(\varphi-\ell)}\left(X,Y;\lambda\right).
    \end{equation}
\end{thm}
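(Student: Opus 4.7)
The plan is to proceed by induction on $\varphi$, with the main work done in establishing the product rule, i.e.\ the case $\varphi = 1$. Once the product rule is in hand, the passage to higher derivatives is a routine manipulation of $b$-nary Gaussian coefficients via the step-down identity \eqref{equation:Stepdown1}.

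To prove the product rule I would first verify the key homogeneity identity that the $b$-product is compatible with the substitution $X \mapsto bX$: namely, comparing coefficient-by-coefficient one finds
\begin{equation*}
    (f \ast g)(bX,Y;\lambda) = f(bX,Y;\lambda) \ast g(bX,Y;\lambda).
\end{equation*}
Starting from the definition $[f \ast g]^{(1)}(X,Y;\lambda) = \bigl[(f\ast g)(bX,Y;\lambda) - (f\ast g)(X,Y;\lambda)\bigr]/((b-1)X)$, I would then add and subtract the ``mixed'' term $f(bX,Y;\lambda)\ast g(X,Y;\lambda)$ to split the numerator as
\begin{equation*}
    f(bX,Y;\lambda)\ast[g(bX,Y;\lambda)-g(X,Y;\lambda)] + [f(bX,Y;\lambda)-f(X,Y;\lambda)]\ast g(X,Y;\lambda),
\end{equation*}
using the distributivity property noted after Definition~\ref{b-proddefn} (both pairs of subtracted polynomials have matching degrees $s$ and $r$ respectively). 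For the second summand the top coefficient of $f(bX,Y;\lambda)-f(X,Y;\lambda)$ in $Y^r$ vanishes, so Lemma~\ref{lemma:blittleuandv}(1) lets the factor of $X$ pass through the $b$-product, producing $f^{(1)}\ast g$. For the first summand the top coefficient of $g(bX,Y;\lambda)-g(X,Y;\lambda)$ in $Y^s$ vanishes, so Lemma~\ref{lemma:blittleuandv}(2) applies and produces $f(bX,bY;\lambda) \ast g^{(1)}$; then homogeneity of $f$ (degree $r$) gives $f(bX,bY;\lambda) = b^r f(X,Y;\lambda)$. Combining yields the base case
\begin{equation*}
    [f \ast g]^{(1)} = f^{(1)} \ast g + b^r\, f \ast g^{(1)},
\end{equation*}
which matches \eqref{equation:bleibniz} for $\varphi = 1$.

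For the inductive step, assume the Leibniz rule at level $\varphi$ and apply one further $b$-derivative. Linearity of the $b$-derivative is immediate from its definition, and the product rule just proved applies to each summand $f^{(\ell)} \ast g^{(\varphi - \ell)}$, where the first factor has degree $r - \ell$, giving
\begin{equation*}
    [f^{(\ell)} \ast g^{(\varphi-\ell)}]^{(1)} = f^{(\ell+1)} \ast g^{(\varphi-\ell)} + b^{r-\ell}\, f^{(\ell)} \ast g^{(\varphi-\ell+1)}.
\end{equation*}
After relabelling the second resulting sum by $\ell \mapsto \ell - 1$, the coefficient of $f^{(\ell)} \ast g^{(\varphi+1-\ell)}$ becomes
\begin{equation*}
    \bbinom{\varphi}{\ell}\, b^{(\varphi+1-\ell)(r-\ell)} + \bbinom{\varphi}{\ell-1}\, b^{(\varphi+1-\ell)(r-\ell+1)},
\end{equation*}
and factoring out $b^{(\varphi+1-\ell)(r-\ell)}$ reduces the identity to showing
\begin{equation*}
    \bbinom{\varphi}{\ell} + b^{\varphi+1-\ell}\bbinom{\varphi}{\ell-1} = \bbinom{\varphi+1}{\ell},
\end{equation*}
which is exactly the step-down identity \eqref{equation:Stepdown1} with $x = \varphi+1$, $k = \ell$. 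The boundary terms $\ell = 0$ and $\ell = \varphi+1$ match trivially.

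The main obstacle is the product rule itself, and specifically the careful use of both parts of Lemma~\ref{lemma:blittleuandv}: part (1) passes $X$ through on the right, but part (2) forces the substitution $Y \mapsto bY$ on the left, and the key observation that makes this harmless is that $f$ is homogeneous of degree $r$, so $f(bX,bY;\lambda) = b^r f(X,Y;\lambda)$. This is precisely why the factor $b^r$ (and more generally $b^{(\varphi-\ell)(r-\ell)}$) appears in the statement, and recognising this asymmetry between the two sides is the only non-routine step in the argument.
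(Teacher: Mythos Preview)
Your proposal is correct and follows essentially the same route as the paper's proof: the base case is established by adding and subtracting $f(bX,Y;\lambda)\ast g(X,Y;\lambda)$, invoking both parts of Lemma~\ref{lemma:blittleuandv}, and then using homogeneity $f(bX,bY;\lambda)=b^{r}f(X,Y;\lambda)$ to extract the factor $b^{r}$; the inductive step is the same Pascal-type collapse via \eqref{equation:Stepdown1}. Your explicit verification that $(f\ast g)(bX,Y;\lambda)=f(bX,Y;\lambda)\ast g(bX,Y;\lambda)$ is a point the paper uses tacitly, so including it is a minor clarification rather than a departure.
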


\begin{proof}
For simplification, we shall write $f(X,Y;\lambda)$ as $f(X,Y)$ and similarly $g(X,Y;\lambda)$ as $g(X,Y)$. Now by differentiation we have
\begin{align}
    \Big[ f\left(X,Y\right)\ast g\left(X,Y\right)\Big]^{(1)} 
        & = \frac{f\left(bX,Y\right)\ast g\left(bX,Y\right)-f\left(X,Y\right)\ast g\left(X,Y\right)}{(b-1)X}
        \\
        & = \frac{1}{(b-1)X} \bigg\{ f\left(bX,Y\right)\ast g\left(bX,Y\right)-f\left(bX,Y\right)\ast g\left(X,Y\right)
        \\
        & \hspace{1cm} + f\left(bX,Y\right)\ast g\left(X,Y\right) - f\left(X,Y\right)\ast g\left(X,Y\right)\bigg\}
        \\
        & = \frac{1}{(b-1)X}\Big\{ f\left(bX,Y\right)\ast\left(g\left(bX,Y\right)-g\left(X,Y\right)\right)\Big\}
        \\
        & \hspace{1cm} +\frac{1}{(b-1)X}\bigg\{\left(f\left(bX,Y\right)-f\left(X,Y\right)\right)\ast g\left(X,Y\right)\bigg\}
        \\
        \overset{\eqref{equation:bvsequals0}}&{=} f\left(bX,bY\right) \ast \left\{\frac{g\left(bX,Y\right)-g\left(X,Y\right)}{(b-1)X}\right\}
        \\
        &\hspace{1cm} \overset{\eqref{equation:burequals0}}{+}\left\{\frac{f\left(bX,Y\right)-f\left(X,Y\right)}{(b-1)X}\right\}\ast g\left(X,Y\right)
        \\
        & = b^{r} f\left(X,Y\right)\ast g^{(1)}\left(X,Y\right) + f^{(1)}\left(X,Y\right)\ast g\left(X,Y\right)
\end{align}

since $g(X,Y)$ has the same degree of $g(bX,Y)$ and similarly, $f(X,Y)$ has the same degree as $f(bX,Y)$. So the initial case holds. Assume the statement holds true for $\varphi=\overline{\varphi}$, i.e.

\begin{equation}
    \Big[ f\left(X,Y\right)\ast g\left(X,Y\right)\Big]^{(\overline{\varphi})} = \sum_{\ell=0}^{\overline{\varphi}} \bbinom{\overline{\varphi}}{\ell}b^{(\overline{\varphi}-\ell)(r-\ell)}f^{(\ell)}\left(X,Y\right)\ast g^{(\overline{\varphi}-\ell)}\left(X,Y\right).
\end{equation}

Now considering $\overline{\varphi}+1$ and for simplicity we write $f(X,Y;\lambda),~g(X,Y;\lambda)$ as $f,g$ we have
    \begin{align}
    \Big[f\ast g\Big]^{(\overline{\varphi}+1)} 
        & = \left[ \sum_{\ell=0}^{\overline{\varphi}}\bbinom{\overline{\varphi}}{\ell}b^{(\overline{\varphi}-\ell)(r-\ell)}f^{(\ell)}\ast g^{(\overline{\varphi}-\ell)}\right]^{(1)}
        \\
        & = \sum_{\ell=0}^{\overline{\varphi}}\bbinom{\overline{\varphi}}{\ell}b^{(\overline{\varphi}-\ell)(r-\ell)}\left[f^{(\ell)}\ast g^{(\overline{\varphi}-\ell)}\right]^{(1)}
        \\
        & = \sum_{\ell=0}^{\overline{\varphi}}\bbinom{\overline{\varphi}}{\ell}b^{(\overline{\varphi}-\ell)(r-\ell)}\left( b^{(r-\ell)}f^{(\ell)}\ast g^{(\overline{\varphi}-\ell+1)}+f^{(\ell+1)}\ast g^{(\overline{\varphi}-\ell)}\right)
        \\
        & = \sum_{\ell=0}^{\overline{\varphi}}\bbinom{\overline{\varphi}}{\ell}b^{(\overline{\varphi}-\ell+1)(r-\ell)}f^{(\ell)}\ast g^{(\overline{\varphi}-\ell+1)}
        \\
        &
        \hspace{1cm}+ \sum_{\ell=1}^{\overline{\varphi}+1}\bbinom{\overline{\varphi}}{\ell-1}b^{(\overline{\varphi}-\ell+1)(r-\ell+1)}f^{(\ell)}\ast g^{(\overline{\varphi}-\ell+1)}
        \\
        & = \bbinom{\overline{\varphi}}{0}b^{(\overline{\varphi}+1)r}f\ast g^{(\overline{\varphi}+1)}+ \sum_{\ell=1}^{\overline{\varphi}}\bbinom{\overline{\varphi}}{\ell}b^{(\overline{\varphi}+1-\ell)(r-\ell)}f^{(\ell)}\ast g^{(\overline{\varphi}-\ell+1)}
        \\
        & \hspace{1cm} + \bbinom{\overline{\varphi}}{\overline{\varphi}}b^{(\overline{\varphi}+1-\overline{\varphi}-1)(r-\overline{\varphi}-1+1)}f^{(\overline{\varphi}+1)}\ast g 
        \\
        & \hspace{1cm} + \sum_{\ell=1}^{\overline{\varphi}}\bbinom{\overline{\varphi}}{\ell-1}b^{(\overline{\varphi}+1-\ell)(r-\ell+1)}f^{(\ell)}\ast g^{(\overline{\varphi}-\ell+1)}
        \\
        & = b^{(\overline{\varphi}+1)r}f\ast g^{(\overline{\varphi}+1)} + f^{(\overline{\varphi}+1)}\ast g
        \\
        & \hspace{1cm} + \sum_{\ell=1}^{\overline{\varphi}} \left( \bbinom{\overline{\varphi}}{\ell} + b^{(\overline{\varphi}-\ell+1)}\bbinom{\overline{\varphi}}{\ell-1}\right) b^{(\overline{\varphi}-\ell+1)(r-\ell)}f^{(\ell)}\ast g^{(\overline{\varphi}-\ell+1)}
        \\
        \overset{\eqref{equation:Stepdown1}}&{=} \sum_{\ell=1}^{\overline{\varphi}}\bbinom{\overline{\varphi}+1}{\ell}b^{(\overline{\varphi}+1-\ell)(r-\ell)}f^{(\ell)}\ast g^{(\overline{\varphi}+1-\ell)}+ \bbinom{\overline{\varphi}+1}{0}b^{(\overline{\varphi}+1)(r)}f\ast g^{(\overline{\varphi}+1)}
        \\
        & \hspace{1cm} +\bbinom{\overline{\varphi}+1}{\overline{\varphi}+1} b^{(\overline{\varphi}-1-\overline{\varphi}-1)}f^{(\overline{\varphi}+1)}\ast g
        \\
        & = \sum_{\ell=0}^{\overline{\varphi}+1}\bbinom{\overline{\varphi}+1}{\ell}b^{(\overline{\varphi}+1-\ell)(r-\ell)}f^{(\ell)}\ast g^{(\overline{\varphi}+1-\ell)}.
    \end{align}
\end{proof}
 
\subsection{The \texorpdfstring{$b^{-1}$}{b-1}-Derivative}
Essentially, since the $b$-derivative finds a derivative with respect to $X$ it is natural to identify a comparable $b^{-1}$-derivative which can be used to develop a derivative with respect to $Y$.
\begin{defn}
For $b\neq 1$, the \textbf{\textit{$b^{-1}$-derivative}} at $Y\neq 0$ for a real-valued function $g(Y)$ is defined as
    \begin{equation}
        g^{\{1\}}\left(Y\right)=\dfrac{g\left(b^{-1}Y\right)-g\left(Y\right)}{(b^{-1}-1)Y}. 
    \end{equation}
For $\varphi\geq0$ we denote the $\varphi^{th}$ $b^{-1}$-derivative (with respect to $Y$) of $g(X,Y;\lambda)$ as $g^{\{\varphi\}}(X,Y;\lambda)$. The $0^{th}$ $b^{-1}$-derivative of $g(X,Y;\lambda)$ is $g(X,Y;\lambda)$. For any $a\in\mathbb{R},~Y\neq0$ and real-valued function $f(Y)$,
    \begin{equation}
        \Big[ f(Y)+ag(Y)\Big]^{\{1\}} = f^{\{1\}}(Y)+ag^{\{1\}}(Y).
    \end{equation}
\end{defn}
Again for the Hamming metric, we take the formal definition of a derivative and take the limit of the function as $b\to 1$, i.e. let $b^{-1}=1+h$, $h\in\mathbb{R}$ Then the derivative becomes,
\begin{equation}
    g^{\{1\}}(Y) = \lim_{h\to 0}\frac{g((1+h)Y)-g(Y)}{hY}
\end{equation}
and so again converts into the derivative in the usual sense of polynomials \cite[Problems (5), p98]{TheoryofError} with respect to $Y$. 

Similar to the $b$-derivative, since we have the definition of a derivative now with respect to $Y$ we can demonstrate some important results for homogeneous polynomials in general and the fundamental polynomials in particular. 
\begin{lem}\label{lemma:bderivatives} ~ \\
\begin{enumerate}
    \item For $0\leq \varphi \leq \ell,$ $\varphi\in\mathbb{Z}^{+}$ and $\ell \geq 0$,
        \begin{equation}
            \left(Y^\ell\right)^{\{\varphi\}} = b^{\varphi(1-\ell)+\sigma_{\varphi}}\beta_{b}(\ell,\varphi)Y^{\ell-\varphi}.
        \end{equation}
    \item The $\varphi^{th}$ $b^{-1}$-derivative of $g(X,Y;\lambda)=\displaystyle\sum_{i=0}^s g_i(\lambda) Y^{i}X^{s-i}$ is given by               \begin{equation}                                    \label{equation:bdervispoly}
                g^{\{\varphi\}}\left(X,Y;\lambda\right)=\displaystyle\sum_{i=\varphi}^{s}g_i(\lambda) b^{\varphi(1-i)+\sigma_{\varphi}} \beta_{b}(i,\varphi)Y^{i-\varphi}X^{s-i}.
        \end{equation}
    \item Also,
        \begin{align}\label{equation:mu-1bderiv}
            \mu^{[k]\{\varphi\}}(X,Y;\lambda) 
                & = b^{-\sigma_{\varphi}} \beta_{b}(k,\varphi)\gamma_{b,c}(\lambda,\varphi)\mu^{[k-\varphi]}(X,Y;\lambda-\varphi)
                \\
            \nu^{[k]\{\varphi\}}(X,Y;\lambda) 
                & = (-1)^{\varphi} \beta_{b}(k,\varphi)\nu^{[k-\varphi]}(X,Y;\lambda).\label{equation:nu-1bderiv}
\end{align}
\end{enumerate}
\end{lem}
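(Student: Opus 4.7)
The plan is to mirror the structure of Lemma \ref{lemma:bderiv}, which handled the $b$-derivative, and adapt each step to the $b^{-1}$-derivative (noting that the asymmetry $b \mapsto b^{-1}$ will generate extra factors of $b^{\sigma_\varphi}$ and $b^{-\sigma_\varphi}$ along the way).

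For part (1), I would start with $\varphi = 1$ and compute directly from the definition:
\begin{equation*}
\left(Y^\ell\right)^{\{1\}} = \frac{(b^{-1}Y)^\ell - Y^\ell}{(b^{-1}-1)Y} = \frac{b^{-\ell}-1}{b^{-1}-1} Y^{\ell-1} = \frac{b - b^{1-\ell}}{b-1} Y^{\ell-1} = b^{1-\ell}\,\bbinom{\ell}{1}\,Y^{\ell-1},
\end{equation*}
which matches the claim since $\sigma_1 = 0$ and $\beta_b(\ell,1) = \bbinom{\ell}{1}$. Then I would induct on $\varphi$: if the statement holds for $\varphi$, computing $\left((Y^\ell)^{\{\varphi\}}\right)^{\{1\}}$ reduces to a single application of the base case and a quick exponent bookkeeping using $\sigma_{\varphi+1} = \sigma_\varphi + \varphi$ together with the recursion $\beta_b(\ell,\varphi+1) = \beta_b(\ell,\varphi) \beta_b(\ell-\varphi,1)$ from \eqref{equation:betaproperties}.

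Part (2) is an immediate consequence of part (1) and the linearity of the $b^{-1}$-derivative, applied termwise to $g(X,Y;\lambda) = \sum_{i=0}^s g_i(\lambda) Y^{i}X^{s-i}$ (with $X^{s-i}$ treated as a constant in $Y$); the lower index on the sum starts at $\varphi$ because $(Y^i)^{\{\varphi\}} = 0$ for $i < \varphi$.

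The real work is in part (3). Using Theorem \ref{theorem:bmu} and substituting directly into \eqref{equation:bdervispoly}, I obtain
\begin{equation*}
\mu^{[k]\{\varphi\}}(X,Y;\lambda) = \sum_{u=\varphi}^{k} \bbinom{k}{u}\gamma_{b,c}(\lambda,u)\, b^{\varphi(1-u)+\sigma_\varphi}\, \beta_b(u,\varphi)\, Y^{u-\varphi} X^{k-u}.
\end{equation*}
After the reindexing $u = \varphi + j$, the target formula demands that the coefficient of $Y^j X^{k-\varphi-j}$ equals $b^{-\sigma_\varphi}\beta_b(k,\varphi)\gamma_{b,c}(\lambda,\varphi)\bbinom{k-\varphi}{j}\gamma_{b,c}(\lambda-\varphi,j)$. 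I would verify this by assembling three ingredients: the Gaussian/beta identity $\bbinom{k}{\varphi+j}\beta_b(\varphi+j,\varphi) = \bbinom{k}{\varphi}\bbinom{k-\varphi}{j}\beta_b(\varphi,\varphi) = \beta_b(k,\varphi)\bbinom{k-\varphi}{j}$ (which follows from \eqref{equation:gaussianswapplaces} and \eqref{equation:betabstartdifferent}); the factorisation $\gamma_{b,c}(\lambda,\varphi+j) = b^{\varphi j}\gamma_{b,c}(\lambda,\varphi)\gamma_{b,c}(\lambda-\varphi,j)$ obtained by splitting the product at index $\varphi$ and pulling out a factor of $b^\varphi$ from each of the remaining $j$ brackets; and the exponent identity $\varphi(1-\varphi-j) + \sigma_\varphi + \varphi j = -\sigma_\varphi$, which is a direct simplification using $\sigma_\varphi = \varphi(\varphi-1)/2$. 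The $\nu^{[k]}$ case proceeds in exactly the same way from Theorem \ref{lemma:bnulemma}, and here the exponent bookkeeping reduces to the identity $\sigma_{\varphi+j} + \varphi(1-\varphi-j) + \sigma_\varphi = \sigma_j$, which again is a short algebraic check.

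The main obstacle is the exponent tracking in part (3): the $b^{-\sigma_\varphi}$ in the $\mu$-identity and the $(-1)^\varphi$ with no leftover $b$-power in the $\nu$-identity arise only after delicate cancellation, so I would present the $b$-exponent simplifications explicitly rather than as routine. Everything else is essentially a transcription of the $b$-derivative argument with the sign on $\sigma$ flipped.
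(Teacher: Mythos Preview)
Your proposal is correct. Parts (1) and (2) match the paper's approach almost exactly; the paper even writes out a separate induction for (2), so your remark that it is immediate from (1) by linearity is a small economy.

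Part (3) takes a genuinely different route from the paper. The paper proves both the $\mu$- and $\nu$-identities by induction on $\varphi$: it establishes the case $\varphi=1$ by reindexing and applying the one-step identities \eqref{equation:gammastepdown} and \eqref{equation:beta1stepdown}, and then reduces $\varphi+1$ to $\varphi$ by another pass through the same identities. You instead feed the general formula \eqref{equation:bdervispoly} in directly, shift $u=\varphi+j$, and verify the coefficient identity in one stroke using the three composite facts you list (the Gaussian trinomial identity, the factorisation of $\gamma_{b,c}(\lambda,\varphi+j)$, and the $b$-exponent arithmetic). Your method is shorter and makes the structure of the answer transparent; the paper's induction trades that for only ever needing the elementary one-step recursions already catalogued in Section~\ref{subsection:preliminaries}. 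One small point: the identity $\bbinom{k}{\varphi+j}\bbinom{\varphi+j}{\varphi}=\bbinom{k}{\varphi}\bbinom{k-\varphi}{j}$ you invoke is not literally \eqref{equation:gaussianswapplaces} but the standard Gaussian trinomial identity, which is immediate from the product definition (or from \eqref{equation:gaussianswapplaces} combined with \eqref{equation:gaussianxx-k}); it would be worth stating it on its own line rather than citing \eqref{equation:gaussianswapplaces} alone.
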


\begin{proof}~\\
\begin{enumerate}
    \item[(1)] For $\varphi=1$ we have
        \begin{align}
            \left(Y^{\ell}\right)^{\{1\}} = \dfrac{\left(b^{-1}Y\right)^{\ell}-Y^{\ell}}{(b^{-1}-1)Y}
                & = \left(\dfrac{b^{-\ell}-1}{b^{-1}-1}\right)Y^{\ell-1} 
            \\
                & = \frac{bb^{-\ell}\left(1-b^{\ell}\right)}{1-b}Y^{\ell-1}
                \\
                & = b^{1-\ell}\beta_{b}(\ell,1)Y^{\ell-1}.
        \end{align}
    So the initial case holds. Assume the case for $\varphi =\overline{\varphi}$ holds. Then we have
        \begin{align}
            \left(Y^{\ell}\right)^{\{\overline{\varphi}+1\}}
                & = \left(b^{\overline{\varphi}(1-\ell)+\sigma_{\overline{\varphi}}}\beta_{b}(\ell,\overline{\varphi})Y^{\ell-\overline{\varphi}}\right)^{\{1\}}
                \\
                & = b^{\overline{\varphi}(1-\ell)+\sigma_{\overline{\varphi}}}\beta_{b}(\ell,\overline{\varphi})\frac{b^{-(\ell-\overline{\varphi})}Y^{\ell-\overline{\varphi}}-Y^{\ell-\overline{\varphi}}}{\left(b^{-1}-1\right)Y}
                \\
            & = b^{\overline{\varphi}(1-\ell)+\sigma_{\overline{\varphi}}}\beta_{b}(\ell,\varphi)b^{1-(\ell-\overline{\varphi})}\beta_{b}(\ell-\overline{\varphi},1)Y^{\ell-\overline{\varphi}-1}
            \\
            \overset{\eqref{equation:betaproperties}}&{=} b^{(\overline{\varphi}+1)(1-\ell)+\sigma_{\overline{\varphi}+1}}\beta_{b}(\ell,\overline{\varphi}+1)Y^{\ell-(\overline{\varphi}+1)}.
        \end{align}
    Thus the statement holds by induction. 
    \item[(2)] Now consider     $g(X,Y;\lambda)=\displaystyle\sum_{i=0}^s g_i (\lambda) Y^{i}X^{s-i}$. For $\varphi=1$ we have
        \begin{align}
            g^{\{1\}}\left(X,Y;\lambda\right) & = \left(\sum_{i=0}^s g_i(\lambda) Y^{i}X^{s-i}\right)^{\{1\}} 
            \\
            & = \sum_{i=0}^s g_i(\lambda) \left(Y^{i}\right)^{\{1\}}X^{s-i}
            \\
            & = \sum_{i=0}^s g_i(\lambda) b^{-i+1}\beta_{b}(i,1)Y^{i-1}X^{s-i}.
        \end{align}
    As $\beta_{b}(i,1)=0$ when $i=0$ and $\sigma_1=0$ then we have
        \begin{equation}
            g^{\{1\}}\left(X,Y;\lambda\right) = \sum_{i=1}^s g_i(\lambda) b^{1-i+\sigma_{1}}\beta_{b}(i,1)Y^{i-1}X^{s-i}.
        \end{equation}
    So the initial case holds. Now assume the case holds for $\varphi=\overline{\varphi}$ i.e.
    \begin{equation}
    g^{\{\overline{\varphi}\}}\left(X,Y;\lambda\right)=\displaystyle\sum_{i=\overline{\varphi}}^s g_i(\lambda) b^{\overline{\varphi}(1-i)+\sigma_{\overline{\varphi}}}\beta_{b}(i,\overline{\varphi})Y^{i-\overline{\varphi}}X^{s-i}.
    \end{equation}
    Then we have
    \begin{align}
        g^{\{\overline{\varphi}+1\}}\left(X,Y;\lambda\right) 
            & = \left(\sum_{i=\overline{\varphi}}^s g_i(\lambda) b^{\overline{\varphi}(1-i)+\sigma_{\overline{\varphi}}}\beta_{b}(i,\overline{\varphi})Y^{i-\overline{\varphi}}\right)^{\{1\}}X^{s-i}
            \\
            & = \sum_{i=\overline{\varphi}}^s g_i(\lambda) b^{\overline{\varphi}(1-i)+\sigma_{\overline{\varphi}}}\beta_{b}(i,\overline{\varphi})b^{-(i-\overline{\varphi}-1)}\beta_{b}(i-\overline{\varphi},1)Y^{i-\overline{\varphi}-1}X^{s-i}
            \\
            \overset{\eqref{equation:betafunction}}&{=} \sum_{i=\overline{\varphi}}^s g_i(\lambda) b^{(\overline{\varphi}+1)(1-i)+\sigma_{\overline{\varphi}}}\prod_{j=0}^{\overline{\varphi}-1}\frac{\left(b^{i-j}-1\right)\left(b^{i-\overline{\varphi}}-1\right)}{(b-1)(b-1)}Y^{i-\overline{\varphi}-1}X^{s-i}
            \\
            & = \sum_{i=\overline{\varphi}}^s g_i(\lambda) b^{(\overline{\varphi}+1)(1-i)+\sigma_{\overline{\varphi}+1}}\beta_{b}(i,\overline{\varphi}+1)Y^{i-\overline{\varphi}-1}X^{s-i}
            \\
            & = \sum_{i=\overline{\varphi}+1}^s g_i(\lambda) b^{(\overline{\varphi}+1)(1-i)+\sigma_{\overline{\varphi}+1}}\beta_{b}(i,\overline{\varphi}+1)Y^{i-\overline{\varphi}-1}X^{s-i}
    \end{align}
    since when $i=\overline{\varphi}$, $\beta_{b}(\overline{\varphi},\overline{\varphi}+1)=0$. So by induction Equation \eqref{equation:bdervispoly} holds.
    
    \item[(3)] Now consider $\mu^{[k]}(X,Y;\lambda)=\displaystyle\sum_{u=0}^k \mu_u(\lambda,k)Y^{u}X^{k-u}$ where $\mu_u(\lambda,k) = \displaystyle\bbinom{k}{u}\gamma_{b,c}(\lambda,u)$ as in Theorem \ref{theorem:bmu}. Then we have
    \begin{align}
        \mu^{[k]\{1\}}(X,Y;\lambda) 
            & = \left(\sum_{u=0}^k \mu_u(\lambda,k)Y^{u}X^{k-u}\right)^{\{1\}}
            \\
            \overset{\eqref{equation:bdervispoly}}&{=} \sum_{u=1}^k \mu_u(\lambda,k)b^{1-u}\beta_{b}(u,1)Y^{u-1}X^{k-u}\\
            & = \sum_{r=0}^{k-1} \mu_{r+1}(\lambda,k)b^{1-(r+1)}\beta_{b}(r+1,1)Y^{r+1-1}X^{k-r-1}
            \\
            & = \sum_{r=0}^{k-1} \bbinom{k}{r+1}\gamma_{b,c}(\lambda,r+1)b^{-r}\beta_{b}(r+1,1)Y^{r}X^{k-1-r}
            \\
            \overset{\eqref{equation:gammastepdown}\eqref{equation:beta1stepdown}}&{=} \sum_{r=0}^{k-1} \bbinom{k-1}{r}\frac{b^{k}-1}{b^{r+1}-1}\left(cb^\lambda-1\right)b^{r}b^{-r}\gamma_{b,c}(\lambda-1,r)
            \\
            & \hspace{1cm} \times \beta_{b}(r+1,1)Y^{r}X^{k-1-r}
            \\
            & = b^{-\sigma_1}\beta_{b}(k,1)\gamma_{b,c}(\lambda,1)\mu^{[k-1]}(X,Y;\lambda-1).
    \end{align}

    Now assume that the statement holds for $\varphi=\overline{\varphi}$. Then we have
    \begin{align}
        \mu^{[k]\{\overline{\varphi}+1\}}(X,Y;\lambda)
            & = \bigg[ b^{-\sigma_{\overline{\varphi}}}\beta_{b}(k,\overline{\varphi})\gamma_{b,c}(\lambda,\overline{\varphi})\mu^{[k-\overline{\varphi}]}(X,Y;\lambda-\overline{\varphi})\bigg]^{\{1\}}
            \\
        %
            & = b^{-\sigma_{\overline{\varphi}}}\beta_{b}(k,\overline{\varphi})\gamma_{b,c}(\lambda,\overline{\varphi})\left(\sum_{r=0}^{k-\overline{\varphi}}\bbinom{k-\overline{\varphi}}{r}\gamma_{b,c}(\lambda-\overline{\varphi},r)Y^{r}X^{k-\overline{\varphi}-r}\right)^{\{1\}}
            \\
            & = b^{-\sigma_{\overline{\varphi}}}\beta_{b}(k,\overline{\varphi})\gamma_{b,c}(\lambda,\overline{\varphi})\sum_{r=1
            }^{k-\overline{\varphi}}\bbinom{k-\overline{\varphi}}{r}\gamma_{b,c}(\lambda-\overline{\varphi},r)\left(Y^{r}\right)^{\{1\}}X^{k-\overline{\varphi}-r}
            \\
            & = b^{-\sigma_{\overline{\varphi}}}\beta_{b}(k,\overline{\varphi})\gamma_{b,c}(\lambda,\overline{\varphi})\sum_{u=0}^{k-\overline{\varphi}-1}\bbinom{k-\overline{\varphi}}{u+1}\gamma_{b,c}(\lambda-\overline{\varphi},u+1)b^{1-(u+1)}
            \\
            & \hspace{1cm} \times \beta_{b}(u+1,1)Y^{u+1-1}X^{k-\overline{\varphi}-u-1}
            \\
            \overset{\eqref{equation:gammastepdown}\eqref{equation:beta1stepdown}}&{=} b^{-\sigma_{\overline{\varphi}}}\beta_{b}(k,\overline{\varphi})\gamma_{b,c}(\lambda,\overline{\varphi})\sum_{u=0}^{k-(\overline{\varphi}+1)}\bbinom{k-\overline{\varphi}-1}{u}
            \\
            & \hspace{1cm} \times \frac{\left(b^{k-\overline{\varphi}}-1\right)\left(b^{u+1}-1\right)}{\left(b^{u+1}-1\right)(b-1)}b^{u}b^{-u}
            \\
            & \hspace{1cm} \times \left(cb^{\lambda-\overline{\varphi}}-1\right)\gamma_{b,c}(\lambda-(\overline{\varphi}+1),u)Y^{u}X^{k-(\overline{\varphi}+1)-u}
            \\
            & = b^{-\sigma_{\overline{\varphi}}}b^{-\overline{\varphi}}\gamma_{b,c}(\lambda,\overline{\varphi}+1)\beta_{b}(k,\overline{\varphi}+1)\mu^{[k-(\overline{\varphi}+1)]}(X,Y;\lambda-(\overline{\varphi}+1))
            \\
            & = b^{-\sigma_{\overline{\varphi}+1}}\gamma_{b,c}(\lambda,\overline{\varphi}+1)\beta_{b}(k,\overline{\varphi}+1)\mu^{[k-(\overline{\varphi}+1)]}(X,Y;\lambda-(\overline{\varphi}+1)).
\end{align}
As required.
Now consider $\nu^{[k]}(X,Y;\lambda)=\displaystyle\sum_{u=0}^k (-1)^u b^{u(u-1)}\displaystyle\bbinom{k}{u}Y^{u}X^{k-u}$ as defined in Theorem \ref{lemma:bnulemma}. Then we have
    \begin{align}
        \nu^{[k]\{1\}}(X,Y;\lambda) 
            & = \left(\sum_{u=0}^k (-1)^u b^{\sigma_{u}}\bbinom{k}{u}Y^{u}X^{k-u}\right)^{\{1\}}
            \\
            & = \sum_{r=0}^{k-1} (-1)^{r+1} b^{\sigma_{r+1}}b^{1-(r+1)}\bbinom{k}{r+1}\beta_{b}(r+1,1)Y^{r+1-1}X^{k-r-1}
            \\
        \overset{\eqref{equation:betafunction}}&{=} -\sum_{r=0}^{k-1} (-1)^{r} b^{\sigma_{r}}b^{r}b^{-r}\bbinom{k-1}{r}\frac{\left(b^{k}-1\right)\left(b^{r+1}-1\right)}{\left(b^{r+1}-1\right)\left(b-1\right)}Y^{r}X^{k-r-1}
            \\
            & = (-1)^{1}\beta_{b}(k,1)\nu^{[k-1]}(X,Y;\lambda).
\end{align}
Now assume that the statement holds for     $\varphi=\overline{\varphi}$. Then we have
    \begin{align}
        \nu^{[k]}(X,Y;\lambda)^{\{\overline{\varphi}+1\}} 
            & = \left[(-1)^{\overline{\varphi}}\beta_{b}(k,\overline{\varphi})\nu^{[k-\overline{\varphi}]}(X,Y;\lambda)\right]^{\{1\}}
            \\
            & = (-1)^{\overline{\varphi}}\beta_{b}(k,\overline{\varphi})\sum_{u=1}^{k-\overline{\varphi}} (-1)^u b^{\sigma_u}\bbinom{k-\overline{\varphi}}{u}\left(Y^{u}\right)^{\{1\}}X^{k-\overline{\varphi}-u}
            \\
            & = (-1)^{\overline{\varphi}}\beta_{b}(k,\overline{\varphi})\sum_{r=0}^{k-\overline{\varphi}-1} (-1)^{r+1} b^{\sigma_{r+1}}b^{-(r+1)+1}\bbinom{k-\overline{\varphi}}{r+1}
            \\
            & \hspace{1cm} \times \beta_{b}(r+1,1)Y^{r+1-1}X^{k-\overline{\varphi}-r-1}
            \\
            \overset{\eqref{equation:beta1stepdown}}&{=} (-1)^{\overline{\varphi}+1}\beta_{b}(k,\overline{\varphi})\sum_{r=0}^{k-\overline{\varphi}-1} (-1)^{r}b^{\sigma_{r}}\bbinom{k-\overline{\varphi}-1}{r}
            \\
            & \hspace{1cm} 
            \times \frac{\left(b^{k-\overline{\varphi}}-1\right)\left(b^{r+1}-1\right)}{\left(b^{r+1}-1\right)\left(b-1\right)}Y^{r}X^{k-\overline{\varphi}-1-r}
            \\
            & = (-1)^{\overline{\varphi}+1}\beta_{b}(k,\overline{\varphi}+1)\nu^{[k-(\overline{\varphi}+1)]}(X,Y;\lambda).
\end{align}
as required.
\end{enumerate}
\end{proof}

Now we need a few smaller lemmas in order to prove the Leibniz rule for the $b^{-1}$-derivative.

\begin{lem}\label{lemma:bminussimplification} Let
\begin{align}
    u\left(X,Y;\lambda\right) 
        & = \sum_{i=0}^r u_i(\lambda) Y^{i}X^{r-i}
        \\
     v\left(X,Y;\lambda\right) 
        & = \sum_{i=0}^s v_i(\lambda) Y^{i}X^{s-i}.
\end{align}
\begin{enumerate}
\item If $u_0(\lambda)=0$ then
\begin{equation}
    \frac{1}{Y}\Big[u\left(X,Y;\lambda\right)\ast v\left(X,Y;\lambda\right)\Big] = b^{s}\frac{u\left(X,Y;\lambda\right)}{Y}\ast v\left(X,Y;\lambda-1\right).\label{equation:binverseu=0}
\end{equation}
\item If $v_0(\lambda)=0$ then
\begin{equation}
    \frac{1}{Y}\Big[u\left(X,Y;\lambda\right)\ast v\left(X,Y;\lambda\right)\Big] = u\left(X, bY;\lambda\right)\ast \frac{v\left(X,Y;\lambda\right)}{Y}.\label{equation:binversev=0}
\end{equation}
\end{enumerate}
\end{lem}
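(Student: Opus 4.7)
The plan is to verify both identities by unfolding Definition \ref{b-proddefn} on each side and matching coefficients of $Y^k X^{r+s-1-k}$ for each $k \in \{0,\ldots,r+s-1\}$. Both sides are homogeneous of degree $r+s-1$, so this coefficient comparison is sufficient. The only subtle point is that the $1/Y$ on the left discards the constant-in-$Y$ component of $u \ast v$, which is precisely where the hypotheses $u_0(\lambda)=0$ or $v_0(\lambda)=0$ intervene.

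For part (1), the coefficient of $Y^{k+1} X^{r+s-1-k}$ in $u \ast v$ is, by definition, $c_{k+1}(\lambda) = \sum_{i=0}^{k+1} b^{is} u_i(\lambda) v_{k+1-i}(\lambda - i)$. The hypothesis $u_0(\lambda) \equiv 0$ kills the $i=0$ term, so the sum effectively runs from $i=1$. On the right, $\frac{u(X,Y;\lambda)}{Y} = \sum_{j=0}^{r-1} u_{j+1}(\lambda) Y^j X^{r-1-j}$ has first degree $r-1$, and forming its $b$-product with $v(X,Y;\lambda-1)$ (of degree $s$) produces a coefficient $\sum_{j=0}^{k} b^{js} u_{j+1}(\lambda) v_{k-j}(\lambda - 1 - j)$ at $Y^k X^{r+s-1-k}$. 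Multiplying by $b^s$ and reindexing $i = j+1$ gives $\sum_{i=1}^{k+1} b^{is} u_i(\lambda) v_{k+1-i}(\lambda - i)$, matching the left side. The $b^s$ is exactly the factor needed to upgrade the exponent $b^{(i-1)s}$ to $b^{is}$ after the index shift.

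For part (2), the same expansion of the left side yields $c_{k+1}(\lambda) = \sum_{i=0}^{k+1} b^{is} u_i(\lambda) v_{k+1-i}(\lambda - i)$, and now the hypothesis $v_0(\lambda) \equiv 0$ (as an identity in $\lambda$) kills the $i=k+1$ term, truncating the sum at $i=k$. On the right, the substitution $Y \mapsto bY$ in $u$ produces coefficients $b^i u_i(\lambda)$ attached to $Y^i X^{r-i}$, while $\frac{v(X,Y;\lambda)}{Y}$ has first degree $s-1$ with coefficients $v_{j+1}(\lambda)$. Their $b$-product at $Y^k X^{r+s-1-k}$ is therefore $\sum_{i=0}^{k} b^{i(s-1)} \cdot b^i u_i(\lambda) \cdot v_{k-i+1}(\lambda-i) = \sum_{i=0}^{k} b^{is} u_i(\lambda) v_{k+1-i}(\lambda - i)$, which matches. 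Here the $b^i$ from the substitution combines with $b^{i(s-1)}$ from the reduced degree of the second factor to reconstruct the full $b^{is}$ weight.

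The main obstacle is purely notational: keeping the index shifts straight and verifying that the two slightly different mechanisms (multiplying by $b^s$ with a $\lambda-1$ shift in part (1), versus substituting $Y \mapsto bY$ in part (2)) each reproduce the same $b^{is}$ weighting that appears in the definition of $\ast$. Once one writes both sides out explicitly as sums, the hypothesis on $u_0$ or $v_0$ precisely removes the boundary term that would otherwise spoil the match.
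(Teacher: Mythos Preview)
Your proposal is correct and follows essentially the same approach as the paper: both expand the $b$-product via Definition~\ref{b-proddefn}, perform the natural index shifts, and use the hypothesis $u_0\equiv 0$ (resp.\ $v_0\equiv 0$) to remove the single boundary term that would otherwise obstruct the identity. The paper presents the computation by starting from the right-hand side and massaging it into $\frac{1}{Y}(u\ast v)$, whereas you compare coefficients from both sides simultaneously, but the underlying argument is identical.
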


\begin{proof}~\\
\begin{enumerate}
 \item[(1)] Suppose $u_0(\lambda)=0$. Then
\begin{equation}
    \frac{u\left(X,Y;\lambda\right)}{Y} = \sum_{i=0}^{r}u_i(\lambda) Y^{i-1}X^{r-i} = \sum_{i=0}^{r-1} u_{i+1}(\lambda)Y^{i}X^{r-i-1}
\end{equation}
Hence
\begin{align}
    b^{s}\frac{u\left(X,Y;\lambda\right)}{Y}~\ast & ~v\left(X,Y;\lambda-1\right) \\
        & = b^{s}\sum^{r+s-1}_{u=0}\left(\sum_{\ell=0}^u b^{\ell s}u_{\ell+1}(\lambda)v_{u-\ell}(\lambda-\ell-1)\right) Y^{u}X^{r+s-1-u}
        \\
        & = b^{s}\sum_{u=0}^{r+s-1}\left(\sum_{i=1}^{u+1}b^{(i-1)s}u_i(\lambda)v_{u-i+1}(\lambda-i)\right)Y^{u}X^{r+s-1-u}
        \\
        & = b^{s}\sum_{j=1}^{r+s}\left(\sum_{i=1}^{j}b^{(i-1)s}u_i(\lambda)v_{j-i}(\lambda-i)\right)Y^{j-1}X^{r+s-j}
        \\
        & = \frac{1}{Y}\sum_{j=0}^{r+s}\left(\sum_{i=0}^{j}b^{is}u_i(\lambda)v_{j-i}(\lambda-i)\right)Y^{j}X^{r+s-j}
        \\
        & = \frac{1}{Y}\left(u\left(X,Y;\lambda\right)\ast v\left(X,Y;\lambda\right)\right)
\end{align}
since when $j=0$, $\displaystyle\sum_{i=0}^{j}b^{is}u_i(\lambda)v_{j-i}(\lambda-i)=0$ as $u_0(\lambda)=0$.
 \item[(2)] Now if $v_0(\lambda)=0$, then
\begin{align}
    \frac{v\left(X,Y;\lambda\right)}{Y} 
        & = \sum_{j=1}^s v_j(\lambda)Y^{j-1}X^{s-j}
        \\
        & = \sum_{i=0}^{s-1} v_{i+1}(\lambda)Y^{i}X^{s-i-1}.
\end{align}
So, 
\begin{align}
    u\left(X,bY;\lambda\right) \ast \frac{v\left(X,Y;\lambda\right)}{Y} 
        & = \sum_{u=0}^{r+s-1}\left(\sum_{j=0}^{u} b^{j(s-1)}b^{j}u_j(\lambda)v_{u-j+1}(\lambda-j)\right)Y^{u}X^{r+s-1-u}
        \\
        & = \sum_{\ell=1}^{r+s}\left(\sum_{j=0}^{\ell-1} b^{js}u_j(\lambda)v_{\ell-j}(\lambda-j)\right)Y^{\ell-1}X^{r+s-\ell}
        \\
        & = \frac{1}{Y}\sum_{\ell=1}^{r+s}\left(\sum_{j=0}^{\ell} b^{js}u_j(\lambda)v_{\ell-j}(\lambda-j)\right)Y^{\ell}X^{r+s-\ell}
        \\
        & = \frac{1}{Y}\sum_{\ell=0}^{r+s}\left(\sum_{j=0}^{\ell} b^{js}u_j(\lambda)v_{\ell-j}(\lambda-j)\right)Y^{\ell}X^{r+s-\ell}
        \\
        & = \frac{1}{Y}\left(u\left(X,Y;\lambda\right)\ast v\left(X,Y;\lambda\right)\right)
\end{align}
since when $j=\ell$, $\displaystyle\sum_{i=0}^{j}b^{is}u_i(\lambda)v_{j-i}(\lambda-i)=0$ as $v_0(\lambda)=0$.
\end{enumerate}
\end{proof}

\begin{thm}[Leibniz rule for the $b^{-1}$-derivative]\label{bLiebnizbminusderiv}
For two homogeneous polynomials in $Y$, $f(X,Y;\lambda)$ and $g(X,Y;\lambda)$ with degrees $r$ and $s$ respectively, the $\varphi^{th}$ (for $\varphi\geq0$) $b^{-1}$-derivative of their $b$-product is given by

\begin{equation}\label{equation:bleibnizbminusderiv}
    \Big[ f\left(X,Y;\lambda\right)\ast g\left(X,Y;\lambda\right)\Big]^{\{\varphi\}} = \sum_{\ell=0}^\varphi \bbinom{\varphi}{\ell}b^{\ell(s-\varphi+\ell)}f^{\{\ell\}}\left(X,Y;\lambda\right)\ast g^{\{\varphi-\ell\}}\left(X,Y;\lambda-\ell\right).
\end{equation}
\end{thm}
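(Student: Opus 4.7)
The plan is to mirror the proof of the Leibniz rule for the $b$-derivative (Theorem \ref{theorem:bLeibniz}), proceeding by induction on $\varphi$ with base case $\varphi=1$. The main differences from that proof will be (i) we factor through the identity $b^{-1}-1$ rather than $b-1$; (ii) the intermediate simplification uses both parts of Lemma \ref{lemma:bminussimplification}, which introduces the shift $\lambda \mapsto \lambda-1$ in one of the two pieces; and (iii) the binomial step-down at the end of the induction uses identity \eqref{equation:gaussiancoeffsx-1k-1} rather than \eqref{equation:Stepdown1}.

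For the base case, I would start from the definition
\begin{equation*}
[f \ast g]^{\{1\}} = \frac{f(X,b^{-1}Y)\ast g(X,b^{-1}Y) - f(X,Y)\ast g(X,Y)}{(b^{-1}-1)Y},
\end{equation*}
and add and subtract the mixed term $f(X,b^{-1}Y)\ast g(X,Y)$ to split the numerator as $f(X,b^{-1}Y)\ast[g(X,b^{-1}Y)-g(X,Y)] + [f(X,b^{-1}Y)-f(X,Y)]\ast g(X,Y)$. In the first summand, $g(X,b^{-1}Y)-g(X,Y)$ has vanishing coefficient at $Y^{0}$, so Lemma \ref{lemma:bminussimplification}(2) with the substitution $Y\mapsto bY$ in the outer factor collapses $f(X,b^{-1}\cdot bY)=f(X,Y)$, yielding $f(X,Y)\ast g^{\{1\}}(X,Y;\lambda)$. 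In the second summand, $f(X,b^{-1}Y)-f(X,Y)$ has vanishing coefficient at $Y^{0}$, so Lemma \ref{lemma:bminussimplification}(1) produces the factor $b^{s}$ and the shift $\lambda\mapsto\lambda-1$ on $g$, giving $b^{s}\,f^{\{1\}}(X,Y;\lambda)\ast g(X,Y;\lambda-1)$. Together these match $\ell=0$ and $\ell=1$ of \eqref{equation:bleibnizbminusderiv}.

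For the inductive step, I assume the formula for $\varphi=\bar\varphi$ and apply $[\,\cdot\,]^{\{1\}}$ termwise, using linearity. On each summand I apply the base case; crucially, the degree (in $Y$) of $g^{\{\bar\varphi-\ell\}}$ is $s-\bar\varphi+\ell$, so the resulting $b$-power is $b^{s-\bar\varphi+\ell}$, and the $\lambda$-argument of $g$ shifts from $\lambda-\ell$ to $\lambda-\ell-1$. This splits $[f\ast g]^{\{\bar\varphi+1\}}$ into two sums with indices $\ell=0,\ldots,\bar\varphi$ and $\ell=0,\ldots,\bar\varphi$; reindexing $m=\ell+1$ in the second yields
\begin{align*}
[f\ast g]^{\{\bar\varphi+1\}} &= \sum_{\ell=0}^{\bar\varphi} \bbinom{\bar\varphi}{\ell} b^{\ell}\cdot b^{\ell(s-\bar\varphi-1+\ell)} f^{\{\ell\}}\ast g^{\{\bar\varphi+1-\ell\}}(X,Y;\lambda-\ell) \\
&\quad + \sum_{\ell=1}^{\bar\varphi+1} \bbinom{\bar\varphi}{\ell-1} b^{\ell(s-\bar\varphi-1+\ell)} f^{\{\ell\}}\ast g^{\{\bar\varphi+1-\ell\}}(X,Y;\lambda-\ell).
\end{align*}
Combining on the common range $1\leq\ell\leq\bar\varphi$ reduces to the identity $\bbinom{\bar\varphi}{\ell-1}+b^{\ell}\bbinom{\bar\varphi}{\ell}=\bbinom{\bar\varphi+1}{\ell}$, which is exactly \eqref{equation:gaussiancoeffsx-1k-1}, and the boundary terms $\ell=0,\bar\varphi+1$ already carry the correct coefficients $1$.

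The main obstacle I anticipate is simply bookkeeping: one must track three moving pieces simultaneously (the $b$-power exponent, the shift in the $\lambda$-argument of $g$, and the index shift between the two sums) and verify that the base-case $b$-power $b^{s-\bar\varphi+\ell}$ combines with the outer $b^{\ell(s-\bar\varphi+\ell)}$ to give exactly the target $b^{(\ell+1)(s-\bar\varphi-1+(\ell+1))}$ after reindexing. Once these align, invoking \eqref{equation:gaussiancoeffsx-1k-1} closes the induction in one line.
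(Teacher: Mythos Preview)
Your proposal is correct and follows essentially the same approach as the paper's own proof: induction on $\varphi$ with the base case obtained by splitting the difference quotient via the mixed term $f(X,b^{-1}Y;\lambda)\ast g(X,Y;\lambda)$ and then applying both parts of Lemma \ref{lemma:bminussimplification}, followed by an inductive step that reindexes one of the two resulting sums and closes with identity \eqref{equation:gaussiancoeffsx-1k-1}. The only cosmetic difference is that you factor the exponent as $b^{\ell(s-\bar\varphi+\ell)}=b^{\ell}\cdot b^{\ell(s-\bar\varphi-1+\ell)}$ before combining, whereas the paper first combines and then extracts a $b^{-\ell}$; the algebra is identical.
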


\begin{proof}
For simplification we shall write $f(X,Y;\lambda),~g(X,Y;\lambda)$ as $f(Y;\lambda),~g(Y;\lambda)$. Now by differentiation we have
\begin{align}
    \Big[ f\left(Y;\lambda\right)\ast g\left(Y;\lambda\right)\Big]^{\{1\}} 
        & = \frac{f\left(b^{-1}Y;\lambda\right)\ast g\left(b^{-1}Y;\lambda\right)-f\left(Y;\lambda\right)\ast g\left(Y;\lambda\right)}{(b^{-1}-1)Y} 
        \\
        & = \frac{1}{(b^{-1}-1)Y} \bigg\{ f\left(b^{-1}Y;\lambda\right)\ast g\left(b^{-1}Y;\lambda\right)-f\left(b^{-1}Y;\lambda\right)\ast g\left(Y;\lambda\right) 
        \\
        & \hspace{1cm} + f\left(b^{-1}Y;\lambda\right)\ast g\left(Y;\lambda\right) - f\left(Y;\lambda\right)\ast g\left(Y;\lambda\right)\bigg\} 
        \\
        & = \frac{1}{(b^{-1}-1)Y}\bigg\{ f\left(b^{-1}Y;\lambda\right)\ast\left(g\left(b^{-1}Y;\lambda\right)-g\left(Y;\lambda\right)\right)\bigg\} 
        \\
        & \hspace{1cm} +\frac{1}{(b^{-1}-1)Y}\bigg\{\left(f\left(b^{-1}Y;\lambda\right)-f\left(Y;\lambda\right)\right)\ast g\left(Y;\lambda\right)\bigg\} 
        \\
        \overset{\eqref{equation:binversev=0}}&{=} f\left(Y;\lambda\right)\ast\frac{\left(g\left(b^{-1}Y;\lambda\right)-g\left(Y;\lambda\right)\right)}{\left(b^{-1}-1\right)Y} 
        \\
        &\hspace{1cm}\overset{\eqref{equation:binverseu=0}}{+}b^{s}\frac{\left(f\left(b^{-1}Y;\lambda\right)-f\left(Y;\lambda\right)\right)}{\left(b^{-1}-1\right)Y}\ast g\left(Y;\lambda-1\right) 
        \\
        & = f\left(Y;\lambda\right)\ast g^{\{1\}}\left(Y;\lambda\right) + b^{s} f^{\{1\}}\left(Y;\lambda\right)\ast g\left(Y;\lambda-1\right).\label{equation:lastlineleibnizb-1}
\end{align}
since $g(Y;\lambda)$ has the same degree as $g(b^{-1}Y;\lambda)$ and similarly, $f(Y;\lambda)$ has the same degree as $f(b^{-1}Y;\lambda)$. So the initial case holds. Assume the statement holds true for $\varphi=\overline{\varphi}$, i.e.

\begin{equation}
    \Big[ f\left(X,Y;\lambda\right)\ast g\left(X,Y;\lambda\right)\Big]^{\{\overline{\varphi}\}} = \sum_{\ell=0}^{\overline{\varphi}} \bbinom{\overline{\varphi}}{\ell}b^{\ell(s-\overline{\varphi}+\ell)}f^{\{\ell\}}\left(X,Y;\lambda\right)\ast g^{\{\overline{\varphi}-\ell\}}\left(X,Y;\lambda-\ell\right).
\end{equation}
Now considering $\overline{\varphi}+1$  and for simplicity we write $f(X,Y;\lambda),~g(X,Y;\lambda)$ as $f(\lambda),g(\lambda)$ we have

\begin{align}
    \Big[ f\left(\lambda\right)\ast g\left(\lambda\right)\Big]^{\{\overline{\varphi}+1\}} 
        & = \left[\sum_{\ell=0}^{\overline{\varphi}} \bbinom{\overline{\varphi}}{\ell}b^{\ell(s-\overline{\varphi}+\ell)}f^{\{\ell\}}\left(\lambda\right)\ast g^{\{\overline{\varphi}-\ell\}}\left(\lambda-\ell\right)\right]^{\{1\}}
        \\
    %
        \overset{\eqref{equation:lastlineleibnizb-1}}&{=} \sum_{\ell=0}^{\overline{\varphi}} \bbinom{\overline{\varphi}}{\ell}b^{\ell(s-\overline{\varphi}+\ell)}f^{\{\ell\}}\left(\lambda\right)\ast g^{\{\overline{\varphi}-\ell+1\}}\left(\lambda-\ell\right)
        \\
        & \hspace{1cm}+ \sum_{\ell=0}^{\overline{\varphi}} \bbinom{\overline{\varphi}}{\ell}b^{\ell(s-\overline{\varphi}+\ell)}b^{s-\overline{\varphi}+\ell}f^{\{\ell+1\}}\left(\lambda\right)\ast g^{\{\overline{\varphi}-\ell\}}\left(\lambda-\ell-1\right)
        \\
        & = \sum_{\ell=0}^{\overline{\varphi}} \bbinom{\overline{\varphi}}{\ell}b^{\ell(s-\overline{\varphi}+\ell)}f^{\{\ell\}}\left(\lambda\right)\ast g^{\{\overline{\varphi}-\ell+1\}}\left(\lambda-\ell\right)
        \\
        &  \hspace{1cm}+ \sum_{\ell=1}^{\overline{\varphi}+1} \bbinom{\overline{\varphi}}{\ell-1}b^{(\ell-1)(s-\overline{\varphi}+\ell-1)}b^{s-\overline{\varphi}+(\ell-1)}f^{\{\ell\}}\left(\lambda\right)\ast g^{\{\overline{\varphi}-\ell+1\}}\left(\lambda-\ell\right)
        \\
        & = f\left(\lambda\right)\ast g^{\{\overline{\varphi}+1\}}\left(\lambda\right)+\sum_{\ell=1}^{\overline{\varphi}} \bbinom{\overline{\varphi}}{\ell}b^{\ell(s-\overline{\varphi}+\ell)}f^{\{\ell\}}\left(\lambda\right)\ast g^{\{\overline{\varphi}-\ell+1\}}\left(\lambda-\ell\right)
        \\
        &  \hspace{1cm}+ \sum_{\ell=1}^{\overline{\varphi}} \bbinom{\overline{\varphi}}{\ell-1}b^{(\ell-1)(s-\overline{\varphi}+\ell-1)}b^{(s-\overline{\varphi}+(\ell-1))}f^{\{\ell\}}\left(\lambda\right)\ast g^{\{\overline{\varphi}-\ell+1\}}\left(\lambda-\ell\right)
        \\
        &  \hspace{1cm}+ \bbinom{\overline{\varphi}}{\overline{\varphi}}b^{(\overline{\varphi}+1)(s+1)}b^{-\overline{\varphi}-1} f^{\{\overline{\varphi}+1\}}\left(\lambda\right)\ast g\left(\lambda-(\overline{\varphi}+1)\right)
        \\
        & = f\left(\lambda\right)\ast g^{\{\overline{\varphi}+1\}}\left(\lambda\right)+ \sum_{\ell=1}^{\overline{\varphi}}\left(\bbinom{\overline{\varphi}}{\ell}+b^{-\ell}\bbinom{\overline{\varphi}}{\ell-1}\right)
        \\
        & \hspace{1cm} \times b^{\ell(s-\overline{\varphi}+\ell)}f^{\{\ell\}}\left(\lambda\right) \ast g^{\{\overline{\varphi}+1-\ell\}}\left(\lambda-\ell\right)
        \\
        &  \hspace{1cm} + b^{s(\overline{\varphi}+1)}f^{\{\overline{\varphi}+1\}}\left(\lambda\right) \ast g\left(\lambda-(\overline{\varphi}+1)\right)
        \\
        \overset{\eqref{equation:gaussiancoeffsx-1k-1}}&{=} f\left(\lambda\right)\ast g^{\{\overline{\varphi}+1\}}\left(\lambda\right) + \sum_{\ell=1}^{\overline{\varphi}}b^{-\ell}\bbinom{\overline{\varphi}+1}{\ell}b^{\ell(s-\overline{\varphi}+\ell)}f^{\{\ell\}}\left(\lambda\right)\ast g^{\{\overline{\varphi}+1-\ell\}}\left(\lambda-\ell\right)
        \\
        &  \hspace{1cm} +\bbinom{\overline{\varphi}+1}{\overline{\varphi}+1}b^{(\overline{\varphi}+1)(s-\overline{\varphi}-1+(\overline{\varphi}+1))}f^{\{\overline{\varphi}+1\}}\left(\lambda\right)\ast g^{\{\overline{\varphi}+1-(\overline{\varphi}+1)\}}\left(\lambda-(\overline{\varphi}+1)\right)
        \\
        & = \sum_{\ell=0}^{\overline{\varphi}+1}\bbinom{\overline{\varphi}+1}{\ell}b^{\ell(s-(\overline{\varphi}+1)+\ell)}f^{\{\ell\}}\left(\lambda\right) \ast g^{\{\overline{\varphi}+1-\ell\}}\left(\lambda-\ell\right)
\end{align}
as required.
\end{proof}
\pagebreak
\subsection{Evaluating the \texorpdfstring{$b$}{b}-Derivative and the \texorpdfstring{$b^{-1}$}{b-1}-Derivative}

At this point we need to introduce two lemmas which yield useful results when developing moments of the weight distribution.

\begin{lem}\label{lemma:bd'sandbetab's}
For $j,\ell\in\mathbb{Z}^+,~ 0\leq \ell \leq j$ and $X=Y=1$,
\begin{equation}\label{equation:bd'snadbetab's}
    \nu^{[j](\ell)}(1,1;\lambda)=\beta_{b}(j,j)\delta_{j\ell}.
\end{equation}
\end{lem}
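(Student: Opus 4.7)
The plan is to combine the closed form for the $b$-derivatives of $\nu^{[j]}$ already established in Lemma \ref{lemma:bderiv} with a direct evaluation of $\nu^{[k]}(1,1;\lambda)$ using the Gaussian-coefficient identity \eqref{equation:deltaijbs}.

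First I would apply \eqref{equation:bnuderiv} to reduce the $\ell^{\text{th}}$ $b$-derivative to a $(j-\ell)^{\text{th}}$ power, namely
\begin{equation*}
\nu^{[j](\ell)}(1,1;\lambda) = \beta_b(j,\ell)\,\nu^{[j-\ell]}(1,1;\lambda).
\end{equation*}
So the problem reduces entirely to evaluating $\nu^{[k]}(1,1;\lambda)$ for $k\geq 0$. Using the explicit polynomial form from Theorem \ref{lemma:bnulemma} and substituting $X=Y=1$ gives
\begin{equation*}
\nu^{[k]}(1,1;\lambda) = \sum_{u=0}^{k} (-1)^{u} b^{\sigma_u} \bbinom{k}{u}.
\end{equation*}

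Next I would invoke the identity \eqref{equation:deltaijbs} with $i=0$ and $j=k$; since $\bbinom{u}{0}=1$, this reads
\begin{equation*}
\sum_{u=0}^{k} (-1)^{u} b^{\sigma_u} \bbinom{k}{u} = \delta_{0k},
\end{equation*}
so $\nu^{[k]}(1,1;\lambda) = \delta_{0k}$ (noting the $k=0$ case is trivially $\nu^{[0]}=1$, consistent with this).

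Plugging back in, $\nu^{[j](\ell)}(1,1;\lambda) = \beta_b(j,\ell)\,\delta_{0,j-\ell}$. The Kronecker delta is nonzero only when $\ell = j$, in which case $\beta_b(j,\ell) = \beta_b(j,j)$, so the right-hand side equals $\beta_b(j,j)\,\delta_{j\ell}$ as claimed. There is no real obstacle here since every ingredient is already in place; the only thing to be slightly careful about is recognising that the specialisation $i=0$ of \eqref{equation:deltaijbs} is exactly the sum that appears after setting $X=Y=1$ in $\nu^{[k]}$.
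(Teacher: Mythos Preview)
Your proof is correct and follows essentially the same approach as the paper: apply \eqref{equation:bnuderiv} to reduce to $\beta_b(j,\ell)\,\nu^{[j-\ell]}(1,1;\lambda)$, expand $\nu^{[j-\ell]}$ via Theorem \ref{lemma:bnulemma}, and invoke \eqref{equation:deltaijbs}. The only minor difference is that the paper first rewrites $\beta_b(j,\ell)=\beta_b(\ell,\ell)\bbinom{j}{\ell}$ and reindexes the sum to match the general form of \eqref{equation:deltaijbs} with indices $(\ell,j)$, whereas you more directly specialise \eqref{equation:deltaijbs} to $i=0$; your route is marginally shorter but the substance is identical.
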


\begin{proof}
Consider 
\begin{equation}
    \nu^{[j](\ell)}(X,Y;\lambda) 
        \overset{\eqref{equation:bnuderiv}}{=}  \beta_{b}(j,\ell)\nu^{[j-\ell]}(X,Y;\lambda) = \beta_{b}(j,\ell)\sum_{u=0}^{j-\ell}(-1)^ub^{\sigma_u}\bbinom{j-\ell}{u}Y^{u}X^{(j-\ell)-u}.
\end{equation}
So \begin{align}
    \nu^{[j](\ell)}(1,1;\lambda) 
        & = \beta_{b}(j,\ell)\sum_{u=0}^{j-\ell}(-1)^ub^{\sigma_u}\bbinom{j-\ell}{u}
        \\
        \overset{\eqref{equation:betabstartdifferent}}&{=} \beta_{b}(\ell,\ell)\bbinom{j}{\ell}\sum_{u=0}^{j-\ell}(-1)^ub^{\sigma_u}\bbinom{j-\ell}{u}
        \\
        \overset{\eqref{equation:gaussianxx-k}\eqref{equation:gaussianswapplaces}}&{=} \beta_{b}(\ell,\ell)\sum_{k=\ell}^{j}(-1)^{k-
        \ell}b^{\sigma_{k-\ell}}\bbinom{j}{k}\bbinom{k}{\ell}
        \\
        \overset{\eqref{equation:deltaijbs}}&{=}\beta_b(\ell,\ell)\delta_{\ell j}=\beta_{b}(j,j)\delta_{j\ell}.
\end{align}
\end{proof}

\begin{lem}\label{lemma:brhoandmu}
For any homogeneous polynomial, $\rho\left(X,Y;\lambda\right)$ and for any $s\geq 0$, \begin{equation}\label{equation:brhoandmu}
\left(\rho \ast \mu^{[s]}\right)\left(1,1;\lambda\right) =\left(cb^{\lambda}\right)^s\rho(1,1;\lambda).
\end{equation}
\end{lem}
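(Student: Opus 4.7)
The plan is to expand both factors coefficient-by-coefficient, apply the definition of the $b$-product, and evaluate at $X=Y=1$ in a way that exposes the cancellation of the $b^{is}$ factor against the $\lambda$-shift inherent in the $b$-product.

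First I would write $\rho(X,Y;\lambda)=\sum_{i=0}^{r}\rho_i(\lambda)Y^i X^{r-i}$ and invoke Theorem \ref{theorem:bmu} to expand $\mu^{[s]}(X,Y;\lambda)=\sum_{u=0}^{s}\mu_u(\lambda,s)Y^u X^{s-u}$ with $\mu_u(\lambda,s)=\bbinom{s}{u}\gamma_{b,c}(\lambda,u)$. Applying Definition \ref{b-proddefn}, evaluating at $X=Y=1$, and swapping the order of summation gives
\begin{align}
(\rho\ast\mu^{[s]})(1,1;\lambda)
  &=\sum_{u=0}^{r+s}\sum_{i=0}^{u}b^{is}\rho_i(\lambda)\mu_{u-i}(\lambda-i,s) \\
  &=\sum_{i=0}^{r}b^{is}\rho_i(\lambda)\sum_{j=0}^{s}\mu_j(\lambda-i,s) \\
  &=\sum_{i=0}^{r}b^{is}\rho_i(\lambda)\,\mu^{[s]}(1,1;\lambda-i).
\end{align}
Thus the whole statement reduces to showing that $\mu^{[s]}(1,1;\mu)=(cb^{\mu})^{s}$ for every real $\mu$.

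Next I would prove this auxiliary identity by a direct application of \eqref{equation:producttosumgauss}. Setting $x=s$ and $y=cb^{\mu}$ in \eqref{equation:producttosumgauss} yields
\begin{equation}
\sum_{k=0}^{s}\bbinom{s}{k}\prod_{i=0}^{k-1}\left(cb^{\mu}-b^{i}\right)=(cb^{\mu})^{s},
\end{equation}
and the left-hand side is exactly $\sum_{k=0}^{s}\bbinom{s}{k}\gamma_{b,c}(\mu,k)=\sum_{u=0}^{s}\mu_u(\mu,s)=\mu^{[s]}(1,1;\mu)$.

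Finally, substituting $\mu^{[s]}(1,1;\lambda-i)=(cb^{\lambda-i})^{s}=(cb^{\lambda})^{s}b^{-is}$ into the reduced expression collapses the dependence on $i$:
\begin{align}
(\rho\ast\mu^{[s]})(1,1;\lambda)
  &=\sum_{i=0}^{r}b^{is}\rho_i(\lambda)(cb^{\lambda})^{s}b^{-is} \\
  &=(cb^{\lambda})^{s}\sum_{i=0}^{r}\rho_i(\lambda) \\
  &=(cb^{\lambda})^{s}\rho(1,1;\lambda),
\end{align}
which is the claim. There is no serious obstacle here: the only subtlety is noticing that the non-commutative $b^{is}$ weight in the $b$-product is exactly the reciprocal of the $b$-shift picked up when pulling the parameter $\lambda-i$ out of $\mu^{[s]}(1,1;\lambda-i)$, so the factor of $(cb^{\lambda})^{s}$ emerges uniformly and factors out of the sum over $i$.
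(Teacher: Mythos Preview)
Your proof is correct and is essentially identical to the paper's own argument: both expand the $b$-product coefficientwise, evaluate at $X=Y=1$, swap the order of summation (using that $\rho_i=0$ for $i>r$ and $\mu_u=0$ for $u>s$), apply \eqref{equation:producttosumgauss} to collapse the inner sum to $(cb^{\lambda-i})^s$, and then observe the cancellation $b^{is}\cdot b^{-is}$. The only cosmetic difference is that you isolate $\mu^{[s]}(1,1;\mu)=(cb^{\mu})^s$ as a named sublemma before substituting, whereas the paper performs this step inline.
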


\begin{proof}
Let $\rho\left(X,Y;\lambda\right)=\displaystyle\sum_{i=0}^r \rho_i(\lambda)Y^{i}X^{r-i}$, then from Theorem \ref{theorem:bmu}
\begin{equation}
    \mu^{[s]}(X,Y;\lambda) = \sum_{t=0}^s \mu^{[s]}_t(\lambda)Y^{t}X^{s-t} = \sum_{t=0}^{s}\bbinom{s}{t}\gamma_{b,c}(\lambda,t)Y^{t}X^{s-t} 
\end{equation}
and
\begin{equation}
    \left(\rho\ast \mu^{[s]}\right)(X,Y;\lambda) = \sum_{u=0}^{r+s}c_u(\lambda)Y^{u}X^{(r+s-u)}
\end{equation}
where
\begin{equation}
    c_u(\lambda)=\sum_{i=0}^u b^{is}\rho_i(\lambda)\mu^{[s]}_{u-i}(\lambda-i).
\end{equation}
Then, 
\begin{align}
    \left(\rho\ast \mu^{[s]}\right)(1,1;\lambda) 
        & = \sum_{u=0}^{r+s} c_u(\lambda)
        \\
        & 
        = \sum_{u=0}^{r+s}\sum_{i=0}^{u} b^{is}\rho_i(\lambda)\mu^{[s]}_{u-i}(\lambda-i)
        \\
        & = \sum_{j=0}^{r+s}b^{js}\rho_j(\lambda)\left(\sum_{k=0}^{r+s-j}\mu_k^{[s]}(\lambda-j)\right)
        \\
        & = \sum_{j=0}^{r}b^{js}\rho_j(\lambda)\left(\sum_{k=0}^s \mu_k^{[s]}(\lambda-j)\right)
        \\
        & = \sum_{j=0}^r b^{js} \rho_j(\lambda)\left(\sum_{k=0}^s {\bbinom{s}{k}}\gamma_{b,c}(\lambda-j,k)\right)
        \\
        \overset{\eqref{equation:producttosumgauss}}&{=} \sum_{j=0}^r b^{js} \rho_j(\lambda)\left(cb^{\lambda-j}\right)^s
        \\
        & = \left(cb^{\lambda }\right)^s\rho(1,1;\lambda)
\end{align}
since $\rho_j(\lambda)=0$ when $j > r$ and $\mu_k^{[s]}(\lambda-j)=0$ when $k>s$.
\end{proof}

\section{The \texorpdfstring{$b$}{b}-Moments of the Weight Distribution}\label{section:bmoments}

This final section develops a theory of $b$-moments analogous to \cite[Section 6]{friedlander2023macwilliams} and as before produces comparable formulas to the binomial moments in the Hamming case. Again the moments derived from the $b$-derivative and the $b^{-1}$-derivative are not exactly the same, as the first is using the derivative with respect to $X$ and the other is using the derivative with respect to $Y$. 
\subsection{Moments derived from the \texorpdfstring{$b$}{b}-Derivative}
In the first case we consider the moments of the weight distribution with respect to $X$.
\begin{prop}\label{prop:bmomentsbderiv}
For an $(\mathscr{X},R)$ $n$-class Krawtchouk association scheme, $0 \le \varphi \le n,$ and a linear code $\mathscr{C} \subseteq \mathscr{X}$, and its dual $\mathscr{C}^\perp\subseteq\mathscr{X}$ with weight distributions ${\boldsymbol{c}}=(c_0,\ldots,c_n)$ and ${\boldsymbol{ c'}}=(c'_0,\ldots,c'_n)$ respectively we have
\begin{equation}
    \sum_{i=0}^{n-\varphi}\bbinom{n-i}{\varphi}c_i = \frac{1}{|\mathscr{C}^\perp|}\left(cb^{n}\right)^{n-\varphi}\sum_{i=0}^{\varphi} \bbinom{n-i}{n-\varphi}c_i^{'}.
\end{equation}
\end{prop}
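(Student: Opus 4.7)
The plan is to differentiate the MacWilliams Identity $\varphi$ times using the $b$-derivative and evaluate at $X=Y=1$. Since the target identity has $c_i$ on the left-hand side and $c_i'$ on the right, I first apply Theorem~\ref{ASSOCIATIONMacWilliams} to the dual code $\mathscr{C}^\perp$ (using $(\mathscr{C}^\perp)^\perp=\mathscr{C}$) to obtain the dual form
$$|\mathscr{C}^\perp|\,W_{\mathscr{C}}^S(X,Y) = \sum_{i=0}^n c_i'\,\nu^{[i]}(X,Y;n)\ast\mu^{[n-i]}(X,Y;n),$$
then take the $\varphi^{th}$ $b$-derivative of both sides and set $X=Y=1$.

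For the left-hand side, Lemma~\ref{lemma:bderiv}(2) gives
$W_{\mathscr{C}}^{S(\varphi)}(X,Y)=\sum_{j=0}^{n-\varphi}c_j\beta_b(n-j,\varphi)Y^jX^{n-j-\varphi}$.
Evaluating at $X=Y=1$ and extracting the common factor $\beta_b(\varphi,\varphi)$ via identity~\eqref{equation:betabstartdifferent} produces $|\mathscr{C}^\perp|\beta_b(\varphi,\varphi)\sum_{j=0}^{n-\varphi}\bbinom{n-j}{\varphi}c_j$, which is $|\mathscr{C}^\perp|\beta_b(\varphi,\varphi)$ times the left-hand side of the desired identity.

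For the right-hand side, I would expand each $[\nu^{[i]}\ast\mu^{[n-i]}]^{(\varphi)}$ using the Leibniz rule (Theorem~\ref{theorem:bLeibniz}) together with the derivative formulas~\eqref{equation:bmuderiv} and~\eqref{equation:bnuderiv}, obtaining a sum of terms of the form $\nu^{[i-\ell]}\ast\mu^{[n-i-\varphi+\ell]}$ with $\beta_b$ and power-of-$b$ coefficients. Lemma~\ref{lemma:brhoandmu} then evaluates each such term at $(1,1)$ as $(cb^n)^{n-i-\varphi+\ell}\nu^{[i-\ell]}(1,1;n)$. The crucial collapse is that $\nu^{[k]}(1,1;n)=\prod_{j=0}^{k-1}(1-b^j)$ vanishes for $k\geq 1$ (applying identity~\eqref{equation:sumtoprod} with $y=1$, so the $j=0$ factor is zero), hence in the Leibniz expansion only the single term with $\ell=i$ survives, forcing $i\leq\varphi$. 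The right-hand side therefore reduces to
$$\sum_{i=0}^{\varphi} c_i'\,\bbinom{\varphi}{i}\beta_b(i,i)\,\beta_b(n-i,\varphi-i)\,(cb^n)^{n-\varphi}.$$

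The main obstacle is the final rearrangement of $\beta_b$ symbols into the target binomial $\bbinom{n-i}{n-\varphi}$. The calculation uses three identities from Lemma~\ref{lemma:betabmanipulation}: identity~\eqref{equation:betabstartdifferent} converts $\bbinom{\varphi}{i}\beta_b(i,i)$ into $\beta_b(\varphi,i)$; the same identity combined with~\eqref{equation:gaussianxx-k} turns $\beta_b(n-i,\varphi-i)$ into $\bbinom{n-i}{n-\varphi}\beta_b(\varphi-i,\varphi-i)$; finally~\eqref{equation:betabstartsame} collapses $\beta_b(\varphi,i)\beta_b(\varphi-i,\varphi-i)$ to $\beta_b(\varphi,\varphi)$. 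Equating the two evaluated sides, the common factor $\beta_b(\varphi,\varphi)$ cancels, and dividing by $|\mathscr{C}^\perp|$ gives the claimed moment identity.
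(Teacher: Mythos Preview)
Your proposal is correct and follows essentially the same route as the paper: apply the MacWilliams Identity to $\mathscr{C}^\perp$, take the $\varphi^{\text{th}}$ $b$-derivative of both sides, use the Leibniz rule together with the derivative formulas for $\mu^{[k]}$ and $\nu^{[k]}$, evaluate at $X=Y=1$ via Lemma~\ref{lemma:brhoandmu}, and then simplify the $\beta_b$ factors. The only cosmetic difference is that the paper packages the collapse of the $\nu$-terms into Lemma~\ref{lemma:bd'sandbetab's} (proved there via identity~\eqref{equation:deltaijbs}), whereas you obtain the same vanishing directly from~\eqref{equation:sumtoprod} by writing $\nu^{[k]}(1,1;n)=\prod_{j=0}^{k-1}(1-b^{j})$; both arguments yield $\nu^{[i](\ell)}(1,1;n)=\beta_b(i,i)\delta_{i\ell}$ and lead to the identical final simplification.
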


\begin{proof}
We apply Theorem \ref{ASSOCIATIONMacWilliams} to $\mathscr{C}^\perp$ to get
\begin{equation}
    W_{\mathscr{C}}^{S}(X,Y)=\frac{1}{\left| \mathscr{C}^\perp\right|}\overline{W}_{\mathscr{C}^\perp}^{S}\left( X +(cb^n-1)Y, X-Y\right)
\end{equation}
or equivalently
\begin{align}
    \sum_{i=0}^n c_i Y^{i}X^{n-i}
        & =\frac{1}{\left\vert \mathscr{C}^{\perp}\right|}\sum_{i=0}^n c_i'\left(X-Y\right)^{[i]}\ast \left[X +(cb^n-1)Y\right]^{[n-i]} 
        \\
        & = \frac{1}{\left\vert \mathscr{C}^{\perp}\right|} \sum_{i=0}^n c_i'\nu^{[i]}(X,Y;n)\ast \mu^{[n-i]}(X,Y;n). \label{equation:b}
\end{align}

For each side of Equation \eqref{equation:b}, we shall apply the $b$-derivative $\varphi$ times and then evaluate at $X=Y=1$.

For the left hand side, we obtain
\begin{equation}
    \left(\sum_{i=0}^n c_i Y^{i}X^{n-i}\right)^{(\varphi)}\overset{\eqref{equation:vthgeneralbderivative}}{=}
        \sum_{i=0}^{n-\varphi}c_i \beta_{b}(n-i,\varphi)Y^{i}X^{n-i-\varphi}.
\end{equation}
Setting $X=Y=1$ we then have
\begin{align}
    \sum_{i=0}^{n-\varphi} c_i\beta_{b}(n-i,\varphi)
        \overset{\eqref{equation:betabstartdifferent}}&{=} \sum_{i=0}^{n-\varphi}c_i\bbinom{n-i}{\varphi}\beta_{b}(\varphi,\varphi)
        \\ 
        & = \beta_{b}(\varphi,\varphi)\sum_{i=0}^{n-\varphi}c_i \bbinom{n-i}{\varphi}.
\end{align}

We now move on to the right hand side. For simplicity we write $\mu(X,Y;n)$ as $\mu$ and similarly $\nu(X,Y;n)$ as $\nu$. We have
\begin{align}
    \left(\frac{1}{\left\vert \mathscr{C}^\perp\right|}\sum_{i=0}^n c_i' \nu^{[i]}\ast \mu^{[n-i]}\right)^{(\varphi)} 
        \overset{\eqref{equation:bleibniz}}&{=} \frac{1}{\left\vert \mathscr{C}^\perp\right|}\sum_{i=0}^n c_i'\left(\sum_{\ell=0}^{\varphi}\bbinom{\varphi}{\ell}b^{(\varphi-\ell)(i-\ell)}\nu^{[i](\ell)}\ast \mu^{[n-i](\varphi-\ell)}\right)
        \\
        & = \frac{1}{\left\vert \mathscr{C}^\perp\right|}\sum_{i=0}^n c_i'\psi_i(X,Y;n)
\end{align}
where
\begin{equation}
    \psi_i(X,Y;n) 
         = \sum_{\ell=0}^{\varphi}\bbinom{\varphi}{\ell}b^{(\varphi-\ell)(i-\ell)}\nu^{[i](\ell)}(X,Y;n) \ast \mu^{[n-i](\varphi-\ell)}(X,Y;n).
\end{equation}
Then with $X=Y=1$,
\begin{align}
    \psi_i(1,1;n) 
        \overset{\eqref{equation:bmuderiv}}&{=} \sum_{\ell=0}^{\varphi} \bbinom{\varphi}{\ell}b^{(\varphi-\ell)(i-\ell)}\beta_{b}(n-i,\varphi-\ell)\left(\nu^{[i](\ell)}\ast \mu^{[n-i-\varphi+\ell]}\right)(1,1;n)
        \\
        \overset{\eqref{equation:brhoandmu}}&{=} \sum_{\ell=0}^{\varphi} \bbinom{\varphi}{\ell}b^{(\varphi-\ell)(i-\ell)}\beta_{b}(n-i,\varphi-\ell)\left(cb^{n}\right)^{n-i-(\varphi-\ell)}\nu^{[i](\ell)}(1,1;n)
        \\
        \overset{\eqref{equation:bd'snadbetab's}}&{=} \sum_{\ell=0}^{\varphi}b^{(\varphi-\ell)(i-\ell)}\bbinom{\varphi}{\ell}\beta_{b}(n-i,\varphi-\ell)\left(cb^{n}\right)^{n-i-(\varphi-\ell)}\beta_{b}(i,i)\delta_{i\ell}
        \\
    %
        \overset{\eqref{equation:betabstartdifferent}}&{=} \bbinom{\varphi}{i}\bbinom{n-i}{\varphi-1}\beta_{b}(\varphi-i,\varphi-i)\left(cb^{n}\right)^{n-\varphi}\beta_{b}(i,i)
        \\
        \overset{\eqref{equation:betabstartsame}}&{=} \bbinom{n-i}{\varphi-i}\left(cb^{n}\right)^{n-\varphi}\beta_{b}(\varphi,\varphi).
\end{align}

So

\begin{align}
    \frac{1}{|\mathscr{C}^\perp|}\sum_{i=0}^n c_i'\psi_i(1,1;n) 
        &  = \frac{1}{|\mathscr{C}^\perp|} \sum_{i=0}^{\varphi}c_i' \bbinom{n-i}{\varphi-i}\left(cb^{n}\right)^{n-\varphi}\beta_{b}(\varphi,\varphi)
        \\
        \overset{\eqref{equation:gaussianxx-k}}&{=} \beta_{b}(\varphi,\varphi)\frac{1}{|\mathscr{C}^\perp|}\left(cb^{n}\right)^{n-\varphi}\sum_{i=0}^{\varphi} c_i'\bbinom{n-i}{n-\varphi}.        
\end{align}
Combining the results for each side, and simplifying, we finally obtain
\begin{equation}
    \sum_{i=0}^{n-\varphi}c_i \bbinom{n-i}{\varphi} = \frac{1}{|\mathscr{C}^\perp|}\left(cb^{n}\right)^{n-\varphi}\sum_{i=0}^{\varphi} c_i'\bbinom{n-i}{n-\varphi}
\end{equation}
as required.
\end{proof}

\begin{note}
In particular, if $\varphi=0$ we have
\begin{equation}
    \sum_{i=0}^n c_i =\frac{\left(cb^{n}\right)^{n}}{|\mathscr{C}^\perp|}c_0' = \frac{\left(cb^{n}\right)^{n}}{|\mathscr{C}^\perp|}.
\end{equation}        
\end{note}

We can simplify Proposition \ref{prop:bmomentsbderiv} if $\varphi$ is less than the minimum distance of the dual code.

\begin{cor}\label{corrollary:bsimplificationpropbderiv}
Let $d_{S}'$ be the minimum distance of $\mathscr{C}^\perp$. If $0\leq \varphi < d_{S}'$ then
\begin{equation}
    \sum_{i=0}^{n-\varphi}\bbinom{n-i}{\varphi}c_i = \frac{1}{|\mathscr{C}^\perp|}\left(cb^{n}\right)^{n-\varphi} \bbinom{n}{\varphi}.
\end{equation}
\end{cor}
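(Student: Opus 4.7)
The plan is to apply Proposition \ref{prop:bmomentsbderiv} directly and then simplify the right-hand side using the hypothesis $\varphi < d_S'$. By the proposition, we have
\begin{equation*}
    \sum_{i=0}^{n-\varphi}\bbinom{n-i}{\varphi}c_i = \frac{1}{|\mathscr{C}^\perp|}\left(cb^{n}\right)^{n-\varphi}\sum_{i=0}^{\varphi} \bbinom{n-i}{n-\varphi}c_i^{'},
\end{equation*}
so everything reduces to evaluating the finite sum $\sum_{i=0}^{\varphi} \bbinom{n-i}{n-\varphi}c_i'$ under the hypothesis.

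Next I would invoke the definition of minimum distance. Since $\mathscr{C}^\perp$ is a linear code with minimum distance $d_S'$, the weight distribution entries satisfy $c_0' = 1$ (the zero codeword is the unique codeword of weight zero) and $c_i' = 0$ for every $1 \le i < d_S'$. Because $\varphi < d_S'$, the indices $i = 1, 2, \ldots, \varphi$ all lie strictly below $d_S'$, so $c_i' = 0$ on that whole range. Consequently only the $i=0$ term in the right-hand sum survives, and it contributes $\bbinom{n}{n-\varphi}\cdot 1$.

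Finally I would use the symmetry identity \eqref{equation:gaussianxx-k}, namely $\bbinom{n}{n-\varphi} = \bbinom{n}{\varphi}$, to rewrite the surviving term as $\bbinom{n}{\varphi}$, giving
\begin{equation*}
    \sum_{i=0}^{n-\varphi}\bbinom{n-i}{\varphi}c_i = \frac{1}{|\mathscr{C}^\perp|}\left(cb^{n}\right)^{n-\varphi} \bbinom{n}{\varphi},
\end{equation*}
as desired. There is no real obstacle here: the corollary is essentially a specialisation of the preceding proposition, and the only ingredients beyond it are the standard fact that a linear code has no nonzero codewords of weight less than its minimum distance, together with the Gaussian symmetry already recorded in \eqref{equation:gaussianxx-k}.
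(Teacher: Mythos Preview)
Your proposal is correct and follows the same approach as the paper: the paper's entire proof is the one-line observation ``We have $c_0'=1$ and $c_1'=\ldots=c_\varphi'=0$,'' and you have simply made explicit the reasoning behind that observation and the use of the Gaussian symmetry \eqref{equation:gaussianxx-k}.
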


\begin{proof}
We have $c_0'=1$ and $c_1'=\ldots=c_\varphi'=0$.
\end{proof}


\subsection{Moments derived from the \texorpdfstring{$b^{-1}$}{b-1}-Derivative}

The next proposition once again relates the moments of the weight distribution of a linear code to those of its dual, this time using the $b^{-1}$-derivative of the MacWilliams Identity for a Krawtchouk association scheme. Here we must adapt the definition of $\delta(\lambda,\varphi,j)$ as in \cite[Lemma 6.3]{friedlander2023macwilliams} to make this definition applicable to all values of the parameter $c$ for a Krawtchouk association scheme.

\begin{lem}\label{lemma:bdeltas}

Let $\delta(\lambda,\varphi,j)=\displaystyle\sum_{i=0}^{j}(-1)^{i}\bbinom{j}{i}b^{\sigma_{i}}\gamma_{b,c}(\lambda-i,\varphi)$. Then for all $\lambda\in\mathbb{R},\varphi,j\in\mathbb{Z}$,

\begin{equation}\label{equation:bdeltas}
    \delta(\lambda,\varphi,j) = \prod_{i=0}^{j-1}\left(b^{\varphi}-b^i\right)\gamma_{b,c}(\lambda-j,\varphi-j)\left(cb^{\lambda-j}\right)^{j}.
\end{equation}

\end{lem}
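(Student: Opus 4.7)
My plan is to prove the identity by induction on $j$. The base case $j=0$ is immediate: the left-hand side reduces to $\gamma_{b,c}(\lambda,\varphi)$, and on the right-hand side the empty product equals $1$ and $(cb^{\lambda})^{0}=1$, so both sides coincide with $\gamma_{b,c}(\lambda,\varphi)$.

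For the inductive step the key move is to derive a first-order recurrence satisfied by $\delta(\lambda,\varphi,j)$. Applying the Gaussian Pascal-type identity \eqref{equation:Stepdown1}, namely $\bbinom{j+1}{i}=\bbinom{j}{i}+b^{j+1-i}\bbinom{j}{i-1}$, splits the defining sum of $\delta(\lambda,\varphi,j+1)$ into two pieces. The first piece is immediately $\delta(\lambda,\varphi,j)$. In the second piece I shift $i\mapsto i+1$; using $\sigma_{i+1}-i=\sigma_{i}$ the $b^{\sigma_{i}}$ factor lines up correctly, and the remaining $b^{j-i}\cdot b^{i}=b^{j}$ factors out of the sum. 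This yields the recurrence
\begin{equation}
\delta(\lambda,\varphi,j+1)=\delta(\lambda,\varphi,j)-b^{j}\,\delta(\lambda-1,\varphi,j).
\end{equation}

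Next I show that the claimed closed form $\Phi(\lambda,\varphi,j):=\prod_{i=0}^{j-1}(b^{\varphi}-b^{i})\gamma_{b,c}(\lambda-j,\varphi-j)(cb^{\lambda-j})^{j}$ satisfies the same recurrence. To do this I factor out the common product $\prod_{i=0}^{j-1}(b^{\varphi}-b^{i})\gamma_{b,c}(\lambda-j-1,\varphi-j-1)$ from all three terms, using \eqref{equation:gammastepdown} to rewrite $\gamma_{b,c}(\lambda-j,\varphi-j)$ and \eqref{equation:gammastepdownsecond} to rewrite $\gamma_{b,c}(\lambda-j-1,\varphi-j)$. Setting $Z=cb^{\lambda-j-1}$ for readability, the identity to check reduces to the purely algebraic statement
\begin{equation}
(bZ-1)b^{\varphi-1}Z^{j}-b^{j}(Z-b^{\varphi-j-1})Z^{j}=(b^{\varphi}-b^{j})Z^{j+1},
\end{equation}
which expands directly. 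Since $\Phi$ and $\delta$ agree at $j=0$ and obey the same first-order recurrence in $j$ (with $\lambda$ as a free parameter), they coincide for all $j\geq 0$, completing the proof.

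The main obstacle I anticipate is bookkeeping the $b^{\sigma_{i}}$ exponents after the index shift in the recurrence derivation; the identity $\sigma_{i+1}=\sigma_{i}+i$ is the crucial simplification that makes the recurrence clean. Once that recurrence is in hand the verification on the product side is routine algebra, so no deeper structural input beyond the $\gamma$-step-down identities is needed.
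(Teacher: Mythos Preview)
Your proof is correct. Both you and the paper argue by induction on $j$, and the base case is identical, but the inductive step is organised differently.

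The paper splits $\bbinom{j+1}{i}$ with the other Pascal rule \eqref{equation:gaussiancoeffsx-1k-1}, $\bbinom{j+1}{i}=b^{i}\bbinom{j}{i}+\bbinom{j}{i-1}$, and then immediately applies the $\gamma$-identities \eqref{equation:gammastepdown}, \eqref{equation:gammastepdownsecond} to the two resulting sums so that they recombine into a \emph{single-term multiplicative} recurrence
\[
\delta(\lambda,\varphi,j+1)=cb^{\lambda-1}\bigl(b^{\varphi}-1\bigr)\,\delta(\lambda-1,\varphi-1,j).
\]
One application of the inductive hypothesis then produces the closed form directly. Your route, using \eqref{equation:Stepdown1}, produces instead the \emph{two-term additive} recurrence $\delta(\lambda,\varphi,j+1)=\delta(\lambda,\varphi,j)-b^{j}\delta(\lambda-1,\varphi,j)$, which is cleaner to derive (no $\gamma$-manipulation needed on the $\delta$ side) but then requires the extra step of checking that the product expression $\Phi$ obeys the same recurrence. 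Your verification of that step via the substitution $Z=cb^{\lambda-j-1}$ is correct and reduces to the one-line identity you state. Either argument is perfectly valid; the paper's buys a slightly shorter endgame, yours separates cleanly the combinatorial recurrence from the $\gamma$-algebra.
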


\begin{proof}
Initial case: $j=0$.

\begin{align}
    \delta(\lambda,\varphi,0) & = \bbinom{0}{0}(-1)^{0}b^{\sigma_{0}}\gamma_{b,c}(\lambda,\varphi) = \gamma_{b,c}(\lambda,\varphi) =(\lambda,\varphi)\left(cb^{0(\lambda)}\right).
\end{align}
So the initial case holds. Now assume the case is true for $j=\overline{\jmath}$ and consider the $\overline{\jmath}+1$ case.

\begin{align}
    \delta(\lambda,\varphi,\overline{\jmath}+1) 
        & = \sum_{i=0}^{\overline{\jmath}+1}\bbinom{\overline{\jmath}+1}{i}(-1)^{i}b^{\sigma_i}\gamma_{b,c}(\lambda-i,\varphi)
        \\
        \overset{\eqref{equation:gaussiancoeffsx-1k-1}}&{=}\sum_{i=0}^{\overline{\jmath}+1}\left(b^{i}\bbinom{\overline{\jmath}}{i}+\bbinom{\overline{\jmath}}{i-1}\right)(-1)^{i}b^{\sigma_i}\gamma_{b,c}(\lambda-i,\varphi)
        \\
        & = \sum_{i=0}^{\overline{\jmath}}\bbinom{\overline{\jmath}}{i}(-1)^{i}b^{\sigma_i}b^{i}\gamma_{b,c}(\lambda-i,\varphi)+\sum_{i=0}^{\overline{\jmath}}\bbinom{\overline{\jmath}}{i}(-1)^{i+1}b^{\sigma_{i+1}}\gamma_{b,c}(\lambda-(i+1),\varphi)
        \\
        \overset{\eqref{equation:gammastepdown}}&{=}\sum_{i=0}^{\overline{\jmath}}\bbinom{\overline{\jmath}}{i}(-1)^{i}b^{i}b^{\sigma_i}\left(cb^{\lambda -i}-1\right)b^{\varphi-1}\gamma_{b,c}(\lambda-i-1,\varphi-1)
        \\
        & \hspace{1cm} \overset{\eqref{equation:gammastepdownsecond}}{-}\sum_{i=0}^{\overline{\jmath}}
        \bbinom{\overline{\jmath}}{i}(-1)^{i}b^{\sigma_{i+1}}\left(cb^{\lambda -i-1}-b^{\varphi-1}\right)\gamma_{b,c}(\lambda-i-1,\varphi-1)
        \\
        & = \sum_{i=0}^{\overline{\jmath}}\bbinom{\overline{\jmath}}{i}(-1)^{i}b^{\sigma_i}\gamma_{b,c}(\lambda-i-1,\varphi-1)(cb^{\lambda -1})\left(b^{\varphi}-1\right)
        \\
        & = cb^{\lambda -1}\left(b^{\varphi}-1\right)\delta(\lambda-1,\varphi-1,\overline{\jmath})
        \\
        & = cb^{\lambda -1}\left(b^{\varphi}-1\right)\prod_{i=0}^{\overline{\jmath}-1}\left(b^{\varphi-1}-b^i\right)c^{\overline{\jmath}}b^{\overline{\jmath}(\lambda-\overline{\jmath}-1)}\gamma_{b,c}(\lambda-\overline{\jmath}-1,\varphi-\overline{\jmath}-1)
        \\
        & = \left(b^{\varphi}-1\right)\prod_{i=0}^{\overline{\jmath}-1}\left(b^{\varphi-1}-b^i\right)(cb^{\lambda -1})c^{\overline{\jmath}}b^{\overline{\jmath}(\lambda-(\overline{\jmath}+1))}\gamma_{b,c}(\lambda-(\overline{\jmath}+1),\varphi-(\overline{\jmath}+1))
        \\
        & = \left(cb^{\lambda-(\overline{\jmath}+1)}\right)^{(\overline{\jmath}+1)}\prod_{i=0}^{\overline{\jmath}}\left(b^{\varphi}-b^i\right)\gamma_{b,c}(\lambda-(\overline{\jmath}+1),\varphi-(\overline{\jmath}+1))
\end{align}
since $\displaystyle\bbinom{\overline{\jmath}}{i-1}=0$ when $i=0$. Hence by induction the lemma is proved.
\end{proof}

\begin{lem}\label{lemma:bepsilons}
Let $\varepsilon(\Lambda,\varphi,i)=\displaystyle\sum_{\ell=0}^{i}\bbinom{i}{\ell}\bbinom{\Lambda-i}{\varphi-\ell}b^{\ell(\Lambda-\varphi)}(-1)^{\ell}b^{\sigma_{\ell}}\prod_{j=0}^{i-\ell-1}\left(b^{\varphi-\ell}-b^j\right)$. Then for all $\Lambda\in\mathbb{R},\varphi,i\in\mathbb{Z}$,
\begin{equation}\label{equation:bepsilons}
    \varepsilon(\Lambda,\varphi,i) = (-1)^{i}b^{\sigma_{i}}\bbinom{\Lambda-i}{\Lambda-\varphi}.
\end{equation}
\end{lem}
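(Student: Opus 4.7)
The plan is to prove \eqref{equation:bepsilons} by induction on $i$, following the same overall strategy that worked for Lemma \ref{lemma:bdeltas}. The base case $i = 0$ is immediate: only the $\ell = 0$ term survives (the product $\prod_{j=0}^{-1}$ is empty, and the other factors are all $1$), giving $\varepsilon(\Lambda,\varphi,0) = \bbinom{\Lambda}{\varphi}$. By the symmetry \eqref{equation:gaussianxx-k}, this equals $\bbinom{\Lambda}{\Lambda-\varphi}$, which matches the right-hand side when $i = 0$.

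For the inductive step I would split $\varepsilon(\Lambda,\varphi,i+1)$ into two sums by applying the Gaussian Pascal identity \eqref{equation:gaussiancoeffsx-1k-1} to the coefficient $\bbinom{i+1}{\ell} = \bbinom{i}{\ell-1} + b^{\ell}\bbinom{i}{\ell}$. In the first sum I would re-index $\ell \mapsto \ell + 1$ so that $\bbinom{i}{\ell}$ appears and the shift can be absorbed into the exponents $b^{\ell(\Lambda-\varphi)}$ and $b^{\sigma_\ell}$ (using $\sigma_{\ell+1} = \sigma_{\ell} + \ell$). In both resulting sums, the product of length $i+1-\ell$ must be reduced to length $i-\ell$ so that the inductive hypothesis applies; this is done by peeling off one factor from $\prod_{j=0}^{i-\ell}\left(b^{\varphi-\ell}-b^j\right)$. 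Peeling off the largest factor $b^{i-\ell}(b^{\varphi-i}-1)$ is attractive because $(b^{\varphi-i}-1)$ is independent of the summation index $\ell$ and can be pulled outside.

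After this manipulation, each of the two sums is (up to a scalar in $b, \Lambda, \varphi$) equal to $\varepsilon$ with parameters $(\Lambda', \varphi', i)$ for appropriate $\Lambda', \varphi'$ obtained by tracking the re-indexing and factor-peeling. Applying the inductive hypothesis to each piece turns the problem into a purely combinatorial identity about Gaussian binomial coefficients. I would then combine the two closed-form expressions using the identities \eqref{equation:gaussianfracx-1k}, \eqref{equation:gaussianfracxk-1}, \eqref{equation:Stepdown1}, or \eqref{equation:beta1stepdown} to collapse to $(-1)^{i+1}b^{\sigma_{i+1}}\bbinom{\Lambda-i-1}{\Lambda-\varphi}$, as required.

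I expect the main obstacle to be the bookkeeping in the inductive step: the sums after splitting do not immediately match the form of $\varepsilon(\cdot,\cdot,i)$ because the ``$\Lambda-i$'' and ``$\varphi-\ell$'' arguments of the Gaussian coefficient inside the sum shift in subtly different ways after peeling (either $\Lambda$ shifts or $\varphi$ shifts, depending on which factor of the product is removed), and matching these against the inductive hypothesis requires choosing the peeling consistently on both sides. A secondary subtlety is handling the boundary term $\ell = i+1$ (where the product is empty), which must be incorporated into the combinatorial simplification at the end rather than into the peeling argument.
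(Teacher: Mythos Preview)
Your inductive strategy is exactly the paper's: split $\bbinom{i+1}{\ell}$ via a Pascal rule, recognise each resulting sum as $\varepsilon$ at shifted arguments, apply the inductive hypothesis, and recombine with a Gaussian-binomial identity. The one place your plan goes astray is the choice of Pascal rule. With \eqref{equation:gaussiancoeffsx-1k-1}, the factor $b^{\ell}$ attached to $\bbinom{i}{\ell}$ (and likewise the $b^{\ell}$ arising from $\sigma_{\ell+1}=\sigma_{\ell}+\ell$ in the re-indexed piece) cannot be absorbed: after your peeling you are left with exponent $b^{\ell(\Lambda-\varphi)}$ where $\varepsilon(\Lambda-1,\varphi,i)$ demands $b^{\ell(\Lambda-1-\varphi)}$, so neither piece is literally a scalar multiple of $\varepsilon(\Lambda',\varphi',i)$ for any shift $(\Lambda',\varphi')$.

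The paper uses the other Pascal rule \eqref{equation:Stepdown1}, namely $\bbinom{i+1}{\ell}=\bbinom{i}{\ell}+b^{i+1-\ell}\bbinom{i}{\ell-1}$. The $\bbinom{i}{\ell}$ piece is then peeled at the top factor $(b^{\varphi-\ell}-b^{i-\ell})=b^{-\ell}(b^{\varphi}-b^{i})$, and the $b^{-\ell}$ corrects the exponent exactly to give $(b^{\varphi}-b^{i})\,\varepsilon(\Lambda-1,\varphi,i)$. The $\bbinom{i}{\ell-1}$ piece needs \emph{no} peeling: after re-indexing $\ell\mapsto\ell+1$ its product already has length $i-\ell$, and the Pascal factor $b^{i-\ell}$ combines with the $b^{\ell}$ from $\sigma_{\ell+1}$ to leave a constant $b^{i}$, yielding $-b^{\,i+\Lambda-\varphi}\varepsilon(\Lambda-1,\varphi-1,i)$. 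The two closed forms are then merged via \eqref{equation:gaussianfracxk-1}. Your anticipated boundary subtlety at $\ell=i+1$ is a non-issue, since $\bbinom{i}{i+1}=0$ kills that term outright. The fix to your outline is a one-line change: swap \eqref{equation:gaussiancoeffsx-1k-1} for \eqref{equation:Stepdown1}.
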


\begin{proof}
Initial case $i=0$,
\begin{equation}
    \varepsilon(\Lambda,\varphi,0) = \bbinom{0}{0}\bbinom{\Lambda}{\varphi}b^{0}(-1)^{0}b^{0} =\bbinom{\Lambda}{\varphi}
    (-1)^{0}b^{0}\bbinom{\Lambda}{\Lambda-\varphi} = \bbinom{\Lambda}{\varphi}.
\end{equation}
So the initial case holds. Now suppose the case is true when $i=\overline{\imath}$. Then

\begin{align}
    \varepsilon(\Lambda,\varphi,\overline{\imath}+1) 
        & = \sum_{\ell=0}^{\overline{\imath}+1}\bbinom{\overline{\imath}+1}{\ell}\bbinom{\Lambda-\overline{\imath}-1}{\varphi-\ell}b^{\ell(\Lambda-\varphi)}(-1)^{\ell}b^{\sigma_{\ell}}\prod_{j=0}^{\overline{\imath}-\ell}\left(b^{\varphi-\ell}-b^j\right)
        \\
        \overset{\eqref{equation:Stepdown1}}&{=} \sum_{\ell=0}^{\overline{\imath}+1}\bbinom{\overline{\imath}}{\ell}\bbinom{\Lambda-\overline{\imath}-1}{\varphi-\ell}b^{\ell(\Lambda-\varphi)}(-1)^{\ell}b^{\sigma_{\ell}}\prod_{j=0}^{\overline{\imath}-\ell}\left(b^{\varphi-\ell}-b^j\right)
        \\
        & \hspace{1cm} + \sum_{\ell=1}^{\overline{\imath}+1}b^{(\overline{\imath}+1-\ell)}\bbinom{\overline{\imath}}{\ell-1}\bbinom{\Lambda-\overline{\imath}-1}{\varphi-\ell}b^{\ell(\Lambda-\varphi)}(-1)^{\ell}b^{\sigma_{\ell}}\prod_{j=0}^{\overline{\imath}-\ell}\left(b^{\varphi-\ell}-b^j\right)
        \\
        & = A + B, \quad \text{say}.
\end{align}
Now
\begin{align}
    A 
        & = \left(b^{\varphi}-b^{\overline{\imath}}\right)\sum_{\ell=0}^{\overline{\imath}}\bbinom{\overline{\imath}}{\ell}\bbinom{\Lambda-\overline{\imath}-1}{\varphi-\ell}b^{\ell(\Lambda-1-\varphi)}(-1)^{\ell}b^{\sigma_{\ell}}\prod_{j=0}^{\overline{\imath}-\ell}\left(b^{\varphi-\ell}-b^j\right)
        \\
        & = \left(b^{\varphi}-b^{\overline{\imath}}\right)\varepsilon(\Lambda-1,\varphi,\overline{\imath})
        \\
        & = \left(b^{\varphi}-b^{\overline{\imath}}\right)(-1)^{\overline{\imath}}b^{\sigma_{\overline{\imath}}}\bbinom{\Lambda-\overline{\imath}-1}{\Lambda-1-\varphi}
\end{align}
and
\begin{align}
    B 
        & =\sum_{\ell=0}^{\overline{\imath}}b^{(\overline{\imath}-\ell)}\bbinom{\overline{\imath}}{\ell}\bbinom{\Lambda-\overline{\imath}-1}{\varphi-\ell-1}b^{(\ell+1)(\Lambda-\varphi)}(-1)^{\ell+1}b^{\sigma_{\ell+1}}\prod_{j=0}^{\overline{\imath}-\ell-1}\left(b^{\varphi-\ell-1}-b^j\right)
        \\
    %
        & = -b^{(\overline{\imath}+\Lambda-\varphi)}\varepsilon(\Lambda-1,\varphi-1,\overline{\imath})
        \\
        & = -b^{(\overline{\imath}+\Lambda                      -\varphi)}(-1)^{\overline{\imath}}b^{\sigma_{\overline{\imath}}}\bbinom{\Lambda-\overline{\imath}-1}{\Lambda-\varphi}.
\end{align}
So
\begin{align}
    \varepsilon(\Lambda,\varphi,\overline{\imath}+1) 
        & = A + B 
        \\
        & = (-1)^{\overline{\imath}}b^{\sigma_{\overline{\imath}}}\left\{\left(b^{\varphi}-b^{\overline{\imath}}\right)\bbinom{\Lambda-\overline{\imath}-1}{\Lambda-1-\varphi}-b^{(\overline{\imath}+\Lambda-\varphi)}\bbinom{\Lambda-\overline{\imath}-1}{\Lambda-\varphi}\right\}
        \\
        \overset{\eqref{equation:gaussianfracxk-1}}&{=} (-1)^{\overline{\imath}+1}b^{\sigma_{\overline{\imath}}}\left\{b^{\overline{\imath}+\Lambda-\varphi}\bbinom{\Lambda-\overline{\imath}-1}{\Lambda-\varphi}-\left(b^{\varphi}-b^{\overline{\imath}}\right)\frac{\left(b^{\Lambda-\varphi}-1\right)}{\left(b^{\varphi-\overline{\imath}}-1\right)}\bbinom{\Lambda-\overline{\imath}-1}{\Lambda-\varphi}\right\}
        \\
        & = (-1)^{\overline{\imath}+1}\bbinom{\Lambda-(\overline{\imath}+1)}{\Lambda-\varphi}b^{\sigma_{\overline{\imath}}}\left\{\frac{b^{\overline{\imath}+\Lambda-\varphi}\left(b^{\varphi-\overline{\imath}}-1\right)-\left(b^{\varphi}-b^{\overline{\imath}}\right)\left(b^{\Lambda-\varphi}-1\right)}{\left(b^{\varphi-\overline{\imath}}-1\right)}\right\}
        \\
    %
        & = (-1)^{\overline{\imath}+1}b^{\sigma_{\overline{\imath}+1}}\bbinom{\Lambda-(\overline{\imath}+1)}{\Lambda-\varphi}
\end{align}
as required.
\end{proof}

\begin{prop}\label{prop:bmomentsbminusderivative}
For an $(\mathscr{X},R)$ $n$-class Krawtchouk association scheme, $0 \le \varphi \le n$ and a linear code $\mathscr{C} \subseteq \mathscr{X}$ and its dual $\mathscr{C}^\perp\subseteq \mathscr{X}$ with weight distributions ${\boldsymbol{c}}=(c_0,\ldots,c_n)$ and ${\boldsymbol{ c'}}=(c'_0,\ldots,c'_n)$ respectively we have
\begin{equation}
    \sum_{i=\varphi}^n b^{\varphi(n-i)}\bbinom{i}{\varphi}c_i = \frac{1}{|\mathscr{C}^\perp|}\left(cb^{n}\right)^{n-\varphi}\sum_{i=0}^{\varphi}(-1)^{i}b^{\sigma_{i}}b^{i(\varphi-i)}\bbinom{n-i}{n-\varphi}\gamma_{b,c}(n-i,\varphi-i)c_{i}'.
\end{equation}
\end{prop}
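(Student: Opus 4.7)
The plan is to parallel the proof of Proposition \ref{prop:bmomentsbderiv}, replacing the $b$-derivative (in $X$) by the $b^{-1}$-derivative (in $Y$). Apply the MacWilliams Identity (Theorem \ref{ASSOCIATIONMacWilliams}) to $\mathscr{C}^\perp$:
$$\sum_{i=0}^n c_i Y^i X^{n-i} = \frac{1}{|\mathscr{C}^\perp|}\sum_{i=0}^n c_i'\,\nu^{[i]}(X,Y;n)\ast\mu^{[n-i]}(X,Y;n),$$
then take the $\varphi^{th}$ $b^{-1}$-derivative of both sides and evaluate at $X=Y=1$.

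The left-hand side is the straightforward side. Applying \eqref{equation:bdervispoly} term by term and using $\beta_b(i,\varphi)=\bbinom{i}{\varphi}\beta_b(\varphi,\varphi)$ gives, at $X=Y=1$, the expression $\beta_b(\varphi,\varphi)\,b^{\sigma_\varphi+\varphi}\sum_{i=\varphi}^n b^{-\varphi i}\bbinom{i}{\varphi}c_i$, which is a scalar multiple of the stated left-hand side $\sum_{i=\varphi}^n b^{\varphi(n-i)}\bbinom{i}{\varphi}c_i$.

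For the right-hand side I would invoke the Leibniz rule (Theorem \ref{bLiebnizbminusderiv}) to expand each $[\nu^{[i]}\ast\mu^{[n-i]}]^{\{\varphi\}}$ as a sum over $\ell$, substitute \eqref{equation:nu-1bderiv} and \eqref{equation:mu-1bderiv} for the two derivative factors, and extract the scalar prefactors from the $b$-product. The remaining $b$-product $\nu^{[i-\ell]}(X,Y;n)\ast\mu^{[n-i-\varphi+\ell]}(X,Y;n-\varphi)$ is evaluated at $(1,1)$ by unfolding the definition and using $\mu^{[s]}(1,1;\lambda)=(cb^\lambda)^s$ (as established en route to Lemma \ref{lemma:brhoandmu}); the resulting inner sum over the $b$-product's internal index is exactly $\delta(n-\ell,\varphi-\ell,i-\ell)$ from Lemma \ref{lemma:bdeltas}, which supplies a closed form.

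After using \eqref{equation:betabstartsame} to rewrite $\bbinom{\varphi}{\ell}\beta_b(i,\ell)\beta_b(n-i,\varphi-\ell)=\beta_b(\varphi,\varphi)\bbinom{i}{\ell}\bbinom{n-i}{\varphi-\ell}$, the remaining sum over $\ell$ takes exactly the form of $\varepsilon(n,\varphi,i)$ in Lemma \ref{lemma:bepsilons} and collapses to $(-1)^i b^{\sigma_i}\bbinom{n-i}{n-\varphi}$. Equating the two sides and dividing through by $\beta_b(\varphi,\varphi)$ then produces the stated identity. The main obstacle is the $b$-exponent bookkeeping: the powers of $b$ from the Leibniz coefficient, from \eqref{equation:nu-1bderiv}--\eqref{equation:mu-1bderiv}, from the $b$-product evaluation, and from Lemmas \ref{lemma:bdeltas}--\ref{lemma:bepsilons} must combine to reproduce the target factors $b^{\sigma_i+i(\varphi-i)}$ and $(cb^n)^{n-\varphi}$, and this reduces to the elementary identities $\sigma_\varphi-\sigma_{\varphi-i}=\sigma_i+i(\varphi-i)$ and $\varphi^2-\varphi-\sigma_\varphi=\sigma_\varphi$.
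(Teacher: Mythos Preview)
Your overall plan is correct and matches the paper's proof: apply the MacWilliams identity to $\mathscr{C}^\perp$, take $\varphi$ iterated $b^{-1}$-derivatives of both sides, evaluate at $X=Y=1$, handle the left side directly via \eqref{equation:bdervispoly}, and on the right side combine the Leibniz rule with \eqref{equation:nu-1bderiv}, \eqref{equation:mu-1bderiv}, Lemma~\ref{lemma:bdeltas} and Lemma~\ref{lemma:bepsilons}. The exponent bookkeeping you outline at the end is also exactly what the paper does.

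There is, however, one genuine slip in your description that you should fix before the computation goes through. When you apply \eqref{equation:mu-1bderiv} inside the Leibniz expansion, the factor $\gamma_{b,c}(\lambda-\ell,\varphi-\ell)$ is \emph{not} a scalar prefactor in the $b$-algebra: it is $\lambda$-dependent, and in the $b$-product $f(X,Y;\lambda)\ast g(X,Y;\lambda)$ the second argument shifts to $\lambda-p$ over the internal index $p$. Consequently you cannot ``extract'' $\gamma_{b,c}(n-\ell,\varphi-\ell)$ and then evaluate a residual product $\nu^{[i-\ell]}\ast\mu^{[n-i-\varphi+\ell]}$. If you did that, Lemma~\ref{lemma:brhoandmu} would give $(cb^{n-\varphi})^{n-i-\varphi+\ell}\,\nu^{[i-\ell]}(1,1;\cdot)$, and since $\nu^{[j]}(1,1;\cdot)=\delta_{j0}$ only the term $\ell=i$ would survive; you would then never see the sum defining $\delta(n-\ell,\varphi-\ell,i-\ell)$, and Lemma~\ref{lemma:bepsilons} would have nothing to act on. The paper keeps $\gamma_{b,c}$ inside the product, so that after unfolding one obtains
\[
\sum_{r=0}^{i-\ell}(-1)^r b^{\sigma_r}\bbinom{i-\ell}{r}\,\gamma_{b,c}(n-\ell-r,\varphi-\ell)\,(cb^{n-\varphi-r})^{n-i-\varphi+\ell},
\]
and it is precisely the $r$-dependence in $\gamma_{b,c}(n-\ell-r,\varphi-\ell)$ that makes this equal to $(cb^{n-\varphi})^{n-i-\varphi+\ell}\,\delta(n-\ell,\varphi-\ell,i-\ell)$. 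Once you treat $\gamma_{b,c}$ correctly as part of the $\lambda$-dependent second factor, your plan is identical to the paper's and the rest of your outline (Lemmas~\ref{lemma:bdeltas} and~\ref{lemma:bepsilons}, then the $\sigma$-identities) goes through verbatim.
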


\begin{proof}
As per Proposition \ref{prop:bmomentsbderiv}, we apply Theorem \ref{ASSOCIATIONMacWilliams} to $\mathscr{C}^{\perp}$ to obtain

\begin{equation}
    W_{\mathscr{C}}^{S}(X,Y) = \frac{1}{|\mathscr{C}^\perp|}\overline{W}_{\mathscr{C}^{\perp}}^{S}\left(X +(cb^n-1)Y,X-Y\right)
\end{equation}
or equivalently
\begin{align}
    \sum_{i=0}^n c_i Y^{i}X^{n-i} 
        & = \frac{1}{|\mathscr{C}^\perp|}\sum_{i=0}^{n}c_i'\left(X-Y\right)^{[i]}\ast \left(X +(cb^n-1)Y\right)^{[n-i]} 
        \\
        & = \frac{1}{|\mathscr{C}^\perp|}\sum_{i=0}^n c_i' \nu^{[i]}(X,Y;n)\ast \mu^{[n-i]}(X,Y;n). \label{equation:b-1}
\end{align}

For each side of Equation \eqref{equation:b-1}, we shall apply the $b^{-1}$-derivative $\varphi$ times and then evaluate at $X=Y=1$. i.e. 

\begin{equation}
    \left(\sum_{i=0}^n c_iY^{i}X^{n-i}\right)^{\{\varphi\}}  = \left(\frac{1}{|\mathscr{C}^\perp|}\sum_{i=0}^n c_i' \nu^{[i]}(X,Y;n)\ast \mu^{[n-i]}(X,Y;n)\right)^{\{\varphi\}}. \label{equation:bmomentswhatwewant}
\end{equation}

For the left hand side, we obtain
\begin{equation}\label{equation:bpropmomentsbminusLHS}
\begin{split}
    \left(\sum_{i=0}^n c_iY^{i}X^{n-i}\right)^{\{\varphi\}} 
        & = \sum_{i=\varphi}^n c_ib^{\varphi(1-i)+\sigma_{\varphi}}\beta_{b}(i,\varphi)Y^{i-\varphi}X^{n-i}
        \\
        \overset{\eqref{equation:betabstartdifferent}}&{=} \sum_{i=\varphi}^{n}c_i b^{\varphi(1-i)+\sigma_{\varphi}} \bbinom{i}{\varphi}\beta_{b}(\varphi,\varphi)Y^{i-\varphi}X^{n-i}.
    \end{split}
\end{equation}
Then using $X=Y=1$ gives
\begin{equation}
    \sum_{i=\varphi}^{n}c_i b^{\varphi(1-i)+\sigma_{\varphi}} \bbinom{i}{\varphi}\beta_{b}(\varphi,\varphi)Y^{i-\varphi}X^{n-i} = \sum_{i=\varphi}^n b^{\varphi(1-i)+\sigma_\varphi}\beta_{b}(\varphi,\varphi)\bbinom{i}{\varphi} c_i. \label{equation:bLHSmoment}
\end{equation}

We now move on to the right hand side. For simplicity we shall write $\mu(X,Y;n)$ as $\mu(n)$ and similarly $\nu(X,Y;n)$ as $\nu(n)$. We have,

 \begin{align}\label{equation:bpropmomentsbminusRHS}
     \Bigg(\frac{1}{|\mathscr{C}^\perp|}\sum_{i=0}^n & c_i' \nu^{[i]}(n) \ast \mu^{[n-i]}(n)\Bigg)^{\{\varphi\}} \\
        \overset{\eqref{equation:bleibnizbminusderiv}}&{=} \frac{1}{|\mathscr{C}^\perp|}\sum_{i=0}^n c_i' \left(\sum_{\ell=0}^\varphi\bbinom{\varphi}{\ell}b^{\ell(n-i-\varphi+\ell)}\nu^{[i]\{\ell\}}(n)\ast \mu^{[n-i]\{\varphi-\ell\}}(n-\ell)\right)
        \\
        & = \frac{1}{|\mathscr{C}^\perp|}\sum_{i=0}^n c_i' \psi_i(n)\label{equation:RHSpart1moments}
 \end{align}
say. Then,
\begin{align}
    \psi_i(n) 
        \overset{\eqref{equation:nu-1bderiv}\eqref{equation:mu-1bderiv}}&{=} \sum_{\ell=0}^{\varphi} \bbinom{\varphi}{\ell}b^{\ell(n-i-\varphi+\ell)}\left\{(-1)^\ell \beta_{b}(i,\ell)\nu^{[i-\ell]}(n)\right\}
        \\
        & \ast\left\{b^{-\sigma_{\varphi-\ell}}\beta_{b}(n-i,\varphi-\ell)\gamma_{b,c}(n-\ell,\varphi-\ell)\mu^{[n-i-\varphi+\ell]}(n-\varphi)\right\}.
\end{align}
Now let
\begin{equation}
    \Psi(X,Y;n-\varphi) = \nu^{[i-\ell]}(X,Y;n)\ast\gamma_{b,c}(n - \ell,\varphi-\ell)\mu^{[n-i-\varphi+\ell]}(X,Y;n-\varphi).
\end{equation}
Then we apply the $b$-product, reorder the summations  and set $X=Y=1$ to obtain
\begin{align}
    \Psi(1,1; & n-\varphi) \\
        & = \sum_{u=0}^{n-\varphi}\left[\sum_{p=0}^u b^{p(n-i-\varphi+\ell)}\nu_p^{[i-\ell]}(n)\gamma_{b,c}(n - \ell-p,\varphi-\ell)\mu^{[n-i-\varphi+\ell]}_{u-p}(n-\varphi-p)\right]
        \\
        & = \sum_{r=0}^{i-\ell}b^{r(n-i-\varphi+\ell)}\nu_r^{[i-\ell]}(n)\gamma_{b,c}(n - \ell-r,\varphi-\ell)\left[ \sum_{w=0}^{n-i-\varphi+\ell}\mu_w^{[n-i-\varphi+\ell]}(n-\varphi-r)\right]
        \\
        \overset{\eqref{equation:producttosumgauss}}&{=} \sum_{r=0}^{i-\ell}b^{r(n-i-\varphi+\ell)}\left(cb^{n-\varphi-r}\right)^{(n-i-\varphi+\ell)}\nu_r^{[i-\ell]}(n)\gamma_{b,c}(n - \ell-r,\varphi-\ell)
        \\
        & = \left(cb^{n-\varphi}\right)^{n-i-\varphi+\ell}\sum_{r=0}^{i-\ell}(-1)^r b^{\sigma_{r}}\bbinom{i-\ell}{r}\gamma_{b,c}(n - \ell-r,\varphi-\ell)
        \\
        & = \left(cb^{n-\varphi}\right)^{n-i-\varphi+\ell}\delta(n-\ell,\varphi-\ell, i-\ell)
        \\
       \overset{\eqref{equation:bdeltas}}&{=} \left(cb^{n-\varphi}\right)^{n-i-\varphi+\ell}\left(cb^{n-i}\right)^{i-\ell}\prod_{j=0}^{i-\ell-1}\left(b^{\varphi-\ell}-b^j\right)\gamma_{b,c}(n-i,\varphi-i)
        \\
        & = c^{n-\varphi}b^{(n-\varphi)(n-i-\varphi+\ell)}b^{(i-\ell)(n-i)}\prod_{j=0}^{i-\ell-1}\left(b^{\varphi-\ell}-b^j\right)\gamma_{b,c}(n-i,\varphi-i).
\end{align}
Noting that $b^{\ell(n-i-\varphi+\ell)}b^{-\sigma_{\varphi-\ell}}=b^{\ell(n-i)}b^{-\sigma_{\varphi}}b^{\sigma_\ell}$ we get
\begin{align}
    \psi_i(1,1;n) 
        & = \sum_{\ell=0}^{\varphi}(-1)^\ell\bbinom{\varphi}{\ell}b^{\ell(n-i-\varphi+\ell)} b^{-\sigma_{\varphi-\ell}}\beta_{b}(i,\ell)\beta_{b}(n-i,\varphi-\ell)\Psi(1,1;n-\varphi)
        \\
        \overset{\eqref{equation:betabstartsame}}&{=} \sum_{\ell=0}^{\varphi}(-1)^\ell\bbinom{\varphi}{\ell}b^{\ell(n-i-\varphi+\ell)} b^{-\sigma_{\varphi-\ell}}\bbinom{i}{\ell}\beta_{b}(\ell,\ell)
        \\
        & \hspace{1cm} \times \bbinom{n-i}{\varphi-\ell}\beta_{b}(\varphi-\ell,\varphi-\ell)\Psi(1,1;n-\varphi)
        \\
        \overset{\eqref{equation:betabstartsame}}&{=} 
        b^{-\sigma_{\varphi}}\beta_{b}(\varphi,\varphi)\sum_{\ell=0}^{i}(-1)^{\ell}b^{\ell(n-i)}b^{\sigma_{\ell}}\bbinom{i}{\ell}\bbinom{n-i}{\varphi-\ell}\Psi(1,1;n-\varphi).
\end{align}
Writing that
\begin{align}
    b^{-\sigma_{\varphi}}b^{\ell(n-i)}b^{(n-\varphi)(n-\varphi-i+\ell)}b^{(i-\ell)(n-i)} 
        & = b^{\sigma_{\varphi}}b^{\varphi(1-n)}b^{n(n-\varphi)}b^{\ell(n-\varphi)}b^{i(\varphi-i)}
        \\
        & = b^{\theta}b^{\ell(n-\varphi)}
\end{align}
we get
\begin{align}
    \psi_i(1,1;n) 
        & = c^{n-\varphi}b^{\theta}\beta_{b}(\varphi,\varphi)\gamma_{b,c}(n-i,\varphi-i)\sum_{\ell=0}^i(-1)^\ell b^{\ell(n-\varphi)} b^{\sigma_{\ell}}\bbinom{i}{\ell}\bbinom{n-i}{\varphi-\ell}\prod_{j=0}^{i-\ell-1}\left(b^{\varphi-\ell}-b^j\right) 
        \\
        \overset{\eqref{equation:bepsilons}}&{=} c^{n-\varphi}b^{\theta}b^{\sigma_{i}}\beta_{b}(\varphi,\varphi)\bbinom{n-i}{n-\varphi}\gamma_{b,c}(n-i,\varphi-i)\label{equation:momentsRHSpsi}.
\end{align}

Substituting the results from \eqref{equation:bLHSmoment}, \eqref{equation:RHSpart1moments} and \eqref{equation:momentsRHSpsi} we have
\begin{equation}
    \sum_{i=\varphi}^n b^{\varphi(1-i)+\sigma_\varphi}\beta_{b}(\varphi,\varphi)\bbinom{i}{\varphi} c_i = \frac{1}{|\mathscr{C}^\perp|}\sum_{i=0}^n c_i'(-1)^{i}c^{n-\varphi}b^{\theta}b^{\sigma_{i}}\beta_{b}(\varphi,\varphi)\bbinom{n-i}{n-\varphi}\gamma_{b,c}(n-i,\varphi-i).
\end{equation}
Thus cancelling and rearranging gives,
\begin{equation}
    \sum_{i=\varphi}^n b^{\varphi(n-i)} \bbinom{i}{\varphi}c_i  = \frac{\left(cb^{n}\right)^{n-\varphi}}{|\mathscr{C}^\perp|}\sum_{i=0}^\varphi  (-1)^{i} b^{\sigma_i}b^{i(\varphi-i)}\bbinom{n-i}{n-\varphi}\gamma_{b,c}(n-i,\varphi-i)c_i'
\end{equation}
as required.
\end{proof}

We can simplify Proposition \ref{prop:bmomentsbminusderivative} if $\varphi$ is less than the minimum distance of the dual code. Also we can introduce the \textbf{\textit{dual diameter}}, $\varrho_{S}'$, defined as the maximum distance between any two codewords of the dual code and simplify Proposition \ref{prop:bmomentsbminusderivative} further.

\begin{cor}
If $0\leq \varphi < d_{S}'$ then
\begin{equation}
    \sum_{i=\varphi}^n b^{\varphi(n-i)}\bbinom{i}{\varphi}c_i 
        = 
    \frac{1}{|\mathscr{C}^\perp|}\left(cb^{n}\right)^{n-\varphi}\bbinom{n}{\varphi}\gamma_{b,c}(n,\varphi).
\end{equation}
For $\varrho_{S}'< \varphi \leq n$ then
\begin{equation}
   \sum_{i=0}^{\varphi}(-1)^{i}b^{\sigma_{i}}b^{i(\varphi-i)}\bbinom{n-i}{n-\varphi}\gamma_{b,c}(n-i,\varphi-i)c_{i}
        =
    0.
\end{equation}
Explicitly for the Hamming association scheme, when $b=1$ and $c=q$ we have for $\varrho_{S}'< \varphi \leq n$,
\begin{equation}
   \sum_{i=0}^{\varphi}(-1)^{i}\binom{n-i}{n-\varphi}\left(q-1\right)^{\varphi-i}c_{i}
        =
    0.
\end{equation}
Moreover for $\varphi=n$, 
\begin{equation}
    \sum_{i=0}^{n}(-1)^{i}(q-1)^{n-i}c_i = 0.
\end{equation}
\end{cor}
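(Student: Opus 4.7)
The plan is to apply Proposition \ref{prop:bmomentsbminusderivative} three times: once directly, once with $\mathscr{C}$ and $\mathscr{C}^{\perp}$ interchanged (exploiting formal self-duality), and once in the degenerate limit $b=1$, $c=q$ for the Hamming specialisation. In each case, the vanishing condition on the relevant weight distribution (either via $d_{S}'$ or via $\varrho_{S}'$) collapses the sum on one side of the identity to a single surviving term or to zero.

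For the first identity, apply Proposition \ref{prop:bmomentsbminusderivative} directly under the hypothesis $0\leq\varphi<d_{S}'$. Because $d_{S}'$ is the minimum weight of $\mathscr{C}^{\perp}$, we have $c_{0}'=1$ and $c_{i}'=0$ for $1\leq i\leq\varphi$, so only the $i=0$ term survives on the right-hand side. Its contribution is $\bbinom{n}{n-\varphi}\gamma_{b,c}(n,\varphi)$, since $(-1)^{0}=b^{\sigma_{0}}=b^{0\cdot\varphi}=1$. Rewriting $\bbinom{n}{n-\varphi}=\bbinom{n}{\varphi}$ via \eqref{equation:gaussianxx-k} produces exactly the stated formula.

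For the second identity, exploit formal self-duality: since $P=Q$ in a Krawtchouk scheme and $(\mathscr{C}^{\perp})^{\perp}=\mathscr{C}$, Proposition \ref{prop:bmomentsbminusderivative} applies equally well with $\mathscr{C}$ and $\mathscr{C}^{\perp}$ interchanged, yielding
\begin{equation*}
\sum_{i=\varphi}^{n} b^{\varphi(n-i)}\bbinom{i}{\varphi}c_{i}' = \frac{\left(cb^{n}\right)^{n-\varphi}}{|\mathscr{C}|}\sum_{i=0}^{\varphi}(-1)^{i}b^{\sigma_{i}}b^{i(\varphi-i)}\bbinom{n-i}{n-\varphi}\gamma_{b,c}(n-i,\varphi-i)c_{i}.
\end{equation*}
By definition of $\varrho_{S}'$, we have $c_{i}'=0$ whenever $i>\varrho_{S}'$. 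Under the hypothesis $\varrho_{S}'<\varphi\leq n$, every index $i$ appearing on the left-hand side satisfies $i\geq\varphi>\varrho_{S}'$, so the left-hand side vanishes identically, forcing the right-hand sum to vanish, which is exactly the desired identity.

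For the Hamming specialisation, substitute $b=1$ and $c=q$ into the identity just derived, interpreting $b\to 1$ via the limiting procedure used elsewhere in the paper, so that $\bbinom{n-i}{n-\varphi}$ degenerates to $\binom{n-i}{n-\varphi}$. The $b$-nary gamma function collapses to $\gamma_{1,q}(n-i,\varphi-i)=\prod_{j=0}^{\varphi-i-1}(q-1)=(q-1)^{\varphi-i}$, while $b^{\sigma_{i}}=b^{i(\varphi-i)}=1$, giving the penultimate stated identity. Setting $\varphi=n$ then forces $\binom{n-i}{0}=1$ for every $i$ and collapses the sum to $\sum_{i=0}^{n}(-1)^{i}(q-1)^{n-i}c_{i}=0$. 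The only conceptually delicate step is justifying the self-dual swap in the second part; once this is granted, the rest reduces to routine bookkeeping with the $b$-nary identities already collected.
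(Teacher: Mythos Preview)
Your proof is correct and follows essentially the same approach as the paper: apply Proposition~\ref{prop:bmomentsbminusderivative} directly for the first identity (collapsing the right side via $c_0'=1$, $c_1'=\dots=c_\varphi'=0$), apply it with the roles of $\mathscr{C}$ and $\mathscr{C}^\perp$ swapped for the second identity (collapsing the left side via $c_\varphi'=\dots=c_n'=0$), and then specialise to $b=1$, $c=q$ for the Hamming cases. The only cosmetic difference is that you explicitly flag the self-duality justification for the swap, whereas the paper simply writes ``applying Proposition~\ref{prop:bmomentsbminusderivative} to $\mathscr{C}^\perp$'' without further comment.
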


\begin{proof}
First consider $0\leq\varphi< d_{S}'$, then $c'_0=1$, $c_1'=\ldots=c_\varphi'=0$. Also since $\displaystyle\bbinom{n}{n-\varphi}=\bbinom{n}{\varphi}$ the statement holds. Now if $\varrho_{S}'<\varphi\leq n$ then applying Proposition \ref{prop:bmomentsbminusderivative} to $\mathscr{C}^\perp$ gives
\begin{equation}
    \sum_{i=\varphi}^n b^{\varphi(n-i)}\bbinom{i}{\varphi}c'_i 
        = 
    \frac{1}{|\mathscr{C}|}\left(cb^{n}\right)^{n-\varphi}\sum_{i=0}^{\varphi}(-1)^{i}b^{\sigma_{i}}b^{i(\varphi-i)}\bbinom{n-i}{n-\varphi}\gamma_{b,c}(n-i,\varphi-i)c_{i}.
\end{equation}
So using $c_\varphi'=\ldots=c_n'=0$ we get
\begin{equation}
    0 = \sum_{i=0}^{\varphi}(-1)^{i}b^{\sigma_{i}}b^{i(\varphi-i)}\bbinom{n-i}{n-\varphi}\gamma_{b,c}(n-i,\varphi-i)c_{i}
\end{equation}
as required. For the Hamming association scheme, we use that $b=1$, $c=q$ and the $b$-nary Gaussian coefficients become the usual binomal coefficients and we have immediately
\begin{equation}
    0 = \sum_{i=0}^{\varphi}(-1)^{i}\binom{n-i}{n-\varphi}(q-1)^{\varphi-i}c_{i}.
\end{equation}
Moreover when $\varphi=n$,
\begin{equation}
    \sum_{i=0}^{n}(-1)^{i}\binom{n-i}{0}(q-1)^{\varphi-i}c_{i}
     = \sum_{i=0}^{n}(-1)^{i}c_{i} = 0.
\end{equation}
\end{proof}

\subsection{Maximum Distance Codes in the Association Scheme}
As an application for the MacWilliams Identity, we can derive an explicit form of the coefficients of the weight distribution for an $(\mathscr{X},R)$ $n$-class association scheme for maximal distance codes. This generalises the results for MDS codes \cite[Theorem 6, Chapter 11]{TheoryofError}, MRD codes \cite[Proposition 9]{gadouleau2008macwilliams}, MSRD codes in \cite[Proposition 6.8]{friedlander2023macwilliams} and similar MHRD codes.

Firstly a lemma that will be needed.
\begin{lem}\label{lemma:bsequences}
If $x_0,x_1,\ldots,x_\ell$ and $y_0,y_1,\ldots,y_\ell$ are two sequences of real numbers and if
\begin{equation}
    x_j = \sum_{i=0}^{j}\bbinom{\ell-i}{\ell-j}y_i
\end{equation}
for $0\leq j\leq \ell$, then

\begin{equation}
    y_i = \sum_{j=0}^i (-1)^{i-j}b^{\sigma_{i-j}}\bbinom{\ell-j}{\ell-i}x_j
\end{equation}
for $0\leq i\leq \ell$.
\end{lem}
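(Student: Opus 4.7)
The plan is to prove the inversion formula by direct substitution. Denote the proposed right hand side of the lemma by $z_i:=\sum_{j=0}^i (-1)^{i-j}b^{\sigma_{i-j}}\bbinom{\ell-j}{\ell-i}x_j$, substitute the hypothesised expression $x_j=\sum_{k=0}^{j}\bbinom{\ell-k}{\ell-j}y_k$, and swap the order of the two summations. What needs to be shown is that
\begin{equation*}
    z_i=\sum_{k=0}^{i} y_k \sum_{j=k}^{i}(-1)^{i-j}b^{\sigma_{i-j}}\bbinom{\ell-j}{\ell-i}\bbinom{\ell-k}{\ell-j}=y_i,
\end{equation*}
so the entire task reduces to verifying that the inner $j$-sum equals $\delta_{ik}$.

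To recognise the inner sum as an instance of \eqref{equation:deltaijbs}, I will reindex via $\nu=\ell-j$, so that $\nu$ ranges from $\ell-i$ up to $\ell-k$. Under this substitution one has $i-j=\nu-(\ell-i)$, $\bbinom{\ell-j}{\ell-i}=\bbinom{\nu}{\ell-i}$ and $\bbinom{\ell-k}{\ell-j}=\bbinom{\ell-k}{\nu}$, turning the inner sum into
\begin{equation*}
    \sum_{\nu=\ell-i}^{\ell-k}(-1)^{\nu-(\ell-i)}b^{\sigma_{\nu-(\ell-i)}}\bbinom{\nu}{\ell-i}\bbinom{\ell-k}{\nu}.
\end{equation*}
This is exactly \eqref{equation:deltaijbs} with the roles of $i$ and $j$ played by $\ell-i$ and $\ell-k$ respectively, and so it evaluates to $\delta_{\ell-i,\ell-k}=\delta_{ik}$. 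Feeding this back into the double sum gives $z_i=\sum_{k=0}^{i}y_k\delta_{ik}=y_i$, which is the claim.

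There is no substantial obstacle here; the only piece requiring care is the bookkeeping of the summation bounds after the reindexing $\nu=\ell-j$, and checking that $k\le j \le i$ translates cleanly into $\ell-i\le \nu\le \ell-k$ so that \eqref{equation:deltaijbs} applies verbatim. Conceptually, the lemma just says that the lower-triangular matrix $M_{ji}=\bbinom{\ell-i}{\ell-j}$ and the signed lower-triangular matrix $N_{ij}=(-1)^{i-j}b^{\sigma_{i-j}}\bbinom{\ell-j}{\ell-i}$ are mutual inverses, and \eqref{equation:deltaijbs} is precisely the statement that $NM=I$ after reindexing, so the computation above is really just a verification of that matrix identity.
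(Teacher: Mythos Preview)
Your proof is correct and follows essentially the same route as the paper's: substitute the hypothesis into the proposed inverse, interchange the order of summation, reindex the inner sum via $\nu=\ell-j$ (the paper uses $s=\ell-j$), and invoke \eqref{equation:deltaijbs} to collapse it to $\delta_{ik}$.
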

\begin{proof}
For $0\leq i \leq \ell$, 
\begin{align}
    \sum_{j=0}^i (-1)^{i-j}b^{\sigma_{i-j}}\bbinom{\ell-j}{\ell-i}x_j 
        & = \sum_{j=0}^i(-1)^{i-j}b^{\sigma_{i-j}}\bbinom{\ell-j}{\ell-i}\left(\sum_{k=0}^j \bbinom{\ell-k}{\ell-j}y_k\right)
        \\
        & = \sum_{k=0}^i \sum_{j=k}^i (-1)^{i-j}b^{\sigma_{i-j}}
        \bbinom{\ell-j}{\ell-i}\bbinom{\ell-k}{\ell-j}y_k
        \\
        & = \sum_{k=0}^i y_k \left(\sum_{s=\ell-i}^{\ell-k}(-1)^{i-\ell+s}b^{\sigma_{i-\ell+s}}\bbinom{s}{\ell-i}\bbinom{\ell-k}{s}\right)
        \\
        \overset{\eqref{equation:deltaijbs}}&{=} \sum_{k=0}^i y_k \delta_{ik}
        \\
        & = y_i
\end{align}
as required.
\end{proof}

Before going any further we need some restrictions on the codes we consider to be able to use the following proposition. We are only considering $(\mathscr{X},R)$ $n$-class Krawtchouk association schemes. From there we are restricted to linear codes $\mathscr{C}\subseteq\mathscr{X}$ with minimum distance $d_{S}$ and their dual codes $\mathscr{C}^\perp\subseteq\mathscr{X}$ with minimum distance $d'_{S}$ such that $d_S+d'_S = n+2$. This restriction is necessary since the ``first pair of universal bounds" \cite[Section IV.F]{InfoTheoryDelsarte} is met in equality if and only if $d_S+d'_S=n+2$. We call codes that meet these bounds \textbf{\textit{maximal}} codes. More details on these ``universal bounds", which are the equivalent Singleton bounds, for any $P$-polynomial scheme can be found in \cite[Section IV.F]{InfoTheoryDelsarte}.

\begin{prop}
For an $(\mathscr{X},R)$ $n$-class Krawtchouk association scheme let $\mathscr{C}\subseteq\mathscr{X}$ be a maximal linear code with weight distribution $\boldsymbol{c}=(c_0,\ldots,c_n)$ and minimum distance $d_S$. Let the dual of $\mathscr{C}$ be the maximal linear code $\mathscr{C}^\perp$ with minimum distance $d'_S=n-d_S+2$. Then we have $c_0=1$ and for $0\leq \omega \leq n-d_{S}$,
\begin{equation}
    c_{d_{S}+\omega} = \sum_{i=0}^\omega (-1)^{\omega-i} b^{\sigma_{\omega-i}}\bbinom{d_S+\omega}{d_S+i}\bbinom{n}{d_S+\omega}\left( \frac{cb^{n\left(d_{S}+i\right)}}{|\mathscr{C}^\perp|}-1\right).
\end{equation}
\end{prop}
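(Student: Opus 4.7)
The plan is to combine Corollary \ref{corrollary:bsimplificationpropbderiv} with the inversion Lemma \ref{lemma:bsequences}. Since $\mathscr{C}$ is a linear code we have $c_0=1$, and the minimum-distance hypothesis gives $c_1=\cdots=c_{d_S-1}=0$. Because $\mathscr{C}^\perp$ is maximal with $d_S'=n-d_S+2$, the corollary applies for every $\varphi$ in the range $0\le\varphi\le n-d_S+1$. Writing out the corollary, dropping the zero terms and isolating the $i=0$ contribution yields, for $0\le \varphi\le n-d_S$,
\begin{equation*}
\sum_{i=d_S}^{n-\varphi} \bbinom{n-i}{\varphi} c_i
  \;=\; \bbinom{n}{\varphi}\!\left(\frac{(cb^{n})^{n-\varphi}}{|\mathscr{C}^\perp|}-1\right).
\end{equation*}

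The next step is to re-index. Setting $i=d_S+k$ on the left and $\varphi=n-d_S-j$ on the right converts this into a lower-triangular system in the unknowns $y_k:=c_{d_S+k}$: for each $0\le j\le n-d_S$,
\begin{equation*}
\sum_{k=0}^{j} \bbinom{(n-d_S)-k}{(n-d_S)-j}\, y_k
  \;=\; x_j,\qquad x_j := \bbinom{n}{d_S+j}\!\left(\frac{(cb^{n})^{d_S+j}}{|\mathscr{C}^\perp|}-1\right).
\end{equation*}
This is exactly the hypothesis of Lemma \ref{lemma:bsequences} with $\ell=n-d_S$, so I can invert directly to obtain
\begin{equation*}
c_{d_S+\omega} \;=\; \sum_{j=0}^{\omega} (-1)^{\omega-j} b^{\sigma_{\omega-j}} \bbinom{(n-d_S)-j}{(n-d_S)-\omega}\bbinom{n}{d_S+j}\!\left(\frac{(cb^{n})^{d_S+j}}{|\mathscr{C}^\perp|}-1\right).
\end{equation*}

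Finally, to match the form in the statement, I convert the product of Gaussian coefficients using identity \eqref{equation:gaussianswapplaces}: applying $\bbinom{x}{i}\bbinom{x-i}{k}=\bbinom{x}{k}\bbinom{x-k}{i}$ with $x=n$, $i=d_S+j$, $k=\omega-j$ (and simplifying via \eqref{equation:gaussianxx-k}) gives
\begin{equation*}
\bbinom{n}{d_S+j}\bbinom{n-d_S-j}{n-d_S-\omega}
  \;=\; \bbinom{n}{d_S+\omega}\bbinom{d_S+\omega}{d_S+j},
\end{equation*}
which, after relabelling $j\mapsto i$, produces the claimed closed form. The only genuinely non-routine step is recognising the triangular structure in the right parametrisation so that Lemma \ref{lemma:bsequences} applies verbatim; once the substitutions $i=d_S+k$ and $\varphi=n-d_S-j$ are in place, the remainder is bookkeeping and a single Gaussian-coefficient manipulation. (I note that, as written, the factor $cb^{n(d_S+i)}$ in the statement appears to be a typographical abbreviation for $(cb^n)^{d_S+i}$, which is what my calculation produces and which specialises correctly to the classical MDS formula when $b\to 1$.)
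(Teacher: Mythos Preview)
Your proposal is correct and follows essentially the same route as the paper: apply Corollary~\ref{corrollary:bsimplificationpropbderiv} in the range $0\le\varphi\le n-d_S$, use $c_0=1$ and $c_1=\cdots=c_{d_S-1}=0$, substitute $\varphi=n-d_S-j$ to obtain a lower-triangular system, invert via Lemma~\ref{lemma:bsequences} with $\ell=n-d_S$, and finish with the Gaussian-coefficient identity $\bbinom{n}{d_S+j}\bbinom{n-d_S-j}{n-d_S-\omega}=\bbinom{n}{d_S+\omega}\bbinom{d_S+\omega}{d_S+j}$. Your observation that the exponent in the statement should read $(cb^n)^{d_S+i}$ rather than $cb^{n(d_S+i)}$ is also consistent with what the paper's own computation produces.
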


\begin{proof}
Now from Corollary \ref{corrollary:bsimplificationpropbderiv} we have
\begin{equation}
    \sum_{i=0}^{n-\varphi}\bbinom{n-i}{\varphi}c_i  = \frac{1}{|\mathscr{C}^\perp|}\left(cb^n\right)^{n-\varphi}\bbinom{n}{\varphi}
\end{equation}
for $0\leq \varphi < d_{S}'$. Now since we have a linear code $\mathscr{C}$ which is maximal, with minimum distance $d_{S}$ and we have  $\mathscr{C}^\perp$ which is also maximal with minimum distance $d_{S}'=n-d_{S}+2$, Corollary \ref{corrollary:bsimplificationpropbderiv} holds for $0\leq \varphi\leq n-d_{S}=d_{S}'-2$. We therefore have $c_0=1$ and $c_1=c_2=\ldots=c_{d_{S}-1}=0$ and setting $\varphi=n-d_{S}-j$ for $0\leq j\leq n-d_{S}$ we obtain

\begin{align}
    \bbinom{n}{n-d_S-j} + \sum_{i=d_{S}}^{d_{S}+j}\bbinom{n-i}{n-d_S-j}c_i & = \frac{1}{|\mathscr{C}^\perp|}\left(cb^n\right)^{d_{S}+j}\bbinom{n}{n-d_S-j}\\
    \sum_{\omega=0}^j \bbinom{n-d_S-\omega}{n-d_S-j}c_{\omega+d_{S}} & = \bbinom{n}{n-d_S-j}\left(\frac{\left(cb^n\right)^{d_{S}+j}}{|\mathscr{C}^\perp|}-1\right).
\end{align}
Applying Lemma \ref{lemma:bsequences} with $\ell = n-d_{S}$ and $b_\omega = c_{\omega+d_{S}}$ then setting 
\begin{equation}
    a_j = \bbinom{n}{n-d_S-j}\left(\frac{\left(cb^n\right)^{d_{S}+j}}{|\mathscr{C}^\perp|}-1\right)
\end{equation}
gives
\begin{equation}
     \sum_{\omega=0}^j \bbinom{n-d_S-\omega}{n-d_S-j}b_\omega = a_j
\end{equation}
and so
\begin{align}
    b_\omega = c_{\omega+d_{S}} & = \sum_{i=0}^\omega (-1)^{\omega-i}b^{\sigma_{\omega-i}}\bbinom{n-d_S-i}{n-d_S-\omega}a_i\\
    & = \sum_{i=0}^\omega (-1)^{\omega-i}b^{\sigma_{\omega-i}}\bbinom{n-d_S-i}{n-d_S-\omega}\bbinom{n}{n-d_S-i}\left(\frac{\left(cb^n\right)^{d_{S}+i}}{|\mathscr{
    C}^\perp|}-1\right).
\end{align}
But we have
\begin{align}
    \bbinom{n-d_S-i}{n-d_S-\omega}\bbinom{n}{n-d_S-i} \overset{\eqref{equation:gaussianxx-k}}&{=} \bbinom{n-(d_S+i)}{n-(d_S+\omega)}\bbinom{n}{d_S+i}\\
    \overset{\eqref{equation:gaussianswapplaces}}&{=} \bbinom{d_S+\omega}{d_S+i}\bbinom{n}{n-(d_S+\omega)}\\
     \overset{\eqref{equation:gaussianxx-k}}&{=} \bbinom{d_S+\omega}{d_S+i}\bbinom{n}{d_S+\omega}.
\end{align}
Therefore
\begin{equation}
    c_{\omega+d_{S}} = \sum_{i=0}^\omega (-1)^{\omega-i}q^{2\sigma_{\omega-i}}\bbinom{d_S+\omega}{d_S+i}\bbinom{n}{d_S+\omega}\left(\frac{q^{m(d_{S}+i)}}{|\mathscr{
    C}^\perp|}-1\right)
\end{equation}
as required.
\end{proof}

\section{Individual Association Schemes}\label{section:individual}

Since now we have a general theory for the Krawtchouk association schemes, it is useful to now compare these results to those obtained for individual known schemes, namely, the Hamming, rank, skew rank and Hermitian association schemes. 

\subsection{The Parameters}

In more detail, for the $n$-class Hamming scheme we take $b=1$ and $c=q$ which gives us the MacWilliams Identity as:
\begin{equation}
W_{\mathscr{C}^\perp}^{H}(X,Y)
            = \frac{1}{|\mathscr{C}|}\sum_{i=0}^n c_i(X-Y)^{i}\left(X+\left(q-1\right)Y\right)^{n-i}
\end{equation}
where $W_{\mathscr{C}^\perp}^{H}(X,Y)$ is the Hamming weight enumerator of $X$ and $Y$ and $\ast$ becomes usual polynomial multiplication.

For the rank association scheme over $\mathbb{F}_{q}^{m\times n}$ (or $\mathbb{F}_{q^m}^{n}$), studied by Gadouleau and Yan \cite{gadouleau2008macwilliams}, we know what the MacWilliams Identity should look like. So taking $b=q$ and $c=q^{m-n}$ we obtain, 
\begin{equation}
    W_{\mathscr{C}^\perp}^{R}(X,Y)
             = \frac{1}{|\mathscr{C}|}\sum_{i=0}^n c_i(X-Y)^{[i]}\ast\left(X+\left(q^m-1\right)Y\right)^{[n-i]}
\end{equation}
as required, with the correct ``product", $\ast$. 

For the skew rank association scheme over $\mathbb{F}_q^{t\times t}$, studied by Friedlander et al. \cite{friedlander2023macwilliams} we take $b=q^2$ and $c$ to be $q$ if $t$ is odd, and $q^{-1}$ if $t$ is even. So we have the MacWilliams Identity as,
\begin{equation}
    W_{\mathscr{C}^\perp}^{SR}(X,Y)
             = \frac{1}{|\mathscr{C}|}\sum_{i=0}^n c_i(X-Y)^{[i]}\ast\left(X+\left(q^m-1\right)Y\right)^{[n-i]}
\end{equation}
where $n=\lfloor\frac{t}{2}\rfloor$, and $m=\frac{t(t-1)}{2n}$ as required, with the correct ``product", $\ast$ again. 

Finally for the Hermitian association scheme over $\mathbb{F}_{q^2}^{t\times t}$, we take $b=-q$ and $c=-1$ and we obtain the previously unpublished MacWilliams Identity
\begin{equation}
    W_{\mathscr{C}^\perp}^{HR}(X,Y)
             = \frac{1}{|\mathscr{C}|}\sum_{i=0}^t c_i(X-Y)^{[i]}\ast\left(X+\left(-(-q)^t-1\right)Y\right)^{[t-i]}
\end{equation}
with yet again a slightly different product, $\ast$. 

These parameters are summarised in Table \ref{tab:general2} below. 
\begin{table}[H]
    \centering\renewcommand{\arraystretch}{1.2}
    \begin{tabular}{|c|c|c|c|c|}
        \hline
            Name & $b$ & $c$ & $\lambda$ & $cb^{\lambda}$
            \\
        \hline\rowcolor{mycolor!10}
            Hamming & $1$ & $q$ & $n$ & $q$ 
            \\
        \hline\rowcolor{mycolor!20}
            Bilinear & &  &  & 
            \\ \rowcolor{mycolor!20}
            Gabidulin & \multirow{-2}{0.8em}{$q$}  & \multirow{-2}{1.5em}{$q^{m-n}$} & \multirow{-2}{0.5em}{$n$} & \multirow{-2}{1.8em}{$q^{m}$}
            \\
        \hline \rowcolor{mycolor!30}
            Skew & $q^{2}$ & t odd - $q$, t even - $q^{-1}$ & $n$ & $cq^{2n}$
            \\
        \hline \rowcolor{mycolor!40}
            Hermitian & $-q$ & $-1$ & $t$ & $-(-q)^{t}$
            \\
        \hline
    \end{tabular}
    \caption{Proposed generalised parameters}
    \label{tab:general2}
\end{table}

\subsection{The Overview}

To summarise the results of the Krawtchouk association schemes, Tables \ref{tab:general1} and \ref{tab:general3} highlight the features we need in each one. At a glance there are evident similarities but significant differences. For each case we have listed the associated metric, the number of classes in the relevant association scheme, the underlying space, the homogeneous polynomial used in the MacWilliams Identity, the ``fundamental polynomial", the known eigenvalues of the scheme and its valencies. In all cases $q$ is a power of a prime. The final row is the accumulation of the results from analysing the ways in which the differences can be assimilated in a general theory.
\begin{table}[H]
    \centering\renewcommand{\arraystretch}{1.4}
    \begin{tabular}{|c|c|c|c|c|c|c|}
        \hline
            Name & Metric & Class & Space & {\small Fundamental Polynomial}
            \\ 
            \hline \rowcolor{mycolor!10}
            Hamming length $n$ & Hamming & $n$ & $\mathbb{F}_q^{n}$ & $X+(q-1)Y$ 
            \\
        \hline 
        \rowcolor{mycolor!20}
            \small{Bilinear $m\times n, ~m\geq n$} &  & $n$ & $\mathbb{F}_{q}^{m\times n}$ &  \\
        \rowcolor{mycolor!20}
            Gabudulin $m\times n$ & \multirow{-2}{2em}{Rank} & $n$ & $\mathbb{F}_{q^m}^{n}$ & \multirow{-2}{4cm}{\centering$X+\left(q^m-1\right)Y$}\\

        \hline \rowcolor{mycolor!30}
             &  & & & $t$ odd, $X+\left(q^{2n+1}-1\right)Y$ \\ 
            \rowcolor{mycolor!30}
            \multirow{-2}{4cm}{\centering Skew $t\times t$, $n=\left\lfloor\frac{t}{2}\right\rfloor$, $m=\frac{t(t-1)}{2n}$} & \multirow{-2}{2cm}{\centering Skew Rank} & \multirow{-2}{1cm}{\centering \small $n$}  & \multirow{-2}{1cm}{\centering $\mathbb{F}_q^{t\times t}$}  & $t$ even, $X+\left(q^{2n-1}-1\right)Y$  \\

        \hline \rowcolor{mycolor!40}
            Hermitian $t\times t$ & Rank & $t$ & $\mathbb{F}_{q^2}^{t\times t}$ & $X+\left(-(-q)^t-1\right)Y$
            \\
        \hline \rowcolor{mycolor!50}
            {\small Krawtchouk A-S. } & - & $n$ & $\mathscr{X}$ & $X+\left(cb^{n}-1\right)Y$
            \\
        \hline
    \end{tabular}
    \caption{An overview of key parameters and results}
    \label{tab:general1}
\end{table}
\begin{table}[H]
    \centering\renewcommand{\arraystretch}{1.4}
    \begin{tabular}{|c|c|c|}
        \hline
            Name & Eigenvalues & Valencies $v_s$
            \\ 
            \hline \rowcolor{mycolor!10}
            Hamming length $n$ & $\sum_{j=0}^k (-1)^j \binom{x}{j}\binom{n-x}{k-j}(q-1)^{k-j}$ & $\binom{n}{s}(q-1)^{s}$
            \\
        \hline 
        \rowcolor{mycolor!20}
            \small{Bilinear $m\times n, ~m\geq n$} &  & \\
        \rowcolor{mycolor!20}
            Gabudulin $m\times n$ & \multirow{-2}{7.2cm}{\centering $\sum_{j=0}^{k}(-1)^{j}q^{j(n-x)}q^{\sigma_{j}} \prescript{}{q}{\bbinom{x}{j}}\prescript{}{q}{\bbinom{n-x}{k-j}}\alpha(n-j,k-j)$} & \multirow{-2}{2cm}{\centering$\prescript{}{q}{\bbinom{n}{s}}\alpha(m,s)$} \\

        \hline \rowcolor{mycolor!30}
             &    & \\ 
            \rowcolor{mycolor!30}
            \multirow{-2}{4cm}{\centering Skew $t\times t$, $n=\left\lfloor\frac{t}{2}\right\rfloor$, $m=\frac{t(t-1)}{2n}$} & \multirow{-2}{7.6cm}{\centering\small $\sum_{j=0}^k (-1)^j q^{2j(n-x)} q^{j(j-1)}\prescript{}{q^2}{\bbinom{x}{j}}\prescript{}{q^2}{\bbinom{n-x}{k-j}}\gamma(m-2j,k-j)$} & \multirow{-2}{2cm}{\centering $\prescript{}{q^2}{\bbinom{n}{s}}\gamma(m,s)$}  \\

        \hline \rowcolor{mycolor!40}
            Hermitian $t\times t$ & {\small $\sum_{j=0}^{k} (-1)^{j}(-q)^{j(t-x)}(-q)^{\sigma_{j}} \prescript{}{-q}{\bbinom{x}{j}}\prescript{}{-q}{\bbinom{t-x}{k-j}}\gamma'(t-j, k-j)$} & $\prescript{}{-q}{\bbinom{t}{s}}\gamma'(t,s)$
            \\
        \hline \rowcolor{mycolor!50}
            {\small Krawtchouk A-S. } & $\sum_{j=0}^k (-1)^j b^{j(n-x)} b^{\sigma_j}\prescript{}{b}{\bbinom{x}{j}}\prescript{}{b}{\bbinom{n-x}{k-j}}\gamma_{b,c}(n-j,k-j)$ & $\prescript{}{b}{\bbinom{n}{s}}\gamma_{b,c}(n,s)$
            \\
        \hline
    \end{tabular}
    \caption{An overview of key parameters and results}
    \label{tab:general3}
\end{table}

To clarify, for the Hermitian association scheme we cannot directly use Delsarte's recurrence relation as $-q$ lies outside the parameters stated. Instead we prove in Proposition \ref{prop:recurrencematch} that the recurrence relation \cite[Lemma 7]{KaiHermitian} used by Schmidt to determine the eigenvalues of this scheme has exactly the same solutions as Delsarte's recurrence relation \cite[(1)]{delsartereccurance} but with the parameters used for the Hermitian association scheme, $b=-q$ and $c=-1$. I 
\begin{prop}\label{prop:recurrencematch}
    For $b=-q\in\mathbb{R}$, $b\neq 0$, $x,k\in\{0,\ldots,n\}$ the recurrence relation from \cite[Lemma 7]{KaiHermitian},
    \begin{equation*}
        C_{k+1}(x+1,n+1) = C_{k+1}(x,n+1) + b^{2n+1-x}C_{k}(x,n)
    \end{equation*}
    has the same solutions as the recurrence relation from \cite[(1)]{delsartereccurance},
    \begin{equation*}
        C_{k+1}(x+1,n+1) = b^{k+1}C_{k+1}(x,n)-b^kC_{k}(x,n)
    \end{equation*}
    where $C_k(x,n)$ are the eigenvalues of the association scheme of Hermitian matrices.
\end{prop}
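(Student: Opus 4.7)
The plan is to verify directly that the $b$-Krawtchouk polynomials $C_k(x,n)$ with $b=-q$, $c=-1$ satisfy Schmidt's recurrence, since it was already established in Section \ref{section:generalPreliminaries} that they satisfy Delsarte's recurrence \eqref{equation:generalrecurrenceCKI}. Both recurrences, together with the shared initial data $C_0(x,n)=1$ and $C_k(0,n)=\bbinom{n}{k}\gamma_{b,c}(n,k)$, uniquely determine a polynomial family, so this one-sided verification proves that the solution sets coincide and hence (by Lemma \ref{lemma:generalckiequalspki}) both recurrences are solved by the eigenvalues of the Hermitian scheme.

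First I would substitute the explicit sum
\begin{equation*}
C_k(x,n) \;=\; \sum_{j=0}^{k}(-1)^{j}\,b^{\,j(n-x)+\sigma_{j}}\,\bbinom{x}{j}\bbinom{n-x}{k-j}\,\gamma_{b,c}(n-j,k-j)
\end{equation*}
into each of the three quantities $C_{k+1}(x+1,n+1)$, $C_{k+1}(x,n+1)$ and $C_k(x,n)$ appearing in Schmidt's recurrence. In $C_{k+1}(x+1,n+1)$ the Gaussian coefficient $\bbinom{x+1}{j}$ splits via \eqref{equation:gaussiancoeffsx-1k-1}, the factor $\bbinom{n-x}{k+1-j}$ stays put (since $(n+1)-(x+1)=n-x$), and $\gamma_{b,c}(n+1-j,k+1-j)$ reduces through \eqref{equation:gammastepdown}. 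In $C_{k+1}(x,n+1)$ the coefficient $\bbinom{x}{j}$ is untouched, while $\bbinom{n+1-x}{k+1-j}$ splits via \eqref{equation:Stepdown1} and the $\gamma_{b,c}$ factor again reduces by \eqref{equation:gammastepdown}.

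Next I would form the difference $C_{k+1}(x+1,n+1)-C_{k+1}(x,n+1)$, reindex the summand carrying $\bbinom{x}{j-1}$ by $j\mapsto j+1$, and group contributions by their common $\gamma_{b,c}(n-j,k-j)$ content. I expect the pieces proportional to $(cb^{n+1-j}-1)$ arising from \eqref{equation:gammastepdown} to cancel in pairs between the two expanded sums, leaving a residue whose coefficient (after the boundary terms at $j=0$ and $j=k+1$ are absorbed) matches exactly the sum defining $b^{2n+1-x}C_k(x,n)$. The specialisation $c=-1$, $b=-q$ is what collapses the mixed $c$- and $b$-powers in the leftover factor into the clean exponent $2n+1-x$; this is the step where the sign choices of the Hermitian parameters are genuinely used.

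The main obstacle will be the combinatorial bookkeeping: three nested sums with different index ranges, several step-down identities applied in parallel, and careful treatment of the edge terms. Structurally this mirrors the earlier proof that $C_k(x,n)$ satisfies Delsarte's recurrence, and I expect the same tactic -- grouping by $\gamma_{b,c}$-content and invoking the Gaussian identities \eqref{equation:Stepdown1}--\eqref{equation:beta1stepdown} repeatedly -- to close the argument, with the final algebraic simplification to produce the $b^{2n+1-x}$ factor being the only genuinely Hermitian-specific ingredient.
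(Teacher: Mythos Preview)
Your plan is sound and would succeed, but it differs from the paper's argument in one structural way that is worth noting. You propose to verify Schmidt's recurrence \emph{directly} by expanding $C_{k+1}(x+1,n+1)$ and $C_{k+1}(x,n+1)$ and showing their difference is $b^{2n+1-x}C_k(x,n)$; you then appeal to the earlier proof of Delsarte's recurrence and to uniqueness from the shared initial data to conclude the solution sets coincide. The paper instead observes that the two recurrences share the \emph{same} left-hand side $C_{k+1}(x+1,n+1)$, and therefore it suffices to prove that the two right-hand sides agree when evaluated at the explicit $b$-Krawtchouk polynomials:
\[
b^{k+1}C_{k+1}(x,n)-b^{k}C_{k}(x,n)\;=\;C_{k+1}(x,n+1)+b^{2n+1-x}C_{k}(x,n).
\]
This avoids ever expanding $C_{k+1}(x+1,n+1)$ (and hence the split of $\bbinom{x+1}{j}$ via \eqref{equation:gaussiancoeffsx-1k-1}), so there is one fewer sum to unpack and one fewer set of boundary terms to track. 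Your route essentially re-does part of the work already carried out in the proof of \eqref{equation:generalrecurrenceCKI}; the paper's route sidesteps that duplication at the cost of manipulating four terms simultaneously. Both arguments hinge on the same final algebraic collapse, where the choice $c=-1$ turns the factor $cb^{n+1-j}-1$ coming from \eqref{equation:gammastepdown} into $-(b^{n+1-j}+1)$ and lets the cross terms cancel to leave the exponent $2n+1-x$.
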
                                    \begin{proof}
    We show that 
    \begin{equation*}
        C_{k+1}(x,n+1) + b^{2n+1-x}C_{k}(x,n) = b^{k+1}C_{k+1}(x,n)-b^kC_{k}(x,n).
    \end{equation*}
    Let $\alpha_1=b^{k+1}C_{k+1}(x,n)$ and $\alpha_2=b^{k}C_{k}(x,n)$ and also let $\beta_1=C_{k+1}(x,n+1)$ and $\beta_2=b^{2n+1-x}C_{k}(x,n)$. Then using the eigenvalues defined in Table \ref{tab:general3} we have,
    \begin{align*}
        \alpha_1 
            & = b^{k+1}\sum_{j=0}^{k+1}(-1)^{j}b^{j(n-x)}b^{\sigma_{j}} \bbinom{x}{j}\bbinom{n-x}{k+1-j}\gamma'(n-j,k+1-j)
        \\
        \alpha_2 
            & = b^{k}\sum_{j=0}^{k}(-1)^{j}b^{j(n-x)}b^{\sigma_{j}} \bbinom{x}{j}\bbinom{n-x}{k-j}\gamma'(n-j,k-j)
        \\
        \beta_1 
            & = \sum_{j=0}^{k+1}(-1)^{j}b^{j(n+1-x)}b^{\sigma_{j}} \bbinom{x}{j}\bbinom{n+1-x}{k+1-j}\gamma'(n+1-j,k+1-j)
        \\
        \beta_2 
            & = b^{2n+1-x}\sum_{j=0}^{k}(-1)^{j}b^{j(n-x)}b^{\sigma_{j}} \bbinom{x}{j}\bbinom{n-x}{k-j}\gamma'(n-j,k-j).
    \end{align*}
    Consider $\alpha_1\vert_{j=k+1}$ and $\beta_1\vert_{j=k+1}$. Then,
    \begin{align*}
        \alpha_1\vert_{j=k+1}
            & = b^{k+1}(-1)^{k+1}b^{(k+1)(n-x)}b^{\sigma_{k+1}}\bbinom{x}{k+1}\bbinom{n-x}{0}\gamma'(n-k+1,0)
            \\
        \beta_1\vert_{j=k+1}
            & = (-1)^{k+1}b^{(k+1)(n+1-x)}b^{\sigma_{k+1}}\bbinom{x}{k+1}\bbinom{n+1-x}{0}\gamma'(n+1-k-1,0)
    \end{align*}
    since $\gamma'(x,0)=1$ for any $x\in\mathbb{R}$. So $\alpha_1\vert_{j=k+1}=\beta_1\vert_{j=k+1}$. Now rearranging $\alpha_1$ and $\beta_1$ we have
    \begin{align*}
        \alpha_1 
            & = b^{k+1}\sum_{j=0}^{k+1}(-1)^{j}b^{j(n-x)}b^{\sigma_{j}} \bbinom{x}{j}\bbinom{n-x}{k+1-j}\gamma'(n-j,k+1-j)
        \\
            \overset{\eqref{equation:gaussianfracxk-1}\eqref{equation:gammastepdownsecond}}&{=}
            \sum_{j=0}^{k} (-1)^{j}b^{j(n-x)+1}b^{\sigma_j}\bbinom{x}{j}\frac{b^{n-x-k+j}-1}{b^{k+1-j}-1}\bbinom{n-x}{k-j}\left(-b^{n-j}-b^{k-j}\right)\gamma'(n-j,k-j)
            \\
            & \hspace{1cm} + \alpha_1\vert_{j=k+1}
        \\
        \beta_1 
            & = \sum_{j=0}^{k+1}(-1)^{j}b^{j(n+1-x)}b^{\sigma_{j}} \bbinom{x}{j}\bbinom{n+1-x}{k+1-j}\gamma'(n+1-j,k+1-j)
        \\
            \overset{\eqref{equation:beta1stepdown}\eqref{equation:gammastepdown}}&{=} \sum_{j=0}^{k}(-1)^{j}b^{j(n+1-x)}b^{\sigma_j}\bbinom{x}{j}\frac{b^{n+1-x}-1}{b^{k+1-j}-1}\bbinom{n-x}{k-j}b^{k-j}\left(-b^{n+1-j}-1\right)\gamma'(n-j,k-j)
            \\
            & \hspace{1cm} + \beta_1\vert_{j=k+1}.
    \end{align*}
    Now let $C=\alpha_1-\alpha_2-\beta_1-\beta_2$. Then
    \begin{align*}
        \alpha_1-\alpha_2-\beta_1-\beta_2
            & = \sum_{j=0}^{k}(-1)^{j}b^{j(n-x)}\beta^{\sigma_j}\bbinom{x}{j}\bbinom{n-x}{k-j}\gamma'(n-j,k-j)
            \\
            & \hspace{1cm} \times \Bigg(b^{k+1}\left(-b^{n-j}-b^{k-j}\right)\frac{\left(b^{n-x-(k-j)}-1\right)}{b^{k+1-j}-1}-b^{k}
            \\
            & \hspace{1cm} -b^{j}b^{k-j}\left(-b^{n+1-j}-1\right)\frac{b^{n+1-x}-1}{b^{k+1-j}-1}-b^{2n+1-x}\Bigg)
            \\
            & \hspace{1cm} +\alpha_1\vert_{j=k+1}-\beta_1\vert_{j=k+1}.
    \end{align*}
    But 
    \begin{align*}
    b^{k+1}\left(-b^{n-j}-b^{k-j}\right)
        & \frac{\left(b^{n-x-(k-j)}-1\right)}{b^{k+1-j}-1}-b^{k} -b^{j}b^{k-j}\left(-b^{n+1-j}-1\right)\frac{\left(b^{n+1-x}-1\right)}{b^{k+1-j}-1}-b^{2n+1-x}
        \\
        & = \frac{b^k}{b^{k+1-k}-1}\Bigg(b^{k+1}\left(-b^{n-j+1}-b^{k-j+1}\right)\left(b^{n-x-k+j}-1\right)
        \\
        & \hspace{1cm} \times \left(-b^{n+1-j}-1\right)\left(b^{n+1-x}-1\right)\Bigg) - b^k -b^{2n+1-x-k}
        \\
        & = \frac{b^k}{b^{k+1-k}-1}\Big(-b^{2n+1-k-x}-b^{n-x+1}+b^{n-j+1}
        \\
        & \hspace{1cm} +b^{k-j+1}+b^{2n+2-x-j}-b^{n+1-j}+b^{n+1-x}-b^{k+1-j}+1
        \\
        & \hspace{1cm} -1-b^{2n+1-x+1-j}+b^{2n+1-x-k}\Big)
        \\
        & = 0.
    \end{align*}
    Therefore $C=0$, and $\alpha_1-\alpha_2=\beta_1+\beta_2$. Therefore the recurrence relations have the same solutions.
\end{proof}     
\newpage
\printbibliography[heading=bibintoc,title={Bibliography}]
\end{document}